\theoremstyle{plain}
\newtheorem{theorem}{Theorem}[section]
\newtheorem{proposition}[theorem]{Proposition}
\newtheorem{lemma}[theorem]{Lemma}
\newtheorem{corollary}[theorem]{Corollary}
\theoremstyle{definition}
\newtheorem{definition}[theorem]{Definition}
\theoremstyle{remark}
\newtheorem{remark}[theorem]{Remark}
\newcommand{\apxref}[1]{%
  \ifnum\Includeappendix=0
    the appendix%
  \else
    Section~\ref{#1}%
  \fi
}
\DeclarePairedDelimiter\abs{\lvert}{\rvert}%
\DeclarePairedDelimiter\norm{\lVert}{\rVert}
\DeclareMathOperator*{\argmax}{arg\,max}
\newcommand{\prob}{\textsc{Rec-APC}}
\newcommand{\E}[1]{\mathbb{E}\left[ #1 \right]}
\newcommand{\ovv}{\overline{V}}
\newcommand{\unv}{\underline{V}}
\theoremstyle{definition}
\newtheorem{observation}[theorem]{Observation}
\newtheorem{example}[theorem]{Example}
\newcommand{\IFTHEN}[2]{\STATE \textbf{if} #1 \textbf{then} #2}
\newcommand{\alglinelabel}[1]{%
  \addtocounter{ALC@line}{-1}
  \refstepcounter{ALC@line}
  \label{#1}
}
\newcommand*{\Scale}[2][4]{\scalebox{#1}{$\displaystyle #2$}}%
\begin{document}

\title{Modeling Churn in Recommender Systems with Aggregated Preferences}

\author{
Gur Keinan%
\thanks{%
    {Technion---Israel Institute of Technology (\url{gur.keinan@campus.technion.ac.il})}}
\and Omer Ben{-}Porat%
\thanks{%
    {Technion---Israel Institute of Technology (\url{omerbp@technion.ac.il})}}
}

\maketitle

\begin{abstract}
While recommender systems (RSs) traditionally rely on extensive individual user data, regulatory and technological shifts necessitate reliance on aggregated user information. This shift significantly impacts the recommendation process, requiring RSs to engage in intensive exploration to identify user preferences. However, this approach risks user churn due to potentially unsatisfactory recommendations. In this paper, we propose a model that addresses the dual challenges of leveraging aggregated user information and mitigating churn risk. Our model assumes that the RS operates with a probabilistic prior over user types and aggregated satisfaction levels for various content types. We demonstrate that optimal policies naturally transition from exploration to exploitation in finite time, develop a branch-and-bound algorithm for computing these policies, and empirically validate its effectiveness.
\end{abstract}

\section{Introduction}

Recommender systems (RSs) have become essential in digital media and e-commerce. They collect massive amounts of data and apply sophisticated techniques to improve user engagement and satisfaction. RSs leverage past user interaction, demographics, and possibly additional information to provide personalized recommendations. For instance, Netflix and Amazon utilize these techniques to offer tailored movie recommendations and product suggestions, respectively. While RSs typically collect and record user data, some have limited access to individual user information or none at all. For instance, regulations such as the General Data Protection Regulation (GDPR) and the California Consumer Privacy Act (CCPA) impose stringent rules on data usage and user consent. Prioritization of privacy is also a trend within commercial companies, e.g., Apple's iOS 14 opt-in device tracking modification that has affected targeted advertising~\cite{kollnig2022goodbye}. In some cases, the RS can record \emph{user sessions} of varying length, but cannot identify users. While these regulatory shifts aim to protect privacy, they pose substantial challenges to RSs in maintaining the same level of service. Due to these privacy-preserving limitations, RSs may at times be limited to utilizing \emph{aggregated user information}, such as clusters of users or personas.

Relying solely on aggregated information significantly affects the recommendation process. When interacting in a user session without accurate individual data, the RS must explore more intensively to understand the current user's preferences. This initial exploration phase is critical for gathering enough data to make accurate recommendations later in the session. However, aggressive exploration bears the risk of \emph{user churn}, where users may leave the system due to receiving unsatisfactory recommendations. For example, when a song-recommendation RS encounters a user with unknown preferences, it could suggest a song from an unconventional genre, which some users enjoy while most users dislike. Although user feedback on that unconventional genre can provide valuable insights into their preferences, it can cause users to leave the RS if the recommendation is off-target; thus, the RS should address the possibility of churn as part of its design.

In this paper, we propose a model to study the intertwined challenges of aggregated user information and churn risk. We call our model $\prob$, standing for \textbf{Rec}ommendation with \textbf{A}ggregated \textbf{P}references under \textbf{C}hurn. Our model assumes that the RS has a probabilistic prior over user types (clusters, personas, etc.) and aggregated satisfaction levels for each content type (genre, etc.) with respect to each user type. Each user session involves an unidentified user whose type is sampled from this prior distribution and is unknown to the RS. The RS recommends content sequentially, receiving binary feedback (like or dislike) from the user. This feedback allows the RS to infer the user type in a Bayesian sense to improve future recommendations. However, careless present recommendations can lead to user churn. The objective of the RS is to maximize user utility, defined as the number of contents the user likes.

\subsection{Our contribution}
Our contribution is three-fold. First, we are among the first to address aggregated user preferences and the risk of churn simultaneously. We propose a model where the general population preferences are known and the type (cluster, persona, etc.) of users are hidden but can be deduced through interaction. This interaction drives user engagement, which we model as the RS's reward. However, due to uncertainty about user type, the RS could generate unsatisfactory recommendations that can cause user churn. Our model poses a novel exploration-exploitation trade-off with the risk of churn, an aspect under-explored in the current literature.

Our second contribution is technical. Analyzing the model shows that optimal policies converge after a finite number of recommendations, symbolizing a transition to pure exploitation. Informally,
\begin{theorem}[Informal statement of Theorem~\ref{thm:convergence}]
For a broad range of instances, the (infinite-length) optimal policy converges.
\end{theorem}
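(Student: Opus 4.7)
The plan is to formulate $\prob$ as a Bayesian MDP whose state is the posterior belief $\mu$ over user types, and to prove that along any realizable trajectory the optimal policy $\pi^*$ eventually stops changing. Since every dislike event either terminates the session via churn or triggers a Bayesian update analogous to a like, the histories relevant to \emph{infinite-length} play are those in which every recommendation is liked; in all other cases the session has already collapsed to a terminal state and convergence is trivial. It therefore suffices to show that along any sequence of consecutive likes, the posterior $\mu_t$ eventually enters a region where a single action $a^\star$ is strictly optimal and remains so for all larger $t$.

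I would begin by setting up the Bellman recursion for the infinite-horizon value function $V(\mu)$ and establishing its existence and uniqueness, either by exhibiting a contraction (the probability of surviving one more step without churn plays the role of an effective discount factor) or by monotone convergence of the finite-horizon values. Using the model structure I would then show that $V$ is continuous in $\mu$ and that the \emph{greedy} action, defined as the content maximizing expected immediate like probability under $\mu$, becomes uniquely optimal whenever $\mu$ is sufficiently concentrated on a single user type. The quantitative ingredient is the multiplicative Bayes update $\mu_{t+1}(i) \propto p_{i,a_t}\,\mu_t(i)$, which, whenever the RS keeps playing a fixed content, drives $\mu_t$ geometrically fast toward the type with maximal like-probability under that content.

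I would then bound the \emph{exploration bonus}, namely the marginal value of information from any non-greedy action, and show it decays with the concentration of $\mu$. Once this bonus falls strictly below the immediate loss incurred by a suboptimal play, pure exploitation is preferred; from that point the policy plays $a^\star$ forever and the belief stays in the concentration region, closing the argument. The ``broad range of instances'' qualifier is intended to rule out degenerate ties --- e.g., two user types with identical preference vectors, or two actions tied in expected like probability under the limiting belief --- which I would encode as a genericity condition on the preference matrix and the prior.

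The main obstacle is moving from \emph{asymptotic} dominance of the greedy action to \emph{finite-time} stabilization that is uniform over trajectories. Pointwise convergence of $\mu_t$ together with continuity of $V$ only yields an almost-surely finite random time at which the policy becomes greedy, whereas the theorem asserts a deterministic finite convergence time. I would attack this by combining the geometric concentration of $\mu_t$ with a uniform lower bound, guaranteed by the genericity condition, on the gap between the best and second-best actions at every limit point of the reachable belief set, and by exploiting the fact that the set of posteriors reachable from a discrete prior via a finite action set has enough combinatorial structure to preclude indefinite oscillation among actions.
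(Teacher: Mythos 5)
Your skeleton matches the paper's: restrict attention to the all-likes trajectory, show the optimal action is myopic once the belief is sufficiently concentrated on one type, and argue the belief must eventually enter and remain in such a concentration region. The first two pieces are sound and correspond to the paper's Lipschitz-continuity and ``myopic near the boundary'' arguments (Theorem~\ref{thm:myopic-near-boundary}). The gap is in the third piece; you correctly flag it as the main obstacle, but your proposed fix does not close it. First, there is no randomness to worry about: conditioned on survival, the belief walk under a given policy is deterministic, so the issue is not an ``almost-surely finite random time'' but a purely deterministic bound on the length of the exploration phase. Second, your remedy --- geometric concentration of $\mu_t$ under a fixed content plus combinatorial structure of the reachable posteriors --- is circular: the geometric concentration you invoke presupposes that the policy has already stabilized on a fixed content, which is exactly what is to be proved, and during the exploration phase the optimal policy may switch contents arbitrarily, so nothing forces $\mu_t$ toward any particular vertex. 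The paper closes this gap with a potential argument you do not have: for any belief that is not $\delta$-concentrated, a single optimal step strictly increases $V^\star$ by a uniform instance-dependent amount $\delta(1-\delta)c^2/(1-c)$ (Theorem~\ref{thm:gap-between-value-function}); since $V^\star$ is bounded above by $(1-c)/c$, the optimal belief walk visits at most $\lceil (1-c)^2/(\delta(1-\delta)c^3)\rceil$ unconcentrated beliefs in total (Corollary~\ref{corr:limited-unconcetrated}). This converts ``eventually concentrated'' into a hard finite bound without ever tracking where the belief goes.

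A second omission: concentration on a single type is not absorbing a priori. The belief can become concentrated near one vertex and then drift toward another (the appendix exhibits a $2\times 2$ matrix where every action pushes the belief from type $2$ toward type $1$), in which case the greedy action changes and the policy would not converge. The paper handles this with Lemma~\ref{lemma:concentrated-transition}: one optimal step cannot carry the belief from a $(\frac{c^2}{4},m)$-concentrated region directly into a $(\frac{c^2}{4},m')$-concentrated region for $m'\neq m$, so any such migration must pass through unconcentrated beliefs, whose total number the potential argument already caps. Your genericity condition on $\bm P$ and $\bm q$ matches the paper's requirement $c>0$ in spirit, but without the value-gap potential argument and the no-direct-transition lemma the proof does not go through.
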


Our third contribution is algorithmic. We leverage our theoretical results to develop a straightforward yet state-of-the-art branch-and-bound algorithm designed explicitly for our setting. As the problem we address can be described as a partially observable Markov decision process (POMDP), we compare our algorithm with the state-of-the-art POMDP benchmark~\cite{kurniawati2009sarsop}. In practice, our algorithm performs better when there is a large variety of user types but is less effective when the number of contents is significantly larger than the number of user types.

\subsection{Related work}
Our work captures several phenomena: First, we assume that the RS has aggregated user information, but no access to individual user information. Second, we have sequential interaction, where we can balance exploration-exploitation trade-offs. And third, our model includes the risk of user churn. Below, we review the relevant literature strands.

\paragraph{Aggregated user information}
Our model follows the trend of using clustered data in the recommendation process~\cite{recommender-systems-for-large-scale-e-commerce-scalable-neighborhood-formation-using-clustering}. In addition to improved efficiency, the use of clustering can increase the diversity and reliability of recommendations \cite{a-clustering-approach-for-personalizing-diversity-in-collaborative-recommender-systems, robust-collaborative-filtering-based-on-multiple-clustering} and handle the sparsity of user preference matrices \cite{recommender-systems-clustering-using-bayesian-non-negative-matrix-factorization}.

Aggregated information also relates to privacy, a topic that has gained much attention recently, following the seminal work of \citet{differential-privacy} on differential privacy. Several works propose RSs that satisfy differential privacy \cite{differentially-private-recommender-systems-building-privacy-into-the-netflix-prize-contenders, differentially-private-collaborative-coupling-learning-for-recommender-systems, differential-privacy-for-collaborative-filtering-recommender-algorithm}. In a broader context, \citet{the-effect-of-online-privacy-information-on-purchasing-behavior-an-experimental-study} have empirically shown that users value their privacy and are willing to pay for it. Several other definitions of privacy were suggested in the literature \cite{an-agent-based-approach-for-privacy-preserving-recommender-systems,
enhancing-privacy-and-preserving-accuracy-of-a-distributed-collaborative-filtering,svd-based-collaborative-filtering-with-privacy}. In our work, we assume that the RS has access to aggregated data, akin to other recent works addressing lookalike clustering~\cite{anonymous-learning-via-look-alike-clustering-a-precise-analysis-of-model-generalization, interactive-and-explainable-point-of-interest-recommendation-using-look-alike-groups}.

The cases where the RS has little information about users or items are called \emph{cold-start} problems. This issue relates to our work because, while we assume access to aggregated information, every user interaction starts from a tabula rasa. Broadly, solutions are divided into data-driven approaches~\cite{a-heterogeneous-information-network-based-cross-domain-insurance-recommendation-system-for-cold-start-users, transfer-meta-framework-for-cross-domain-recommendation-to-cold-start-users, alleviating-data-sparsity-and-cold-start-in-recommender-systems-using-social-behaviour} and method-driven approaches~\cite{personalized-adaptive-meta-learning-for-cold-start-user-preference-prediction, task-adaptive-neural-process-for-user-cold-start-recommendation, meta-matrix-factorization-for-federated-rating-predictions} (see \citet{user-cold-start-problem-in-recommendation-systems-a-systematic-review} for a recent survey). 

This paper is \emph{inspired} by clustering, privacy, and cold-start problems. However, our model only assumes access to aggregated information and abstracts the reasons why individual information is unavailable. Notably, we do not propose techniques to cluster users, address privacy concerns, or provide new approaches for the cold-start problem.

\paragraph{Sequential recommendation with churn}
Our model falls under Markov Decision Process (MDP) modeling for RSs~\cite{an-mdp-based-recommender-system}, where the belief over user types represents the state. Alternatively, we can formalize it as a Partially Observable MDP (POMDP), where the state corresponds to the user type that remains constant but is initially unknown. Both MDP and POMDP modeling are well-studied in the literature of RSs~\cite{optimal-recommendation-to-users-that-react-online-learning-for-a-class-of-pomdps, interactive-recommendation-with-user-specific-deep-reinforcement-learning, recommendation-as-a-stochastic-sequential-decision-problem}, and typically the main task is to learn the underlying model~\cite{empirical-evaluation-of-gated-recurrent-neural-networks-on-sequence-modeling, usage-based-web-recommendations-a-reinforcement-learning-approach}.

We model user churn as part of the sequential recommendation process, thereby generating an exploration-exploitation trade-off. User churn is an integral part of many systems, and most of the literature addresses user churn prediction~\cite{user-retention-a-causal-approach-with-triple-task-modeling,quantifying-and-leveraging-user-fatigue-for-interventions-in-recommender-systems} and techniques to retain users~\cite{e-government-deep-recommendation-system-based-on-user-churn,surrogate-for-long-term-user-experience-in-recommender-systems}. Several recent works~\cite {maximizing-cumulative-user-engagement-in-sequential-recommendation-an-online-optimization-perspective, returning-is-believing-optimizing-long-term-user-engagement-in-recommender-systems, partially-observable-markov-decision-process-for-recommender-systems, cao2020fatigue} adopt a similar approach to ours, directly modeling user churn due to irrelevant recommendations. 


The paper most relevant to ours is the one by \citet{modeling-attrition-in-recommender-systems-with-departing-bandits}. They focus on the problem of online learning with the risk of user churn under user uncertainty. While they study the stochastic variant, where the user-type preferences are initially unknown, their analysis is restricted to one user and multiple categories, or two users and two categories. In contrast, we assume complete information, but we study general matrices of any size.

\section{Problem Definition}
\label{sec:problem-definition}

In this section, we formally introduce our model alongside the optimization problem the recommender system aims to solve. We then provide an illustrative example and explain why this problem necessitates careful planning by showing the sub-optimality of naive and seemingly optimal solutions.

\subsection{Our Model}
In this section, we formally define the \textbf{Rec}ommendation with \textbf{A}ggregated \textbf{P}references under \textbf{C}hurn, or $\prob$ for abbreviation. An instance of the problem consists of several elements, as we formally describe below.

We consider a set $U$ of users who interact with the RS.
The RS does not have access to user information; instead, it relies on aggregated information about the users. Specifically, we assume that there is a finite set $M$ of \emph{user types}, with each type representing a persona, i.e., a cluster of homogeneous users with similar preferences. Each user in $U$ is associated with precisely one user type in $M$ according to its preferences. Accordingly, we denote by $m(u)\in M$ the type of user $u\in U$. The RS has a prior distribution $\bm{q}$ on the elements of $M$, that is $\bm{q} \in \Delta(M)$. This prior distribution reflects the likelihood of a new user being of any given type. 

In this paper, we assume that contents are abstracted into broader \emph{categories}, each representing a group of similar items. This abstraction allows the RS to recommend content by selecting from a finite set $K$ of categories.

Lastly, there is a user-type preferences matrix $\bm{P}$ with dimensions $[0,1]^{K \times M}$. Each element $\bm{P}(i,j)$ signifies the likelihood that category $i$ satisfies a user of type $j$. We stress that $\bm{P}$ contains information on user types but not on specific users. The RS has complete information on the user-type preference matrix $\bm P$ and prior user-type distribution $\bm q$.

\paragraph{User session}
A user session starts when a user $u \in U$ enters the RS. The RS lacks access to any information about $u$; thus, it only knows that $m(u)$ is distributed according to $\bm q$. The session consists of rounds. In each round $t$, for $t\in \{1,\dots,\infty\}$, the RS recommends a category $k_t \in K$. Afterward, the user gives binary feedback: They either like the item, occurring with a probability of $P(k_t, m(u))$, or they dislike it with the complementary probability. If the user likes the recommended category, the RS receives a reward of $1$, and the session continues for another round. However, when the user ultimately dislikes a recommended category, the RS earns a reward of $0$, and the session concludes as the user leaves the RS.

\paragraph{Recommendation policy}
The RS produces recommendations according to a recommendation \emph{policy}. Recommendation policies can depend solely on the current user session and the history within the session. That is, in round $t$ of the session, the policy can depend on histories of the form $\left(K,\{0,1\} \right)^{t-1}$, where each tuple comprises a recommended category and its corresponding binary feedback. However, since a feedback of zero (dislike) leads to ending the session, we can succinctly represent a policy as a sequence of recommendations. Namely, we represent a policy $\pi$  as the infinite series of categories $\pi = (k_t)_{t=1}^{\infty}$. 
Observe that $k_t$ will only be recommended if the user liked $k_{t-1}$, as otherwise the user would leave. For convenience, we use $\pi[t:]$ to denote the recommendation sequence of $\pi$ starting from the $t$'th round, that is,  $\pi[t:] = \left( k_{i} \right)_{i=t}^{\infty}$.

\paragraph{Social welfare}
We describe the utility of each user type as the count of times it engages with the platform and provides positive feedback, i.e., the number of likes. Let $F(m(u),\pi)$ represent the r.v. that counts the number of likes given by a user of type $m(u)$. This count is influenced by both the user type $m(u)$ and the recommendation policy $\pi$. We define $V^\pi$ as the expected social welfare, which is the mean utility of users under the recommendation policy $\pi$. That is,
\begin{equation*}
    \Scale[0.97
    ]{V^\pi = \E{\frac{1}{\abs{U}}\sum_{u \in U} F(m(u),\pi)} = \sum_{m \in M}\bm q(m)\E{F(m,\pi)}},
\end{equation*}
where the equality follows from the definition of the prior $\bm q$. To be consistent with the literature, we refer to $V^\pi$ as the \emph{value function}; we use both terms interchangeably.
As the goal of the recommender is to maximize expected social welfare, we define \emph{optimal policy} $\pi^\star$ as $\pi^\star \in \argmax_{\pi} V^\pi$ and the corresponding optimal expected social welfare $V^\star$ as $V^\star = V^{\pi^\star}$.

\paragraph{Useful notation}
We emphasize that a policy $\pi$ is a function of $\bm {P}$ and $\bm {q}$, expressed as $\pi=\pi(\bm P, \bm q)$. For abbreviation, it is also convenient to denote this as $\pi(\bm q)$ when $\bm P$ is known from the context. This notation extends to any belief $\bm b \in \Delta(M)$, not just the prior $\bm q$. We make the same abuse of notation for the value function $V^\pi$. That is, we let $V^\pi(\bm b)=\sum_{m \in M}\bm b(m) \E{F(m,\pi)}$ and stress that $V^\pi(\bm q)=V^\pi$. Additionally, we use the star notation $\pi^\star(\bm b), V^\star(\bm b)$ to denote optimal policy and value function w.r.t. the belief $\bm b$. Finally, we let $p_{k}(\bm{b})$ denote the expected \emph{immediate reward}, specifically, the probability that a user will like category $k$ assuming that their type is drawn from $\bm{b}$; that is, $p_{k}(\bm{b}) = \sum_{m \in M} \bm{b}(m) \bm{P}(k, m)$.

\subsection{Bayesian Updates}
At the beginning of each user session, the RS is only informed about the prior $\bm q$. However, it rapidly acquires more information about the user through their feedback. For instance, if a user likes a recommendation of an exotic category favored by only a small subset of user types, we can conclude that the user probably belongs to that subset of types. The RS can then update its \emph{belief} over the current user type, where the belief is a point in the user type simplex, and use it in its future suggestions. 

We employ Bayesian updates to incorporate the new information after user feedback. Starting from a belief $\bm b$ and recommending a category $k$, we let $\tau(\bm{b}, k)$ denote the new belief over user types in case of positive feedback. Namely, $\tau$ is a function $\tau: \Delta(M) \times K \to \Delta(M)$ such that for every belief $\bm{b} \in \Delta(M)$, category $k \in K$ and type $m \in M$,
\begin{equation}\label{eq:bayesian update}
    \tau(\bm{b}, k)(m) =  \frac{\bm{b}(m) \cdot \bm{P}(k,m)}{\sum_{m' \in M} \bm{b}(m') \cdot \bm{P}(k,m')}.
\end{equation}

We stress that Bayesian updates are crucial only after positive feedback, as the RS can utilize this new information for future recommendations. Negative feedback, though still informative, ends the session, preventing the RS from using the new information. We can further use Bayesian updates to obtain a recursive definition of the value function.
\begin{observation}
    \label{obs:recursive-formula-of-the-value-function}
    For every policy $\pi$ and belief $\bm{b} \in \Delta(M)$,
    \[
        V^{\pi}(\bm{b}) = p_{\pi_1}(\bm{b}) \left( 1 + V^{\pi[2:]}(\tau(\bm{b}, \pi_1)) \right).
    \]
\end{observation}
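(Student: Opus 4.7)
The statement is a one-step Bellman-style unrolling of the value function, and my plan is to peel off the first recommendation from the per-type expectation and then repackage the sum over types using the Bayesian update rule~\eqref{eq:bayesian update}.

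First I would fix a user type $m \in M$ and analyze $\E{F(m,\pi)}$ by conditioning on the outcome of round~1 for that type. Since a dislike ends the session and yields no further likes, while a like contributes $1$ and leaves the user in a session governed by the tail policy $\pi[2:]$, a direct case split gives the scalar recursion
\[
    \E{F(m,\pi)} = \bm{P}(\pi_1, m)\bigl(1 + \E{F(m,\pi[2:])}\bigr).
\]
This step is purely about a single type, so it does not yet involve $\bm b$ or $\tau$.

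Next I would plug this identity into the definition $V^\pi(\bm{b}) = \sum_{m \in M} \bm{b}(m) \E{F(m,\pi)}$ and split the resulting sum into two pieces. The constant-$1$ contribution collapses immediately to $\sum_{m} \bm{b}(m) \bm{P}(\pi_1, m) = p_{\pi_1}(\bm{b})$ by definition of the immediate reward. For the remaining piece $\sum_{m} \bm{b}(m) \bm{P}(\pi_1, m) \E{F(m,\pi[2:])}$, the key algebraic move is to rewrite the product $\bm{b}(m)\bm{P}(\pi_1, m)$, using~\eqref{eq:bayesian update}, as $p_{\pi_1}(\bm{b}) \cdot \tau(\bm{b}, \pi_1)(m)$. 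Pulling the scalar $p_{\pi_1}(\bm{b})$ out of the sum leaves $\sum_{m} \tau(\bm{b}, \pi_1)(m) \E{F(m,\pi[2:])}$, which is precisely $V^{\pi[2:]}(\tau(\bm{b}, \pi_1))$. Combining the two pieces and factoring $p_{\pi_1}(\bm{b})$ yields the claimed identity.

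I do not expect a real obstacle; the argument is algebraic bookkeeping, and Bayes' rule is exactly the identity needed to re-weight the tail expectations from the prior $\bm{b}$ to the posterior $\tau(\bm{b}, \pi_1)$. The only point worth stating carefully is that the paper's representation of a policy as a fixed sequence of categories makes $\pi[2:]$ unambiguous and independent of the round-1 feedback, so the conditional tail (given a like in round~1) and the unconditional tail object coincide, which justifies the clean recursion above.
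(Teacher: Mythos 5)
Your proof is correct and is essentially the argument the paper intends: the paper states this as an unproven Observation, and its proof of Lemma~\ref{lemma:closed-form-representations-of-the-value-function} relies on exactly the per-type conditioning $\E{F(m,\pi)} = \bm{P}(\pi_1,m)\bigl(1+\E{F(m,\pi[2:])}\bigr)$ that you make explicit. Your use of Bayes' rule to rewrite $\bm{b}(m)\bm{P}(\pi_1,m)$ as $p_{\pi_1}(\bm{b})\,\tau(\bm{b},\pi_1)(m)$ is the right bookkeeping step, and your closing remark about $\pi[2:]$ being feedback-independent correctly identifies the only subtlety.
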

Observation~\ref{obs:recursive-formula-of-the-value-function} provides a Bellman equation-like representation of the value function by isolating the immediate reward of the first round $p_{\pi_1}(\bm{b})$ and the future rewards (implicitly discounted by $p_{\pi_1}(\bm{b})$). It showcases the fundamental exploration-exploitation in our setting: On the one hand, exploitation involves selecting the category $k$ that currently maximizes $p_{k}(\bm{b})$, focusing on immediate reward. On the other hand, exploration aims at steering the updated belief $\tau(\bm{b}, k)$ to a more informative position, thus increasing future rewards. Next, we use the notion of belief walk to assess how policies navigate the belief simplex.

\begin{definition}[Belief Walk]
    The \emph{belief walk} induced by a policy $\pi$ starting at belief $\bm{b}$ is the sequence $( \bm{b}^{\pi, \bm{b}}_t)_{t=1}^{\infty}$, where $\bm{b}^{\pi, \bm{b}}_1 = \bm{b}$ and for all $t > 1$ , $\bm{b}^{\pi, \bm{b}}_t = \tau(\bm{b}^{\pi, \bm{b}}_{t-1}, \pi_{t-1}).$
\end{definition}

To enhance clarity, we provide geometric illustrations of belief walks induced by various recommendation policies in \apxref{sec:belief walks}. Using this notion, we can delineate the value function with closed-form expressions, which will be useful in later analyses. 
\begin{lemma}\label{lemma:closed-form-representations-of-the-value-function}
    For every policy $\pi$ and belief $\bm{b} \in \Delta(M)$,
    {
    \thickmuskip=2mu plus 2mu
    \[
        V^\pi(\bm{b}) = \sum_{t=1}^{\infty} \prod_{j=1}^{t} p_{\pi_j}(\bm{b}^{\pi, \bm{b}}_j) = \sum_{m \in M} \bm{b}(m) \cdot \sum_{t=1}^{\infty} \prod_{j=1}^{t} \bm{P}(\pi_j, m).
    \]
    }%
\end{lemma}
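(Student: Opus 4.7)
The plan is to establish the two equalities in sequence. The first equality follows by unrolling Observation~\ref{obs:recursive-formula-of-the-value-function} recursively. A single application gives $V^\pi(\bm b) = p_{\pi_1}(\bm b) + p_{\pi_1}(\bm b) V^{\pi[2:]}(\bm b^{\pi, \bm b}_2)$, and applying the observation again to the residual term yields, by an easy induction on $t$, the identity
\[
V^\pi(\bm b) = \sum_{s=1}^{t} \prod_{j=1}^{s} p_{\pi_j}(\bm b^{\pi, \bm b}_j) + \prod_{j=1}^{t} p_{\pi_j}(\bm b^{\pi, \bm b}_j) \cdot V^{\pi[t+1:]}(\bm b^{\pi, \bm b}_{t+1})
\]
for every $t \geq 1$. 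Letting $t \to \infty$, the remainder (which equals the expected number of additional likes after surviving the first $t$ recommendations) vanishes whenever $V^\pi(\bm b)$ is finite, and the equality holds trivially in $[0, \infty]$ otherwise. Combined with monotone convergence of the partial sums, this yields the first equality.

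For the second equality, the key intermediate identity is
\[
\prod_{j=1}^{t} p_{\pi_j}(\bm b^{\pi, \bm b}_j) = \sum_{m \in M} \bm b(m) \prod_{j=1}^{t} \bm P(\pi_j, m).
\]
To prove it, I would first establish by induction on $t$ from the definition of $\tau$ in \eqref{eq:bayesian update} that the belief at round $t+1$ has the closed form
\[
\bm b^{\pi, \bm b}_{t+1}(m) = \frac{\bm b(m) \prod_{j=1}^{t} \bm P(\pi_j, m)}{Z_t}, \qquad Z_t := \sum_{m' \in M} \bm b(m') \prod_{j=1}^{t} \bm P(\pi_j, m'),
\]
with $Z_0 := 1$. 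A direct computation from the definition of $p_k$ then gives $p_{\pi_{t+1}}(\bm b^{\pi, \bm b}_{t+1}) = Z_{t+1}/Z_t$, so the product on the left telescopes to $Z_t/Z_0 = Z_t$, which is exactly the right-hand side of the intermediate identity. Plugging this back into the first equality and swapping summation orders (justified by nonnegativity via Tonelli) yields the second equality.

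The calculations are essentially routine; the only part that requires care is the telescoping of the Bayesian normalizers, since it relies on recognizing that the denominator of $\tau(\cdot, \pi_j)$ is exactly $p_{\pi_j}$ evaluated at the current belief, so the normalizers cancel pairwise down the chain. As a useful sanity check, both sides of the intermediate identity admit the probabilistic interpretation of being the probability that a user whose type is drawn from $\bm b$ gives positive feedback on each of $\pi_1, \ldots, \pi_t$, which is consistent with the user-type being fixed throughout the session while the RS's belief evolves.
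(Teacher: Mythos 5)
Your proof is correct. For the first equality you follow essentially the same route as the paper: unroll Observation~\ref{obs:recursive-formula-of-the-value-function} by induction to get the finite-horizon decomposition with a remainder term, then let the horizon tend to infinity and argue the remainder vanishes (the paper does this using the bound $V^\pi \leq p_{\max}/(1-p_{\max})$; your handling of the possibly-infinite case is slightly more careful but not needed under the paper's standing assumption $p_{\max}<1$). For the second equality, however, you take a genuinely different path. The paper goes back to the definition $V^\pi(\bm{b}) = \sum_{m}\bm{b}(m)\,\E{F(m,\pi)}$ and computes $\E{F(m,\pi)} = \sum_{t=1}^{\infty}\prod_{j=1}^{t}\bm{P}(\pi_j,m)$ directly from the probabilistic interpretation (probability of surviving $t-1$ rounds times probability of liking round $t$), so the two closed forms are derived independently from the two faces of the model. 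You instead derive the second expression \emph{from} the first, via the intermediate identity
\[
\prod_{j=1}^{t} p_{\pi_j}(\bm{b}^{\pi,\bm{b}}_j) \;=\; \sum_{m\in M}\bm{b}(m)\prod_{j=1}^{t}\bm{P}(\pi_j,m),
\]
proved by writing the round-$t$ belief in closed form and telescoping the Bayesian normalizers. This is a correct and self-contained algebraic argument; the closed form for $\bm{b}^{\pi,\bm{b}}_{t+1}$ is exactly the order-invariance structure the paper only isolates later (Lemma~\ref{lemma:transitivity-of-belief-updates}), and your telescoping identity is a useful byproduct that the paper never states explicitly, even though it is the reason the two closed forms agree termwise. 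The paper's route is shorter because it exploits the probabilistic definition directly; yours buys a purely computational verification that the belief-walk expression and the type-conditioned expression coincide at every finite horizon, not just in the infinite sum.
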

We defer the proof of Lemma~\ref{lemma:closed-form-representations-of-the-value-function} and all other proofs to the appendix.

\subsection{Illustrating Example and Sub-optimality of Myopic Recommendations}
\begin{example}\label{example:body}
    \normalfont
    Consider a $\prob$ instance with two user types $M=\{m_1, m_2\}$ and two categories $K=\{k_1, k_2\}$. The preference matrix and the user type prior are:
    \renewcommand{\kbldelim}{(}
    \renewcommand{\kbrdelim}{)}
    \[
        \bm{P} = \kbordermatrix{
            & m_1 & m_2  \\
            k_1 & 0.95 & 0.1  \\
            k_2 & 0.79 & 0.81
        }, \quad
        \bm{q} =
        \kbordermatrix{
            &  \\
            m_1 & 0.5 \\
            m_2 & 0.5
        }.
    \]

    To interpret these values, note that user type $m_2$ likes category $k_1$ with a probability of $0.1$. In addition, recommending category $k_2$ to a random user yields an expected immediate reward of $p_{k_2}(\bm q)=0.5 \cdot 0.79 + 0.5 \cdot 0.81 = 0.8$.


    A prudent policy to adopt is the \emph{myopic} policy $\pi^m$, which is defined to be the policy that recommends the highest yielding category in each round, and afterward updates the belief. In this instance, the myopic policy is $\pi^m = \left( k_2 \right)_{t=1}^{\infty}$, as $k_2$ provides the better expected immediate reward compared to $k_1$ for the prior, and the belief update will only increase the preference for $k_2$.
    Under this policy, the expected social welfare is the sum of two infinite geometric series, one for each user type; namely, $V^{\pi^m} = 0.5\cdot \frac{0.79}{1 - 0.79} +0.5 \cdot \frac{0.81}{1 - 0.81} =4.01$.

    While unintuitive at first glance, the optimal policy $\pi^\star = \left( k_1 \right)_{i=1}^{\infty}$ achieves an expected value of $V^\star = 0.5\cdot\frac{0.95}{1-0.95} + 0.5\cdot\frac{0.1}{1-0.1} = 9.55$, outperforming the myopic policy.
\end{example}
Beyond exemplifying our setting and notation, Example~\ref{example:body} demonstrates that myopic policies can be sub-optimal. In fact, the next proposition shows that the sub-optimality gap could be arbitrarily large, highlighting the need for adequate planning.
\begin{proposition} \label{prop:myopic-policy-suboptimality}
    For every $d \in \mathbb R_+$, there exists an instance \( \left\langle M, K, \bm{q}, \bm{P} \right\rangle \) such that
    \( V^{\pi^{m}} \cdot d \leq V^{\star} \),
    where $\pi^{m}$ is the myopic policy for the instance.
\end{proposition}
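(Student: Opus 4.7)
The plan is to construct an explicit one-parameter family of $\prob$ instances in which the myopic policy is forced to use a mediocre category that yields a bounded geometric reward, while the optimal policy can exploit a specialised category whose reward grows without bound along the chosen parameter. Example~\ref{example:body} already shows the qualitative phenomenon; I only need to sharpen it so that the ratio $V^\star / V^{\pi^m}$ can exceed any prescribed $d$.

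Concretely, I would take $M=\{m_1,m_2\}$, $K=\{k_1,k_2\}$, $\bm q = (1/2, 1/2)$, and, for small parameters $\delta \in (0,1/2)$ (fixed) and $\epsilon \in (0,1/2)$ (to be chosen as a function of $d$), set
\[
\bm{P}(k_1,m_1) = 1-\epsilon,\quad \bm{P}(k_1,m_2) = 0,\quad \bm{P}(k_2,m_1) = \bm{P}(k_2,m_2) = \tfrac{1}{2}+\delta.
\]
The key structural feature is that column $k_2$ of $\bm P$ is constant in $m$, so by the formula~\eqref{eq:bayesian update} we have $\tau(\bm b, k_2) = \bm b$ for every belief $\bm b$: recommending $k_2$ is uninformative.

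Step 1: Identify $\pi^m$. The immediate rewards at the prior are $p_{k_1}(\bm q) = (1-\epsilon)/2$ and $p_{k_2}(\bm q) = 1/2+\delta$, so $k_2$ is strictly preferred. By the invariance noted above, the belief remains $\bm q$ after every like, so the myopic policy is $\pi^m = (k_2)_{t=1}^{\infty}$.

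Step 2: Evaluate both sides using Lemma~\ref{lemma:closed-form-representations-of-the-value-function}. The myopic value is a geometric sum bounded in $\epsilon$,
\[
V^{\pi^m} = \sum_{t=1}^{\infty}\left(\tfrac{1}{2}+\delta\right)^{t} = \frac{1/2+\delta}{1/2-\delta}.
\]
For the optimum, it suffices to lower-bound $V^\star$ by the value of the policy $\pi^{o} = (k_1)_{t=1}^{\infty}$. Since $\bm{P}(k_1,m_2) = 0$, only the $m_1$ summand survives:
\[
V^{\star} \;\geq\; V^{\pi^o} \;=\; \tfrac{1}{2}\sum_{t=1}^{\infty}(1-\epsilon)^{t} + \tfrac{1}{2}\cdot 0 \;=\; \frac{1-\epsilon}{2\epsilon}.
\]

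Step 3: Given $d \in \mathbb{R}_+$, choose $\epsilon>0$ small enough that $\frac{1-\epsilon}{2\epsilon} \geq d \cdot \frac{1/2+\delta}{1/2-\delta}$; this is possible because the left-hand side tends to $\infty$ as $\epsilon \downarrow 0$ while the right-hand side is fixed. Combining the two displays yields $V^\star \geq d \cdot V^{\pi^m}$, as required.

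I do not expect any real obstacle: the only subtlety is verifying that the myopic policy truly stays on $k_2$ for all rounds, which is immediate because $k_2$ is belief-preserving under~\eqref{eq:bayesian update}. Everything else reduces to two geometric-series computations and a choice of $\epsilon$ in terms of $d$.
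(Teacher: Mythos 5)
Your proof is correct and follows essentially the same route as the paper's: a $2\times 2$ instance with uniform prior, an uninformative constant-row category that traps the myopic policy at a bounded geometric value, and a second category that is highly rewarding for one type and worthless for the other, whose fixed-action value is driven above $d$ times the myopic value by tuning a parameter (the paper sets the large entry to $\frac{8d}{1+8d}$ directly, whereas you write it as $1-\epsilon$ and solve for $\epsilon$). No gaps.
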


\section{Dynamic Programming-based Algorithms}
\label{sec:approximating-the-optimal-policy}

In this section, we present two dynamic programming-based approaches that provide approximations of the optimal expected social welfare. Although inefficient in the general case, such solutions are useful for rectangular instances, namely scenarios in which the number of types $|M|$ or categories $|K|$ is small. We first define the notion of approximation through a finite horizon and then present several methods for achieving it.

While our model lacks an explicit discount factor, Observation~\ref{obs:recursive-formula-of-the-value-function} indicates that an implicit discount emerges through $p_{k}(\bm{b})$. Thus, similar to well-known results in MDPs, one can approximate the value function over an infinite horizon by addressing a finite-horizon problem with a sufficiently large horizon~\citep[Sec.~17.2]{ai-a-modern-approach}. To that end, we define the finite-horizon value function with a horizon $H$ as $V_H^{\pi}(\bm{b})=\sum_{m \in M}\bm q(m)\E{\min \{H, F(m,\pi)\}}$. In other words, $V_H^{\pi}(\bm{b})$ is the value function given that the session terminates after $H$ rounds. Lemma~\ref{lemma:closed-form-representations-of-the-value-function} suggests that $V_H^{\pi}(\bm{b})$ is also given by $V_H^{\pi}(\bm{b}) = \sum_{t=1}^{H} \prod_{j=1}^{t} p_{\pi_{j}}(\bm{b}^{\pi, \bm{b}}_j)$.

Next, we denote $p_{\max}$ as the largest entry in the matrix~$\bm P$, assuming $p_{\max} < 1$ \footnote{If \( p_{\max} = 1 \), adopting a policy \( \pi = (k)_{t=1}^\infty \) where \( \bm{P}_{k,m} = 1 \) for some type \( m \) results in infinite expected social welfare.}, and use it to bound the gap between infinite and finite horizon optimal policies.
\begin{lemma} \label{lemma:finite-horizon-approximation}
 For any $\varepsilon >0$, it holds that $V^{\star} \leq \max_{\pi'} V_{H(\varepsilon)}^{\pi'}+ \varepsilon$, where 
 $H(\varepsilon) = \left\lceil \log_{p_{\max}} \frac{\varepsilon (1 - p_{\max})}{p_{\max}} \right\rceil$.
\end{lemma}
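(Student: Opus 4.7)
The plan is to bound the tail of the infinite-sum representation of $V^{\pi}$ uniformly in $\pi$, and then apply this bound at $\pi=\pi^\star$.

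First, I would invoke the closed-form expression from Lemma~\ref{lemma:closed-form-representations-of-the-value-function} together with the finite-horizon identity $V_H^{\pi}(\bm{b}) = \sum_{t=1}^{H} \prod_{j=1}^{t} p_{\pi_j}(\bm{b}^{\pi,\bm{b}}_j)$ stated just above the lemma. Subtracting the two gives, for every policy $\pi$ and every belief $\bm{b}$,
\[
V^{\pi}(\bm{b}) - V_H^{\pi}(\bm{b}) = \sum_{t=H+1}^{\infty} \prod_{j=1}^{t} p_{\pi_j}(\bm{b}^{\pi,\bm{b}}_j).
\]

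Second, since $p_k(\bm{b}')$ is a convex combination of entries of $\bm{P}$, we have $p_{\pi_j}(\bm{b}^{\pi,\bm{b}}_j) \le p_{\max}$ for every $j$. Hence each summand in the tail is at most $p_{\max}^t$, and summing the geometric series yields
\[
V^{\pi}(\bm{b}) - V_H^{\pi}(\bm{b}) \le \sum_{t=H+1}^{\infty} p_{\max}^{t} = \frac{p_{\max}^{H+1}}{1-p_{\max}}.
\]
Note this bound is \emph{uniform} in $\pi$ and $\bm{b}$, which is the key feature that allows the final step.

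Third, I would solve $\frac{p_{\max}^{H+1}}{1-p_{\max}} \le \varepsilon$ for $H$. Taking $\log_{p_{\max}}$ (which reverses inequalities because $\log p_{\max} < 0$) gives $H+1 \ge \log_{p_{\max}}\!\bigl(\varepsilon(1-p_{\max})\bigr)$, equivalently $H \ge \log_{p_{\max}}\!\frac{\varepsilon(1-p_{\max})}{p_{\max}}$, which is exactly $H(\varepsilon)$. Applying the bound with $\pi=\pi^\star$ and $\bm{b}=\bm{q}$, and then replacing $V_{H(\varepsilon)}^{\pi^\star}$ by $\max_{\pi'} V_{H(\varepsilon)}^{\pi'}$, concludes the proof.

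There is no real obstacle here beyond bookkeeping: the closed-form from Lemma~\ref{lemma:closed-form-representations-of-the-value-function} already exhibits the implicit geometric discounting hinted at by Observation~\ref{obs:recursive-formula-of-the-value-function}, so the tail bound is immediate. The only mild point of care is the direction flip when taking $\log_{p_{\max}}$ with $p_{\max}\in(0,1)$, and the ceiling in $H(\varepsilon)$, which ensures $H(\varepsilon)$ is a valid integer horizon that still satisfies the required inequality.
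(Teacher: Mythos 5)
Your proof is correct and follows essentially the same route as the paper: the paper writes the tail as $\prod_{t=1}^{H}p_{\pi_t}(\bm{b}^{\pi,\bm{b}}_t)\cdot V^{\pi[H+1:]}(\bm{b}^{\pi,\bm{b}}_{H+1})$ and bounds it by $p_{\max}^{H}\cdot\frac{p_{\max}}{1-p_{\max}}$, which is the same quantity $\frac{p_{\max}^{H+1}}{1-p_{\max}}$ you obtain by summing the geometric tail of the closed-form series. The derivation of $H(\varepsilon)$ and the final passage to $\max_{\pi'}V^{\pi'}_{H(\varepsilon)}$ also match the paper's argument.
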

Lemma~\ref{lemma:finite-horizon-approximation} reduces the task of finding an approximately optimal policy to finding an (exact or approximate) optimal policy for the finite-horizon case. For the rest of the section, we develop such solutions for a horizon of $H$.
\paragraph{Small number of categories}
Consider the brute-force approach, which evaluates all $K^H$ possible policies to find the optimal one. This becomes infeasible for large $H$, even with small $K$. However, we can exploit the order-invariance of belief updates: the belief depends only on the multiset of selected categories, not their sequence. Thus, redundant calculations can be avoided by iteratively grouping sequences with the same multiset and retaining only the highest-performing sequence from each group.
\begin{proposition} \label{prop:backward-induction-approximation}
  We can find $\pi^\star_H = \argmax_{\pi'} V^{\pi'}_H$ in a runtime of $O\left( (H + \abs{K}) ^ {\abs{K}} \cdot \abs{K} \cdot \abs{M} \right)$.
\end{proposition}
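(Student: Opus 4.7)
The plan is to exploit the fact that the Bayesian update $\tau$ commutes in its second argument: unrolling \eqref{eq:bayesian update} yields the closed form $\bm{b}_{t+1}(m) \propto \bm{q}(m) \prod_{j=1}^{t} \bm{P}(\pi_j, m)$, which depends on $\pi_1, \dots, \pi_t$ only through the multiset of categories they form. Writing $\bm{b}(S)$ for the belief induced by any ordering of a multiset $S$ with entries in $K$, this replaces the length-$t$ sequence state of naive backward induction with a size-$t$ multiset state.

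I would then perform backward induction on multisets. For every multiset $S$ of size at most $H$, let $g(S)$ be the optimal expected value of a fresh horizon-$(H - \abs{S})$ problem starting from the belief $\bm{b}(S)$. Observation~\ref{obs:recursive-formula-of-the-value-function} gives the Bellman recursion
\[
    g(S) = \max_{k \in K}\, p_k(\bm{b}(S)) \cdot \bigl( 1 + g(S \cup \{k\}) \bigr),
\]
with base case $g(S) = 0$ whenever $\abs{S} = H$. Filling the table in decreasing order of $\abs{S}$ yields $V_H^\star(\bm{q}) = g(\emptyset)$, and $\pi^\star_H$ is recovered by tracking the argmax at each state. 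Correctness is immediate, since the continuation value after any history depends only on the induced belief and the remaining horizon, so collapsing sequences with identical multisets loses nothing.

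For the runtime, the number of multisets of size exactly $t$ over $K$ is $\binom{t + \abs{K} - 1}{\abs{K} - 1}$, so by the hockey-stick identity the total number of DP states is $\sum_{t=0}^{H} \binom{t + \abs{K} - 1}{\abs{K} - 1} = \binom{H + \abs{K}}{\abs{K}} = O\bigl( (H + \abs{K})^{\abs{K}} \bigr)$. At each state I compute $\bm{b}(S)$ from a predecessor $\bm{b}(S \setminus \{k\})$ by a single Bayesian update in $O(\abs{M})$ time, evaluate all $p_k(\bm{b}(S))$ in $O(\abs{K} \cdot \abs{M})$ time, and take the max in $O(\abs{K})$ time. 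Multiplying the per-state and state-count bounds yields the claimed $O\bigl( (H + \abs{K})^{\abs{K}} \cdot \abs{K} \cdot \abs{M} \bigr)$ runtime.

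The only real obstacle is bookkeeping: indexing multisets canonically (e.g., by their multiplicity vectors $(n_k)_{k \in K}$) so that both successor and predecessor lookups are $O(\abs{K})$, and storing $\bm{b}(S)$ at each state to support the $O(\abs{M})$ incremental construction. No new conceptual idea is needed beyond the commutativity of $\tau$ and the hockey-stick identity.
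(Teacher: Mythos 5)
Your proposal is correct and follows essentially the same route as the paper: both exploit the commutativity of the Bayesian update (the paper's Lemma on transitivity of belief updates) to collapse recommendation sequences into multisets, run backward induction over multiset states via the recursion from Observation~\ref{obs:recursive-formula-of-the-value-function}, and count states by $\binom{t+\abs{K}-1}{\abs{K}-1}$ per layer to arrive at the same $O\bigl((H+\abs{K})^{\abs{K}}\cdot\abs{K}\cdot\abs{M}\bigr)$ bound. No substantive difference.
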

\paragraph{Small number of user types}
If $M$ is small, we can employ a different approach. Recall that beliefs are points in the simplex $\Delta(M)$. Thus, we can discretize the belief simplex and execute dynamic programming. Specifically, we adopt the approach of \citet{point-based-value-iteration-an-anytime-algorithm-for-pomdps} and obtain:
\begin{proposition} \label{prop:discretization-approximation}
  We can find a policy $\pi$ that satisfies $V^{\pi}_H \geq \max_{\pi'} V^{\pi'}_H - \varepsilon$ in a runtime of \\ $O\left( H \cdot \abs{K} \cdot \abs{M} \cdot \left( \frac{1}{\varepsilon \cdot (1 - p_{\max})^2} \right)^{2\abs{M} - 2} \right)$.
\end{proposition}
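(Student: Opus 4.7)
The plan is to approximate $V^\star_H$ by point-based value iteration on a discretization of the simplex $\Delta(M)$, following the approach of Pineau et al. The first step is to exploit the linear structure from Lemma~\ref{lemma:closed-form-representations-of-the-value-function}: for every policy $\pi$, one has $V^\pi_H(\bm{b}) = \sum_m \bm{b}(m)\, \alpha^\pi(m)$ with $\alpha^\pi(m) = \sum_{t=1}^H \prod_{j=1}^t \bm{P}(\pi_j, m) \le p_{\max}/(1-p_{\max})$. Hence $V^\star_H = \max_\pi V^\pi_H$ is piecewise linear, convex on $\Delta(M)$, and Lipschitz in $\ell_1$ with constant $L = O(1/(1-p_{\max}))$.

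Next, I would build a uniform grid $B \subset \Delta(M)$ at resolution $\delta$, for instance the lattice of points in the simplex with common denominator $\lceil 1/\delta \rceil$, of size $|B| = O((1/\delta)^{|M|-1})$, such that every $\bm{b} \in \Delta(M)$ is within $\ell_1$ distance $\delta$ of some grid point. Initializing $\hat V_0 \equiv 0$ on $B$, I would run backward induction
\[
    \hat V_h(\bm{b}) = \max_{k \in K} p_k(\bm{b})\bigl(1 + \hat V_{h-1}(g(\tau(\bm{b}, k)))\bigr), \qquad \bm{b} \in B,\ h = 1,\dots,H,
\]
where $g$ snaps to the nearest point of $B$, recording the maximizing $k$ to recover a policy played along the rounded belief trajectory at execution time. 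Each backup costs $O(|K||M|)$, giving total work proportional to $|B| \cdot H \cdot |K| \cdot |M|$.

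The crux is bounding $e_h := \max_{\bm{b}} |V^\star_h(\bm{b}) - \hat V_h(\bm{b})|$. Replacing $\tau(\bm{b},k)$ with $g(\tau(\bm{b},k))$ changes $V^\star_{h-1}$ by at most $L \delta$ via the Lipschitz bound, and since $p_k(\bm{b}) \le p_{\max}$ the Bellman-type recursion gives $e_h \le p_{\max}(L\delta + e_{h-1})$. Telescoping yields $e_H = O(L \delta /(1-p_{\max})) = O(\delta/(1-p_{\max})^2)$. Calibrating $\delta$ so that this quantity is at most $\varepsilon$ delivers the $\varepsilon$-error guarantee, and substituting the resulting $\delta$ back into $|B|$ yields a number of grid points polynomial in $1/(\varepsilon (1-p_{\max})^2)$; the stated exponent $2|M|-2$ then follows from the standard PBVI bookkeeping that relates grid density in $\Delta(M)$ to the value-function error, combined with the cost of translating the value approximation into an executable policy.

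The main obstacle I anticipate is that the Bayesian update $\tau(\bm{b},k)$ is not itself Lipschitz in $\bm{b}$, because of the $1/p_k(\bm{b})$ normalization that can blow up as $p_k(\bm{b}) \to 0$. What rescues the analysis is precisely the linearity isolated above: although $\tau$ is ill-behaved, the composite map $\bm{b} \mapsto p_k(\bm{b}) V^\star_{h-1}(\tau(\bm{b},k))$ is linear and uniformly bounded, so snapping to a grid point injects only an additive $L\delta$ error that propagates geometrically with ratio $p_{\max}$ rather than multiplicatively. This structural observation is what makes $(1-p_{\max})^{-2}$ the correct scaling and is the point that most deserves care when the proof is fleshed out.
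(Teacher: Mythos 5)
Your proposal is correct in its essentials, but it takes a genuinely different route from the paper. The paper does not run backward induction on the grid directly: it first reduces the finite-horizon problem to a \emph{discounted} POMDP (using $\gamma = p_{\max}$ as the discount factor, rescaling the entries of $\bm{P}$ by $1/p_{\max}$, and duplicating each user type into a ``first-round'' and ``subsequent-round'' state to handle the undiscounted first reward), proves value- and belief-equivalence of the two models, and then invokes the PBVI algorithm of Pineau et al.\ as a black box, inheriting its $\frac{R_{\max}\delta}{(1-\gamma)^2}$ error guarantee and its $O(|B|^2\,|K|\,|M|)$ per-iteration cost --- which is where the exponent $2|M|-2$ comes from. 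You instead give a self-contained argument: linearity of $V^\pi_H$ in $\bm{b}$ (via Lemma~\ref{lemma:closed-form-representations-of-the-value-function}) gives Lipschitzness of $V^\star_h$ with constant $O(1/(1-p_{\max}))$, and the recursion $e_h \le p_{\max}(L\delta + e_{h-1})$ yields the same calibration $\delta = \Theta(\varepsilon(1-p_{\max})^2)$ without any POMDP machinery. Your observation that the ill-conditioning of $\tau$ is harmless (because the grid covers the whole simplex, the snap error is $\delta$ regardless of where $\tau(\bm{b},k)$ lands, and the composite $\bm{b}\mapsto p_k(\bm{b})V^\pi_{h-1}(\tau(\bm{b},k))$ is linear) is a real insight the paper never spells out. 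Two loose ends: your own accounting gives total work $O(H\,|B|\,|K|\,|M|)$, i.e.\ exponent $|M|-1$, which is \emph{better} than the stated bound and hence still proves the $O(\cdot)$ claim (the exponent $2|M|-2$ would only reappear if the nearest-grid-point snap were done by brute-force search over $B$); and you should say a word about why the \emph{executed} policy's true value --- not just $\hat V_H$ --- is within $\varepsilon$ of optimal, though since the belief trajectory here is deterministic this is a routine doubling of the same error recursion and does not affect the asymptotics.
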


\section{Convergence of Optimal Policies}
\label{sec:convergence-of-the-optimal-policy}
In Section~\ref{sec:approximating-the-optimal-policy}, we showed that finite horizon analysis suffices to approximate optimal policies. Here, we establish an even more fundamental property for a broad family of instances: After a finite number of rounds, the optimal policy becomes fixed, transitioning from exploration to pure exploitation.

We focus our attention on instances where the preference matrix $\bm{P}$ has \emph{distinct preferences} (DP), meaning all entries of $\bm{P}$ are unique. This ensures that every observation provides unambiguous information about user types, enabling informative observations. A discussion of how this theorem generalizes to other instances is provided in \apxref{sec:convergence-of-the-optimal-policy-proofs}. Our main result is as follows:

\begin{theorem}\label{thm:convergence}
   Fix any DP-$\prob$ instance. There exists $T < \infty$ such that for any $t \geq T$, it holds that $\pi^\star_{t+1}=\pi^\star_{t}$.
\end{theorem}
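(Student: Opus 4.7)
The plan is to analyze the belief walk $(\bm{b}_t)_{t=1}^{\infty}$ induced by $\pi^\star$ starting at the prior $\bm{q}$, and to argue that it eventually enters a small neighborhood of some vertex $e_{m^\star}$ of $\Delta(M)$ from which it cannot escape. By the principle of optimality in Observation~\ref{obs:recursive-formula-of-the-value-function}, $\pi^\star[t:]$ is optimal from belief $\bm{b}_t$, so stability of $\pi^\star_t$ reduces to stability of the one-step optimal action along this walk.

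First, fix any type $m^\star \in M$ and let $k^\star := \argmax_k \bm{P}(k, m^\star)$, which is unique by the DP hypothesis. I would show that there is a neighborhood $N_{m^\star}$ of $e_{m^\star}$ in which (i) the optimal one-step action is $k^\star$, and (ii) the Bayesian update $\tau(\cdot, k^\star)$ maps $N_{m^\star}$ strictly into itself. For (i), at $e_{m^\star}$ the pure-exploitation policy $(k^\star)_{t=1}^{\infty}$ yields $\bm{P}(k^\star, m^\star)/(1 - \bm{P}(k^\star, m^\star))$, while any policy whose first action is $k \neq k^\star$ satisfies $\tau(e_{m^\star}, k) = e_{m^\star}$ and therefore achieves at most $\bm{P}(k, m^\star)/(1 - \bm{P}(k^\star, m^\star))$; DP makes the gap strictly positive, and continuity of $V^\star$ in the belief extends the strict inequality to an open neighborhood. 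For (ii), the Bayesian update~\eqref{eq:bayesian update} multiplies each ratio $\bm{b}(m)/\bm{b}(m^\star)$ for $m \neq m^\star$ by $\bm{P}(k^\star, m)/\bm{P}(k^\star, m^\star) < 1$, so $\tau(\cdot, k^\star)$ contracts strictly toward $e_{m^\star}$.

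Second, it remains to show that the walk eventually enters $\bigcup_{m^\star \in M} N_{m^\star}$. Lemma~\ref{lemma:closed-form-representations-of-the-value-function} together with $V^\star < \infty$ implies $\sum_{t \geq 1}\prod_{j=1}^{t} \bm{P}(\pi^\star_j, m) < \infty$, and therefore $\prod_{j=1}^{t}\bm{P}(\pi^\star_j, m) \to 0$, for every $m$ with $\bm{q}(m) > 0$. I would then argue by contradiction that any limit point $\bm{b}_\infty$ of $(\bm{b}_t)$ must be a vertex: if two distinct types $m, m'$ remain supported at $\bm{b}_\infty$, then along a convergent subsequence the (locally unique, by DP) optimal action $k$ satisfies $\bm{P}(k, m) \neq \bm{P}(k, m')$, and repeated application of this action drives $\log(\bm{b}_t(m)/\bm{b}_t(m'))$ strictly monotonically, forcing one of the two types out of the support and contradicting the choice of limit point. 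Once $\bm{b}_t$ enters some $N_{m^\star}$, invariance from (ii) traps it there and part (i) yields $\pi^\star_t = k^\star$ for every subsequent $t$, which is the desired conclusion.

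The main obstacle will be ruling out ``knife-edge'' limit points lying in the relative interior of a proper face of $\Delta(M)$. A priori, the recursive optimality condition could drive $(\pi^\star_j)$ to alternate delicately between categories that favor different types so as to keep log-likelihood ratios bounded forever; DP is precisely what precludes such cancellation, because it forbids any single category from being ``neutral'' between two types. Turning this intuition into a clean contradiction---likely by combining the finiteness of $V^\star$ with a careful accounting of how often each category is played along the optimal walk, and by leveraging the contraction structure from (ii) along a convergent subsequence---is where the bulk of the technical work will lie.
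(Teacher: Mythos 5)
Your step (ii) --- that the Bayesian update $\tau(\cdot,k^\star)$ maps a neighborhood of $e_{m^\star}$ strictly into itself --- is false, and this breaks the trapping argument your conclusion rests on. You justify it by asserting that $\bm{P}(k^\star,m)/\bm{P}(k^\star,m^\star)<1$ for all $m\neq m^\star$, but $k^\star$ is only the \emph{column-wise} maximizer for type $m^\star$; nothing prevents another type $m$ from liking $k^\star$ even more, i.e.\ $\bm{P}(k^\star,m)>\bm{P}(k^\star,m^\star)$, in which case playing $k^\star$ pushes the belief \emph{away} from $e_{m^\star}$. The paper exhibits exactly this with $\bm{P}=\left(\begin{smallmatrix}0.8&0.5\\0.7&0.6\end{smallmatrix}\right)$: near the vertex of type $2$ the optimal action is category $2$ (since $0.6>0.5$), yet the first column dominates the second entrywise, so every update drifts toward type $1$ and the walk escapes the neighborhood. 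Consequently, entering $N_{m^\star}$ once does not pin down the policy, and your argument gives no control over how many times the walk can migrate between neighborhoods of different vertices.

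The missing idea is a quantitative potential argument. The paper shows that at any belief that is \emph{not} concentrated near a vertex, one optimal step strictly increases $V^\star$ along the walk by at least a fixed instance-dependent amount $\frac{\delta(1-\delta)c^2}{1-c}$ (Theorem~\ref{thm:gap-between-value-function}); since $V^\star$ is bounded by $\frac{1-c}{c}$, the walk visits unconcentrated beliefs only finitely often (Corollary~\ref{corr:limited-unconcetrated}). A separate lemma (Lemma~\ref{lemma:concentrated-transition}) shows the walk cannot jump from being concentrated at one type to concentrated at another without passing through an unconcentrated belief, so after finitely many rounds it is trapped near a single vertex, where a Lipschitz argument (Theorem~\ref{thm:myopic-near-boundary}) forces the myopic action. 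Your first part (optimality of $k^\star$ in a neighborhood of $e_{m^\star}$, via the strict gap at the vertex plus continuity) matches the paper's Theorem~\ref{thm:myopic-near-boundary} in spirit, and your observation that $\prod_j\bm{P}(\pi^\star_j,m)\to 0$ is a reasonable starting point, but your limit-point analysis --- which you yourself flag as incomplete in the final paragraph --- is precisely the part the potential argument replaces. Without it, the ``delicate alternation'' you worry about is not ruled out.
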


Theorem~\ref{thm:convergence} introduces a natural complexity measure through the time until convergence. Through the proof's construction, we can identify both when convergence occurs and determine the optimal policy from that point onward. Hence, instances with faster convergence require less computational effort as fewer possibilities need to be explored before identifying the final repeating recommendation.

While Theorem~\ref{thm:convergence} only establishes the convergence, our empirical analysis in Section~\ref{sec:experiments} reveals that this transition typically occurs relatively fast in practice.

\begin{proof}[\normalfont\bfseries Proof Sketch of Theorem~\ref{thm:convergence}]
We begin by introducing an instance-dependent parameter $c$ that captures crucial structural 
relationships within the user preference matrix. While the exact definition of $c$ is deferred 
to \apxref{sec:convergence-of-the-optimal-policy-proofs}, intuitively, $c$ encodes values like the degree of heterogeneity among user types---quantified by the minimal separation between entries in each 
row of $\bm{P}$---and the minimum probability in the user-type distribution $\bm{q}$. Crucially, 
$c > 0$ holds for all UP-$\prob$ instances.


Given a user type $m \in M$ and $\delta > 0$, we say a belief $\bm{b}$ is \emph{$(\delta,m)$-concentrated} if $\bm{b}(m) \geq 1-\delta$. Typically, the prior $\bm{q}$ is \emph{$c$-unconcentrated}, meaning $\bm{q}(m) < 1-c$ for all $m \in M$, with probability mass distributed across multiple types.

Our first result demonstrates that for unconcentrated beliefs, the value function strictly increases by a significant gap during consecutive steps of the optimal policy.

\begin{theorem}\label{thm:gap-between-value-function}
For any $\delta$-unconcentrated belief $\bm{b} \in \Delta(M)$ it holds that $V^{\star}(\tau(\bm{b}, \pi^{\star}_1(\bm{b}))) - V^{\star}(\bm{b}) \geq \frac{\delta \cdot (1 - \delta) \cdot c^2}{1 - c}$.
\end{theorem}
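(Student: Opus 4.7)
The plan is to recast the claim, via the Bellman decomposition of Observation~\ref{obs:recursive-formula-of-the-value-function}, as a lower bound on $V^\star(\bm b)$ itself, and then establish that lower bound by comparing against the policy that always recommends $k^\star := \pi^\star_1(\bm b)$ and quantifying the resulting Jensen gap using the $\delta$-unconcentrated hypothesis. Writing $p_1 = p_{k^\star}(\bm b)$, the Bellman equation gives $V^\star(\bm b) = p_1(1 + V^\star(\tau(\bm b, k^\star)))$, so rearranging yields $V^\star(\tau(\bm b,k^\star)) - V^\star(\bm b) = (V^\star(\bm b)(1-p_1) - p_1)/p_1$; thus the target inequality is equivalent to $V^\star(\bm b) \geq \tfrac{p_1}{1-p_1}\bigl(1 + \tfrac{\delta(1-\delta)c^2}{1-c}\bigr)$.

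For that lower bound, I would first invoke Lemma~\ref{lemma:closed-form-representations-of-the-value-function} applied to the suboptimal policy $(k^\star,k^\star,\dots)$ to obtain $V^\star(\bm b) \geq \sum_m \bm b(m)\,\bm P(k^\star,m)/(1-\bm P(k^\star,m))$. Writing $f(x) = x/(1-x)$, Jensen's inequality alone only reproduces $f(p_1) = p_1/(1-p_1)$, so the improvement must come from a quantitative Jensen gap. Using the algebraic identity $f(x) - f(p_1) = (x-p_1)/((1-x)(1-p_1))$, the zero-mean fact $\sum_m \bm b(m)(\bm P(k^\star,m) - p_1) = 0$, and one further application of the analogous identity to the factor $1/(1-\bm P(k^\star,m))$, the Jensen gap reduces to the exact equality $(1-p_1)^{-2}\sum_m \bm b(m)(\bm P(k^\star,m)-p_1)^2/(1-\bm P(k^\star,m))$, which is at least $\mathrm{Var}_{\bm b}(\bm P(k^\star,\cdot))/(1-p_1)^2$.

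The remaining ingredient is the variance bound $\mathrm{Var}_{\bm b}(\bm P(k^\star,\cdot)) \geq \delta(1-\delta)\,c^2$. Let $m^\star = \argmax_m \bm b(m)$; $\delta$-unconcentration gives $\bm b(m^\star) \leq 1-\delta$ and, since $m^\star$ is the mass-maximizer, also $\bm b(m^\star) \geq \delta$ in the nontrivial regime, so $\bm b(m^\star)(1-\bm b(m^\star)) \geq \delta(1-\delta)$. I then apply the law of total variance to the partition $\{m = m^\star\}$ vs.\ $\{m \neq m^\star\}$: the between-group contribution equals $\bm b(m^\star)(1-\bm b(m^\star))$ times the squared gap between $\bm P(k^\star,m^\star)$ and the conditional mean of $\bm P(k^\star,\cdot)$ over the other types, and the DP row-separation $c$ lets me bound that squared gap below by $c^2$. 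Combining with the Jensen-gap bound, and using that $c$ also controls $1-p_1$ via $p_{\max}\leq 1-c$ (which gives $1-p_1 \geq c$ and hence $p_1(1-p_1) \leq 1-c$), clears the $(1-p_1)^2$ denominator in favor of a single $1-c$ in the final expression.

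The main obstacle is the variance step when $|M| > 2$: although every pair of entries in row $k^\star$ of $\bm P$ differs by at least $c$, the conditional mean of $\bm P(k^\star,\cdot)$ over the ``other'' types can collapse very close to $\bm P(k^\star,m^\star)$, so the between-group gap need not be $\geq c$ without further care. Two remedies are natural---either select $m^\star$ to be the type whose row-value is \emph{extremal} (rather than the mass-maximizer) and track the within-group variance separately, or bake the worst-case partition gap into the instance-dependent definition of $c$ itself, which is what the proof sketch hints the appendix does. Once this variance bound is secured, propagating it through the Jensen gap and the Bellman rearrangement, and carefully cancelling the $p_1$, $(1-p_1)$, and $(1-c)$ factors, is routine algebra.
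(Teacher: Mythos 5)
Your overall strategy coincides with the paper's: both proofs rearrange the Bellman equation, replace $V^\star(\bm b)$ by the value of the constant policy $\hat\pi=(k^\star)_{t\ge1}$, and reduce the claim to the variance lower bound $\mathrm{Var}_{\bm b}\bigl(\bm P(k^\star,\cdot)\bigr)\ge\delta(1-\delta)c^2$ together with $p_{\max}\le 1-c$. The only real difference is the middle step: the paper telescopes the recursion once more and uses the identity $p_{k}(\tau(\bm b,k))-p_{k}(\bm b)=\mathrm{Var}(X)/\mathbb{E}[X]$ (where $X$ takes value $\bm P(k^\star,m)$ with probability $\bm b(m)$), arriving at the lower bound $\mathrm{Var}(X)/p_1$ with $p_1=\mathbb{E}[X]$, whereas you evaluate the Jensen gap of $x\mapsto x/(1-x)$ against the closed form of $V^{\hat\pi}$ and arrive at $\mathrm{Var}(X)/\bigl(p_1(1-p_1)\bigr)$. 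Your bound is marginally tighter, and both collapse to the same final expression after $p_1\le p_{\max}\le 1-c$; this is a cosmetic rather than a substantive divergence.

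The one step you leave open --- the variance bound when $|M|>2$ --- is exactly the crux, and it is also the point where the paper's own argument is least rigorous: the paper simply compares $X$ to a ``mock'' two-point variable taking value $0$ with probability $1-\delta$ and $c$ with probability $\delta$, asserting without a formal domination argument that $X$ is ``more spread out.'' Your law-of-total-variance route can be closed: when the conditional mean of $\bm P(k^\star,\cdot)$ over the off-mode types collapses onto $\bm P(k^\star,m^\star)$, those off-mode values must straddle $\bm P(k^\star,m^\star)$ and are themselves pairwise separated by at least $c$ (this is where $c\le c_2$, the row-wise pairwise separation, is used --- not just separation from the mode), so the within-group variance term contributes at least on the order of $(1-\bm b(m^\star))c^2\ge\delta c^2$, which dominates $\delta(1-\delta)c^2$. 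Do note a second soft spot you did not flag: your inequality $\bm b(m^\star)\ge\delta$ for the mass-maximizer only holds when $\delta\le 1/|M|$; if $\delta>1/|M|$ every type has mass below $\delta$ and you need a separate (easy, but distinct) argument that such a widely spread distribution already has variance at least $\delta(1-\delta)c^2$. With those two repairs your proof is complete and is, if anything, more carefully justified at the variance step than the paper's.
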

Lemma~\ref{lemma:closed-form-representations-of-the-value-function} implies that the value function is bounded above by $\frac{1-c}{c}$, since
\begin{equation}\label{eq:bounded-value}
V^\pi(\bm{b}) = \sum_{t=1}^{\infty} \prod_{j=1}^{t} p_{\pi_j}(\bm{b}^{\pi, \bm{b}}_j)    \leq \frac{p_{\max}}{1-p_{\max}} \leq \frac{1-c}{c}.
\end{equation}
Using potential arguments, we derive the following.
\begin{corollary}\label{corr:limited-unconcetrated}
For a fixed $\delta > 0$, the optimal belief walk initiating from prior $\bm{q}$ can contain at most $H = \left\lceil\frac{(1 - c)^2}{\delta \cdot (1 - \delta) \cdot c^3}\right\rceil$ $\delta$-unconcentrated beliefs.
\end{corollary}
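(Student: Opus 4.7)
The plan is a potential argument taking $V^\star$ along the optimal belief walk as the potential. Let $(\bm{b}_t)_{t \geq 1}$ be the optimal belief walk starting at $\bm{b}_1 = \bm{q}$, let $\gamma = \delta(1-\delta)c^2/(1-c)$ denote the gap from Theorem~\ref{thm:gap-between-value-function}, and let $V_{\max} = (1-c)/c$ denote the upper bound on $V^\star$ from~\eqref{eq:bounded-value}. If $t_1 < t_2 < \dots < t_N$ index the $\delta$-unconcentrated beliefs along the walk, I aim to conclude $N\gamma \leq V_{\max}$, which rearranges to $N \leq (1-c)^2 / (\delta(1-\delta)c^3)$; since $N$ is an integer, this gives $N \leq H$.

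The key intermediate step is monotonicity of $V^\star$ along the optimal walk: $V^\star(\bm{b}_{t+1}) \geq V^\star(\bm{b}_t)$ for every $t$. Writing $k^\star = \pi^\star_1(\bm{b}_t)$ and $p = p_{k^\star}(\bm{b}_t)$, Observation~\ref{obs:recursive-formula-of-the-value-function} gives $V^\star(\bm{b}_t) = p\,(1 + V^\star(\bm{b}_{t+1}))$, so monotonicity reduces to the bound $V^\star(\bm{b}_t) \geq p/(1-p)$. That in turn follows by comparing $V^\star(\bm{b}_t)$ to the value of the stationary policy $(k^\star,k^\star,\dots)$: by Lemma~\ref{lemma:closed-form-representations-of-the-value-function}, that value equals $\sum_{m \in M} \bm{b}_t(m)\,\bm{P}(k^\star,m)/(1-\bm{P}(k^\star,m))$, which by Jensen's inequality applied to the convex function $x \mapsto x/(1-x)$ on $[0,1)$ is at least $p/(1-p)$.

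Given monotonicity, the proof concludes by telescoping. Theorem~\ref{thm:gap-between-value-function} gives $V^\star(\bm{b}_{t_i+1}) - V^\star(\bm{b}_{t_i}) \geq \gamma$ for each $i$, and monotonicity yields $V^\star(\bm{b}_{t_{i+1}}) - V^\star(\bm{b}_{t_i+1}) \geq 0$ between consecutive unconcentrated indices. Summing, $V^\star(\bm{b}_{t_N+1}) \geq V^\star(\bm{b}_{t_1}) + N\gamma \geq N\gamma$, and combining with $V^\star(\bm{b}_{t_N+1}) \leq V_{\max}$ yields the claimed bound on $N$. The only step that is not essentially bookkeeping is the Jensen-based monotonicity claim; I expect this to be the main obstacle, and it crucially relies on the standing assumption $p_{\max} < 1$ that makes $x/(1-x)$ finite on all entries of $\bm{P}$.
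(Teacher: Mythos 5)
Your proof is correct and follows essentially the same potential argument as the paper: accumulate the gap from Theorem~\ref{thm:gap-between-value-function} at each unconcentrated index and compare against the upper bound $\frac{1-c}{c}$ from Equation~\eqref{eq:bounded-value}. Your Jensen-based monotonicity step ($V^\star(\bm{b}_{t+1}) \geq V^\star(\bm{b}_t)$, which guarantees the accumulated gains are not cancelled on concentrated stretches) is left implicit in the paper's own write-up---there it follows from the inequality $p_k(\tau(\bm{b},k)) \geq p_k(\bm{b})$ established inside the proof of Theorem~\ref{thm:gap-between-value-function}---so making it explicit is a welcome clarification rather than a different route.
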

Corollary~\ref{corr:limited-unconcetrated} ensures that the belief walk ultimately enters a sub-space of concentrated beliefs. The next theorem suggests that the optimal recommendation in concentrated beliefs is myopic; therefore, as long as the belief walk remains concentrated, the optimal policy is fixed.
\begin{theorem}\label{thm:myopic-near-boundary}
For every user type $m \in M$ and $(\frac{c^2}{4}, m)$-concentrated belief $\bm{b}$, $\pi^{\star}_1(\bm{b}) = \argmax_{k \in K} \bm{P}(k, m)$.
\end{theorem}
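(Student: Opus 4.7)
The plan is to lower bound $V^\star(\bm b)$ by the value of the ``always-$k^\star$'' policy $\bar\pi := (k^\star, k^\star, \ldots)$ and to uniformly upper bound the value of any policy $\pi$ whose first recommendation is some $k \neq k^\star$, then close the gap by elementary arithmetic. Set $a := \bm P(k^\star, m)$ and $\delta := c^2/4$, so by hypothesis $\bm b(m) \geq 1 - \delta$. Two structural features that the constant $c$ is built to guarantee drive the proof: every row of $\bm P$ is $c$-separated, so $a - \bm P(k, m) \geq c$ whenever $k \neq k^\star$; and $p_{\max} \leq 1 - c$, so both $a \leq 1-c$ and (via equation~\eqref{eq:bounded-value}) every attainable value is at most $(1-c)/c$.

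For the lower bound, Lemma~\ref{lemma:closed-form-representations-of-the-value-function} gives
\[V^\star(\bm b) \;\geq\; V^{\bar\pi}(\bm b) \;=\; \sum_{m' \in M} \bm b(m') \cdot \frac{\bm P(k^\star, m')}{1 - \bm P(k^\star, m')} \;\geq\; (1-\delta) \cdot \frac{a}{1-a}.\]
For the upper bound, fix any $\pi$ with $\pi_1 = k \neq k^\star$, set $b_k := \bm P(k, m)$, and split via Lemma~\ref{lemma:closed-form-representations-of-the-value-function} as $V^\pi(\bm b) = \sum_{m' \in M} \bm b(m') \sum_{t \geq 1} \prod_{j=1}^t \bm P(\pi_j, m')$. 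For the type-$m$ summand, using $\bm P(\pi_j, m) \leq a$ for every $j$ and $\bm P(\pi_1, m) = b_k$, one has $\sum_{t \geq 1} \prod_{j=1}^t \bm P(\pi_j, m) \leq b_k (1 + a + a^2 + \cdots) = \frac{b_k}{1-a}$; for any other type $m'$ the inner sum is bounded by $(1-c)/c$ via equation~\eqref{eq:bounded-value}. Hence
\[V^\pi(\bm b) \;\leq\; \bm b(m) \cdot \frac{b_k}{1-a} + (1 - \bm b(m)) \cdot \frac{1-c}{c} \;\leq\; \frac{b_k}{1-a} + \delta \cdot \frac{1-c}{c}.\]

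The desired strict inequality $(1-\delta)\frac{a}{1-a} > \frac{b_k}{1-a} + \delta \cdot \frac{1-c}{c}$ becomes, after clearing denominators, $c(a - b_k) > \delta \bigl(ac + (1-c)(1-a)\bigr)$. Rewriting the bracket as $1 - c - a(1 - 2c)$ and checking the regimes $c \leq 1/2$ and $c > 1/2$ separately shows that it is bounded by $\max(c, 1-c)$; combined with $a - b_k \geq c$, the inequality reduces to $c^2 > \delta \cdot \max(c, 1-c)$, which holds comfortably for $\delta = c^2/4$. Since $k$ was arbitrary, this forces $\pi^\star_1(\bm b) = k^\star$. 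The main obstacle is precisely this constant tracking: crudely bounding a competitor's entire value by $(1-c)/c$ would be far too loose to beat the lower bound, so it is essential to peel off the type-$m$ contribution on both sides and match it against the row-gap bound $a - b_k \geq c$, which is exactly where the threshold $c^2/4$ enters.
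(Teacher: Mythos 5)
Your proof is correct, and it takes a genuinely more elementary route than the paper's. The paper bounds a competing policy $\hat\pi$ with $\hat\pi_1=\hat k\neq k^\star$ via the one-step Bellman decomposition of Observation~\ref{obs:recursive-formula-of-the-value-function}, then controls the continuation value $V^{\hat\pi[2:]}(\tau(\bm b,\hat k))$ by first proving that $V^\star$ is Lipschitz with constant $\tfrac{1-c}{c}$ (Lemma~\ref{lemma:lipschitz}), computing $\norm{\tau(\bm b,\hat k)-\bm e_m}_1$ explicitly from the Bayes update, and using $V^\star(\bm e_m)=\tfrac{a}{1-a}$. You bypass the Lipschitz machinery and the belief-update computation entirely: working directly with the per-type series of Lemma~\ref{lemma:closed-form-representations-of-the-value-function}, you bound the type-$m$ summand by $\tfrac{b_k}{1-a}$ (first factor $b_k$, all later factors at most $a$) and dump the remaining mass $1-\bm b(m)\le c^2/4$ into the crude bound $\tfrac{1-c}{c}$ from Equation~\eqref{eq:bounded-value}. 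The lower bound $(1-\tfrac{c^2}{4})\tfrac{a}{1-a}$ is identical in both arguments, and both ultimately win by matching the row gap $a-b_k\ge c$ against an error term of order $\tfrac{c^2}{4}\cdot\tfrac{1-c}{c}$; your closing arithmetic (reducing to $c^2>\tfrac{c^2}{4}\max(c,1-c)$, which holds since $\max(c,1-c)<4$) checks out, as does the chain $V^{\bar\pi}(\bm b)>V^\pi(\bm b)$ for every $\pi$ with $\pi_1\neq k^\star$, which indeed forces $\pi^\star_1(\bm b)=k^\star$. What your approach buys is self-containedness and shorter constant tracking; what the paper's approach buys is the Lipschitz lemma itself, which is stated separately and is reusable, and an upper bound that degrades gracefully with the actual post-update distance to the vertex rather than with the worst-case mass off type $m$.
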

However, we still have one edge case to cover: The belief walk could potentially transition from a $(\frac{c^2}{4},m)$-concentrated belief for some arbitrary $m \in M$ to another $(\frac{c^2}{4},m')$-concentrated belief for $m' \neq m$. If such a case occurs, the optimal policy will diverge. To that end, we provide the following lemma.
\begin{lemma}\label{lemma:concentrated-transition}
For any two distinct user types $m, m' \in M$ and $(\frac{c^2}{4},m)$-concentrated belief $\bm{b}$, $\tau(\bm b, \pi^\star_1(\bm b))$ is not $(\frac{c^2}{4},m')$-concentrated.
\end{lemma}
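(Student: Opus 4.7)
The plan is to combine Theorem~\ref{thm:myopic-near-boundary} with a direct computation of the Bayesian update~\eqref{eq:bayesian update}. Since $\bm b$ is $(c^2/4,m)$-concentrated, Theorem~\ref{thm:myopic-near-boundary} identifies the optimal first recommendation as $k^\star := \pi^\star_1(\bm b) = \argmax_{k \in K} \bm P(k, m)$. Writing the posterior explicitly,
\[
\tau(\bm b, k^\star)(m') \;=\; \frac{\bm b(m')\,\bm P(k^\star, m')}{\sum_{m'' \in M} \bm b(m'')\,\bm P(k^\star, m'')},
\]
my goal is to upper-bound this quantity strictly below $1 - c^2/4$, which is precisely the negation of $(c^2/4, m')$-concentration.

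For the numerator I would use $\bm b(m') \leq 1 - \bm b(m) \leq c^2/4$ (since $\bm b$ is a distribution with $\bm b(m) \geq 1 - c^2/4$) together with $\bm P(k^\star, m') \leq 1$. For the denominator I would retain only the $m$-th summand, which is at least $(1 - c^2/4)\,\bm P(k^\star, m)$. Invoking the property of $c$ (informally described in the sketch as capturing minimum probabilities and row-separations of $\bm P$) to give a lower bound of order $c$ on $\bm P(k^\star, m)$, I would obtain
\[
\tau(\bm b, k^\star)(m') \;\leq\; \frac{c^2/4}{(1 - c^2/4)\,c} \;=\; \frac{c}{4 - c^2}.
\]
It then remains to verify the elementary inequality $\frac{c}{4 - c^2} < 1 - \frac{c^2}{4}$, equivalent to $4c < (4 - c^2)^2$, which is immediate for $c \in (0,1)$ since the right-hand side exceeds $9$.

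The main obstacle is pinning down which structural properties of $\bm P$ are encoded in $c$, as its formal definition is deferred to \apxref{sec:convergence-of-the-optimal-policy-proofs}. If $c$ does not directly lower-bound entries of $\bm P$, I would fall back on the DP separation: because $k^\star$ maximizes $\bm P(\cdot, m)$ and entries in each row of $\bm P$ are distinct with minimal gap captured by $c$, one gets a bound of the form $\bm P(k^\star, m) - \bm P(k^\star, m') \geq c$ (at least when $\bm P(k^\star, m) \geq \bm P(k^\star, m')$), from which the ratio $\bm P(k^\star, m')/\bm P(k^\star, m)$ is pushed away from $1$ by $\Omega(c)$. Plugging this into the Bayes update should yield the same type of strict arithmetic inequality in $c$. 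Either route delivers the conclusion that $\tau(\bm b, \pi^\star_1(\bm b))$ fails to be $(c^2/4, m')$-concentrated.
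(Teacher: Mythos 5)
Your proposal is correct and follows essentially the same route as the paper's proof: the paper likewise bounds the numerator by $\frac{c^2}{4}\cdot 1$, keeps only the $m$-summand $(1-\frac{c^2}{4})\,\bm P(\pi^\star_1(\bm b),m)\geq (1-\frac{c^2}{4})\,c$ in the denominator, arrives at the same ratio $\frac{c}{4-c^2}$, and closes with the same elementary inequality $\frac{c}{4-c^2}<1-\frac{c^2}{4}$. The lower bound $\bm P(\pi^\star_1(\bm b),m)\geq c$ that you were unsure about does hold via your primary route: Theorem~\ref{thm:myopic-near-boundary} makes $\pi^\star_1(\bm b)$ the favorite category of type $m$, and $c$ is defined to be at most the gap $c_3$ between each type's favorite and second-favorite category, so $\max_{k}\bm P(k,m)\geq c_3\geq c$ (the fallback via row-separation is unnecessary).
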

Lemma~\ref{lemma:concentrated-transition} guarantees that transitions from concentrated beliefs of one type to another must include rounds with unconcentrated beliefs; however, Corollary~\ref{corr:limited-unconcetrated} limits the number of such transitions. Together, Corollary~\ref{corr:limited-unconcetrated} and Lemma~\ref{lemma:concentrated-transition} indicate that after some finite time $T$ the belief remains $(\frac{c^2}{4},m')$-concentrated for one single type $m$ indefinitely, and Theorem~\ref{thm:myopic-near-boundary} suggests that the optimal policy converges to the myopic, fixed policy from $T$ onward.
\end{proof}

\section{Branch-and-Bound Algorithm}
\label{sec:branch-and-bound-algorithm}
In this section, we introduce a branch-and-bound (B\&B) algorithm tailored to our setting, outlined in Algorithm~\ref{bb-algorithm}. The B\&B approach is widely used in sequential decision-making and combinatorial optimization~\cite{learning-to-branch-with-tree-mdps,reinforcement-learning-for-branch-and-bound-optimisation-using-retrospective-trajectories,learning-to-search-in-branch-and-bound-algorithms}. Its effectiveness hinges on the quality of the bounds used to evaluate the search space and eliminate suboptimal branches. For our problem, we derive these bounds based on the recursive structure of the value function $V^\pi$. Specifically, for any policy $\pi$ and positive integer $h \in \mathbb{N}$, the value function $V^\pi$ can be expressed as:
\begin{equation}
  \label{eq:value-function-with-prefix-considerations}
  V^\pi = \sum_{t=1}^{h} \prod_{i=1}^{t} p_{\pi_i}(\bm{b}^{\pi, \bm{q}}_i) + \prod_{i=1}^{h} p_{\pi_i}(\bm{b}^{\pi, \bm{q}}_{i}) \cdot V^{\pi[h+1:]}(\bm{b}^{\pi, \bm{q}}_{h+1}).
\end{equation}

Equation~\eqref{eq:value-function-with-prefix-considerations} decomposes $V^\pi$ into two components: the cumulative rewards for the first $h$ rounds and a recursive term representing the discounted expected value of future rounds. Crucially, substituting $V^{\pi[h+1:]}(\bm{b}^{\pi, \bm{q}}_{h+1})$ with an upper or lower bound yields corresponding bounds for $V^\pi$.

Building on this, we now propose an upper bound. Intuitively, imagine that the RS is entirely certain about the user type. That is, the type would still be sampled according to the belief $\bm b$, but the RS could pick a policy conditioned on the sampled type. In such a case, the RS would pick the type's favorite category indefinitely, leading to an expected social welfare of:
\[
  V^U(\bm{b}) = \sum_{m \in M} \bm{b}(m) \cdot \max_{k \in K} \frac{\bm{P}(k, m)}{1 - \bm{P}(k, m)}.
\]
We stress that the above upper bound is not necessarily attainable. As for the lower bound, we compute the value of the best fixed-action policy w.r.t. the belief, namely,
\[
  V^L(\bm{b}) = \max_{k \in K} \sum_{m \in M} \bm{b}(m) \cdot \frac{\bm{P}(k, m)}{1 - \bm{P}(k, m)}.
\]
Note that the lower bound \emph{is attainable} as it is the value of a valid policy (the best fixed-action policy). 
\begin{lemma} \label{lemma:upper-lower-bound}
  For any belief $\bm b \in \Delta(M)$, it holds that $    V^L(\bm b) \leq V^\star(\bm b) \leq V^U(\bm b)$.
  
\end{lemma}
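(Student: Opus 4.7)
The plan is to prove both bounds by applying the second closed-form expression in Lemma~\ref{lemma:closed-form-representations-of-the-value-function}, namely
\[
V^\pi(\bm{b}) = \sum_{m \in M} \bm{b}(m) \sum_{t=1}^{\infty} \prod_{j=1}^{t} \bm{P}(\pi_j, m),
\]
which cleanly separates the contribution of each user type.

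For the lower bound, I would fix an arbitrary category $k \in K$ and consider the fixed-action policy $\pi^{(k)} = (k, k, k, \ldots)$. Substituting into the formula above collapses the inner product to $\bm{P}(k,m)^t$, and summing the geometric series (valid because $\bm{P}(k,m) \leq p_{\max} < 1$) yields
\[
V^{\pi^{(k)}}(\bm{b}) = \sum_{m \in M} \bm{b}(m) \cdot \frac{\bm{P}(k,m)}{1 - \bm{P}(k,m)}.
\]
Since $V^\star(\bm{b}) \geq V^{\pi^{(k)}}(\bm{b})$ for every $k$, taking the maximum over $k$ gives $V^\star(\bm{b}) \geq V^L(\bm{b})$.

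For the upper bound, I would work per user type. For any policy $\pi$ and any $m \in M$, every factor $\bm{P}(\pi_j, m)$ is at most $\max_{k \in K} \bm{P}(k, m)$, so
\[
\sum_{t=1}^{\infty} \prod_{j=1}^{t} \bm{P}(\pi_j, m) \;\leq\; \sum_{t=1}^{\infty} \bigl(\max_{k} \bm{P}(k,m)\bigr)^{t} \;=\; \frac{\max_k \bm{P}(k,m)}{1 - \max_k \bm{P}(k,m)} \;=\; \max_{k \in K} \frac{\bm{P}(k,m)}{1 - \bm{P}(k,m)},
\]
where the last equality uses that $x \mapsto x/(1-x)$ is strictly increasing on $[0,1)$. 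Multiplying by $\bm{b}(m)$ and summing over $m$ gives $V^\pi(\bm{b}) \leq V^U(\bm{b})$ for every $\pi$, and taking supremum yields $V^\star(\bm{b}) \leq V^U(\bm{b})$.

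There is no real obstacle here; the only thing to be careful about is the direction of the max/sum interchange in the upper bound: $V^U$ moves the maximization inside the sum over types (as if the planner could pick a different category per type), which is exactly why it overestimates the true $V^\star$, whereas $V^L$ keeps a single category across types and so underestimates it. The monotonicity of $x/(1-x)$ is what lets the per-type bound be written in the clean form appearing in the statement.
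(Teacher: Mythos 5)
Your proof is correct and follows essentially the same route as the paper: both use the second closed-form expression of Lemma~\ref{lemma:closed-form-representations-of-the-value-function}, identify $V^L$ as the value of the best fixed-action policy (hence a valid lower bound on $V^\star$), and obtain $V^U$ by bounding each factor $\bm{P}(\pi_j,m)$ by $\max_k \bm{P}(k,m)$ per user type and summing the resulting geometric series. No gaps.
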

For ease of notation, for any prefix $\Pi \in \bigcup_{h=1}^{\infty} K^h$ we denote by $\ovv_\Pi, \unv_\Pi$ the integration of $V^U, V^L$ into Equation~\eqref{eq:value-function-with-prefix-considerations}, respectively. Using this notation, $\ovv_\Pi$ acts as an upper bound for the value of any policy that starts with the prefix $\Pi$. Additionally, $\unv_\Pi$ serves as a lower bound of the maximal value of all policies that begin with $\Pi$; namely, $\unv_\Pi \leq \max_{\pi:\text{ begins with }\Pi} V^\pi$. 
\begin{algorithm}[t]
  \caption{B\&B Algorithm for $\prob$}
  \label{bb-algorithm}
  \begin{algorithmic}[1]
    \REQUIRE Instance $\langle M, K, \bm{q}, \bm{P} \rangle$, error term $\varepsilon > 0$
    \ENSURE $\varepsilon$-approximate policy and its value
    \STATE $\tilde \Pi \gets \varnothing$ \COMMENT{The empty prefix} \label{bnbalg:empty_prefix}
    \STATE $\tilde V \gets V^{L}(\bm{q})$ \COMMENT{Lower bound of the empty prefix}
    \STATE $L \gets \text{empty queue}$
    \STATE Append $\tilde \Pi$ to $L$
    \WHILE{$L \neq \emptyset$} 
    \STATE Pop a prefix $\Pi$ from $L$
    \IFTHEN{$\tilde V < \unv_{\Pi}$}{$\tilde V \gets \unv_{\Pi}, \tilde \Pi \gets \Pi$ \alglinelabel{bnbalg:refine}} 
    \FOR{$k \in K$} 
    \IFTHEN{$\ovv_{\Pi \oplus k}-\tilde V > \varepsilon$}{Append $\Pi \oplus k$ to $L$} \alglinelabel{bnbalg:branching}
    \ENDFOR
    \ENDWHILE
    \STATE \textbf{Return} $\tilde \Pi$, $\tilde V$
  \end{algorithmic}
\end{algorithm}

We are ready to present Algorithm~\ref{bb-algorithm}. The algorithm takes as input an instance and an error term $\varepsilon$, outputting an $\varepsilon$-approximately optimal policy and its value. It maintains two key variables: the current best prefix $\tilde \Pi$ and its corresponding value $\tilde V$. Using a queue $L$ to systematically explore policy prefixes, the algorithm implements two critical operations. In Line~\ref{bnbalg:refine}, it performs value refinement by comparing $\tilde V$ against the lower bound $\unv_{\Pi}$ of the examined prefix $\Pi$, updating both $\tilde V$ and $\tilde \Pi$ when an improvement is found. Then, in Line~\ref{bnbalg:branching}, it considers all possible one-step branching of $\Pi$ by appending a category $k$. For each extended prefix $\Pi \oplus k$, it calculates its upper bound $\ovv_{\Pi \oplus k}$. If the potential improvement $\ovv_{\Pi \oplus k} - \tilde{V}$ is more significant than $\varepsilon$, the extended prefix is added to the queue for further exploration. Otherwise, the branch is pruned as it cannot lead to a better solution within the desired accuracy.

The next theorem ensures the $\varepsilon$-optimality of Algorithm~\ref{bb-algorithm}.
\begin{theorem}\label{thm:bb-algorithm-bounded-error}
  For any input $\langle M, K, \bm{q}, \bm{P} \rangle$, $\varepsilon$, Algorithm~\ref{bb-algorithm} terminates after a finite number of steps and returns a value $\tilde V$ such that $V^{\star} - \tilde V \leq \varepsilon$, and a prefix $\tilde \Pi$ such that $\unv_{\tilde \Pi} = \tilde V$.
\end{theorem}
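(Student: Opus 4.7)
My plan is to split the proof into two parts: termination in finitely many iterations, and the optimality guarantee $V^\star-\tilde V\leq\varepsilon$ together with the equation $\unv_{\tilde\Pi}=\tilde V$. Two observations will be used throughout. First, by inspection of Line~\ref{bnbalg:refine}, the variable $\tilde V$ is monotonically non-decreasing over time and at every instant equals $\unv_{\tilde\Pi}$ for the current $\tilde\Pi$, so the last claim of the theorem holds by construction. Second, the gap between the two bounds on a prefix $\Pi$ of length $h$ shrinks geometrically:
\[
\ovv_\Pi-\unv_\Pi \;=\; \prod_{i=1}^{h}p_{\Pi_i}(\bm{b}^{\Pi,\bm{q}}_{i})\cdot\bigl(V^U-V^L\bigr)(\bm{b}^{\Pi,\bm{q}}_{h+1}) \;\leq\; p_{\max}^{h}\cdot\tfrac{p_{\max}}{1-p_{\max}},
\]
by definition of $\ovv_\Pi,\unv_\Pi$ and the elementwise bound $V^U,V^L\leq p_{\max}/(1-p_{\max})$ implied by Lemma~\ref{lemma:closed-form-representations-of-the-value-function}.

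For termination I would argue by contradiction. Suppose the algorithm never halts; then the explored prefixes form an infinite, $|K|$-ary tree, so by König's lemma there is an infinite chain $\varnothing=\Pi_0\subsetneq\Pi_1\subsetneq\cdots$ of prefixes each added to the queue. Let $\tilde V^{(i)}$ denote the value of $\tilde V$ immediately after $\Pi_i$ is processed; the branching check that enqueued $\Pi_{i+1}$ guarantees $\ovv_{\Pi_{i+1}}>\tilde V^{(i)}+\varepsilon$. The sequence $\{\tilde V^{(i)}\}$ is non-decreasing and bounded above by $V^\star\leq p_{\max}/(1-p_{\max})$, hence converges to some limit $L$, with $\tilde V^{(i)}\leq L$ for all $i$. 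Fixing any $\delta<\varepsilon/2$, the geometric shrinkage of the bound gap together with the convergence of $\tilde V^{(i)}$ yields, for all sufficiently large $i$, $\unv_{\Pi_{i+1}}>\ovv_{\Pi_{i+1}}-\delta>\tilde V^{(i)}+\varepsilon-\delta>L-2\delta+\varepsilon>L$. Because the queue is processed in FIFO order, $\Pi_{i+1}$ is eventually popped; the refinement step then forces $\tilde V^{(i+1)}\geq\unv_{\Pi_{i+1}}>L$, contradicting $\tilde V^{(i+1)}\leq L$.

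For the optimality gap, I would exploit the fact that at termination the queue is empty, so every enqueued prefix has been processed. Fix any optimal policy $\pi^\star$ and let $h^\star$ be the greatest length for which $\pi^\star[1\!:\!h^\star]$ was enqueued (well-defined: the empty prefix is enqueued at initialization, and the supremum is finite by termination). When $\pi^\star[1\!:\!h^\star]$ was processed, the branching step considered the extension by $\pi^\star_{h^\star+1}$; by maximality of $h^\star$, this child was pruned, giving $\ovv_{\pi^\star[1\!:\!h^\star+1]}-\tilde V\leq\varepsilon$ at that moment. I would then verify the key inequality $\ovv_{\pi^\star[1\!:\!h^\star+1]}\geq V^\star$ by decomposing $V^\star$ along the first $h^\star+1$ rounds of $\pi^\star$ via Observation~\ref{obs:recursive-formula-of-the-value-function} and bounding the tail using $V^U\geq V^\star$ pointwise (Lemma~\ref{lemma:upper-lower-bound}). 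Monotonicity of $\tilde V$ then yields $V^\star-\tilde V\leq\varepsilon$ for the final $\tilde V$.

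The main obstacle I anticipate is the termination half, where one must carefully couple the geometric shrinkage of the bound gap with the FIFO dynamics: it is essential that every enqueued prefix, however late, is eventually popped and used to refine $\tilde V$, and this fact has to be chained with the upper-bound-gap argument to produce the required contradiction. The optimality half is more routine once termination is in hand, since it reduces to identifying a single boundary prefix of $\pi^\star$ and invoking Lemma~\ref{lemma:upper-lower-bound}.
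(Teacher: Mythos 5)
Your proposal is correct, and the optimality half is essentially the paper's argument: the paper also locates the (unique) prefix of $\pi^\star$ whose extension along $\pi^\star$ was pruned, uses $V^\star \leq \ovv_{\Pi'}$ together with the pruning condition, and concludes via monotonicity of $\tilde V$ (the paper phrases it as a contradiction, you argue directly, but the content is identical). The termination half, however, is genuinely different. The paper gives a quantitative argument: it shows that for any prefix $\Pi$ of length at least $H(\varepsilon) = \lceil \log_{p_{\max}} \tfrac{\varepsilon(1-p_{\max})}{p_{\max}}\rceil$ the gap $\ovv_{\Pi} - \unv_{\Pi} \leq \varepsilon$, and since $\tilde V \geq \unv_{\Pi}$ at the moment $\Pi$ is branched, no child at depth beyond $H(\varepsilon)$ passes the test; hence the explored tree has bounded depth and at most $O(|K|^{H(\varepsilon)})$ nodes. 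You instead argue by contradiction via K\"onig's lemma, extracting an infinite branch of enqueued prefixes and deriving a contradiction from the convergence of the monotone sequence $\tilde V^{(i)}$ combined with the geometric shrinkage of the bound gap and FIFO liveness. Your argument is sound (the chain from K\"onig's lemma does consist of consecutive parent--child pairs, so the branching inequality $\ovv_{\Pi_{i+1}} > \tilde V^{(i)} + \varepsilon$ is available, and every enqueued prefix is indeed eventually popped under FIFO), but it buys only finiteness, whereas the paper's direct argument yields an explicit iteration bound and is independent of the queue discipline --- a point worth noting since your proof would need modification if $L$ were, say, a priority queue. One small implicit step common to both arguments that deserves a line: $\ovv_{\Pi \oplus k} \leq \ovv_{\Pi}$ (monotonicity of the upper bound under extension), which follows from $p_k(\bm b)\bigl(1 + V^U(\tau(\bm b,k))\bigr) \leq V^U(\bm b)$.
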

\begin{remark}
  \normalfont
  The output prefix $\tilde{\Pi}$ is a finite sequence of categories, while a policy is defined as an infinite sequence. We can extend $\tilde{\Pi}$ to an approximately optimal policy by using $\tilde{\Pi}$ for the first $h = |\tilde{\Pi}|$ rounds, and then repetitively picking the best-fixed category with respect to $\bm{b}^{\Pi, \bm{q}}_{h+1}$.
\end{remark}


\section{Experiments}\label{sec:experiments}

In this section, we conduct experiments with two goals in mind. First, we complement our convergence result from Theorem~\ref{thm:convergence} by demonstrating that, in practice, the belief walk converges rather quickly. Second, using simulated data, we compare the performance of Algorithm~\ref{bb-algorithm} to a state-of-the-art benchmark. 

\paragraph{Simulation details}
We generate instances using a random sampling procedure. We generate $\bm P$ by independently sampling latent vectors for a given number of categories and user types. Namely, we sample a latent vector for each user type and category from a normal distribution, computing entries of $\bm P$ as negated cosine distances (representing a user's affinities to a category), and normalizing these entries. We generate $\bm q$ by independently sampling logits from a normal distribution. Then, we transform them into a categorical distribution through the softmax function. 
The simulations were conducted on a standard CPU-based PC. Further details appear in \apxref{sec:auxiliary-details-about-the-experiments}. 

\begin{figure}
   \centering
   \includegraphics[width=.5\linewidth]{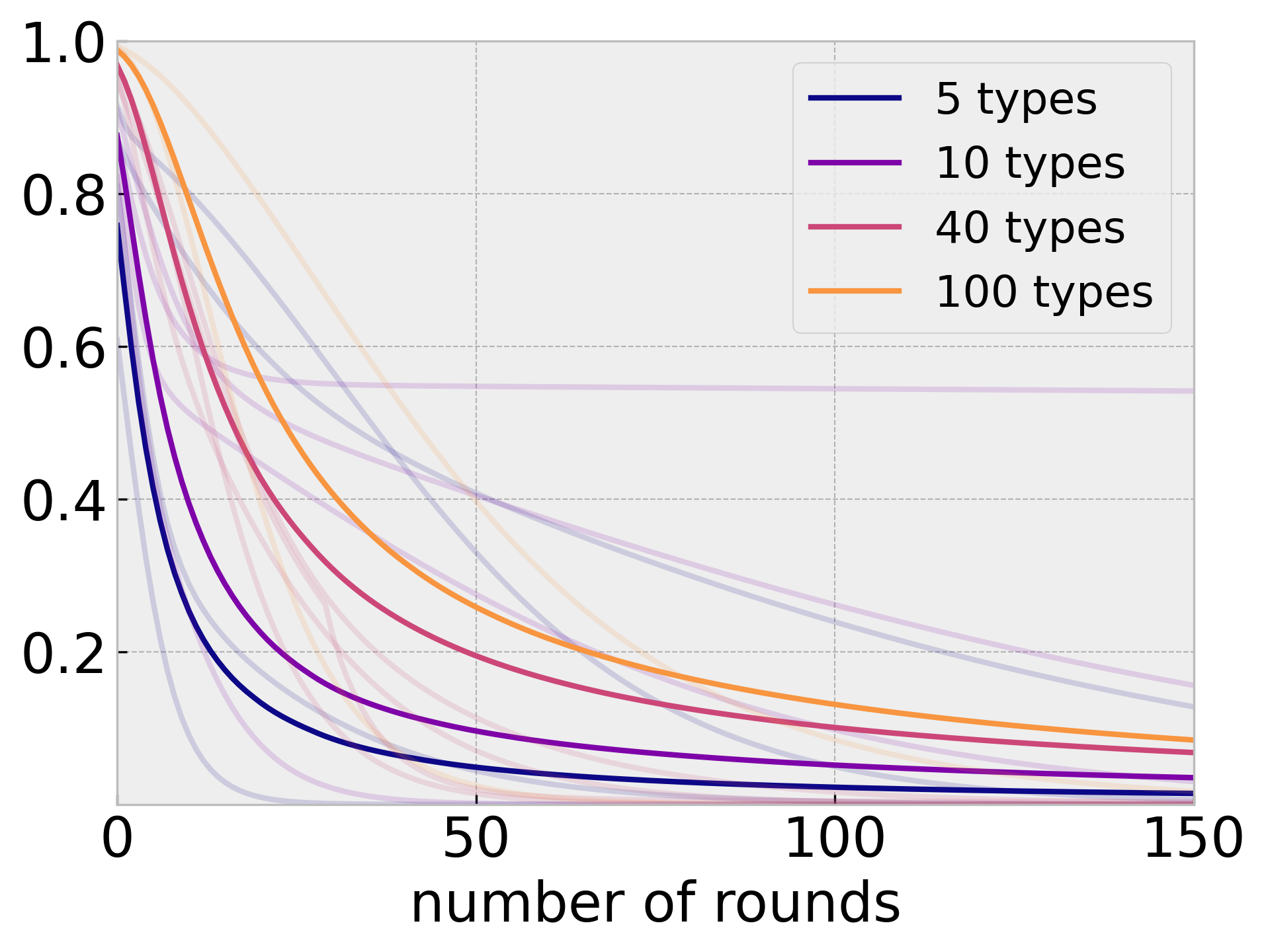}
   \caption{Convergence of beliefs under optimal policies. The number of categories is set to $10$. The x-axis is the number of rounds, and the y-axis is uncertainty in the user type. The transparent lines illustrate a few individual runs and the solid lines are averages over $500$ runs.}
   \label{fig:uncertainty}
\end{figure}

\paragraph{Convergence of beliefs}
Theorem \ref{thm:convergence} establishes that the optimal policy eventually converges to picking a fixed category. While the theorem guarantees convergence, it does not provide explicit rates. Equivalently, convergence can be analyzed in terms of certainty about a user's type, represented by proximity to the vertex to which the belief walk converges (recall the proof sketch of Theorem~\ref{thm:convergence}). Since beliefs update according to Bayes' rule, they converge at a geometric rate once the policy becomes fixed. In other words, further exploration yields diminishing returns when a belief is sufficiently close to a vertex. Thus, it is tempting to assume that the optimal policy myopically maximizes value for that vertex. On the other hand, a poorly chosen myopic policy can fail drastically, as Proposition~\ref{prop:myopic-policy-suboptimality} illustrates. We resolve these conflicting observations through simulations.

Figure \ref{fig:uncertainty} shows how \emph{uncertainty} in user type, defined as the $l_1$-distance from the vertex to which the belief converges under the optimal policy, evolves throughout the session. We vary the number of user types while fixing the number of categories. For each problem size, we report the averaged uncertainty and several individual runs. Despite the heterogeneity of individual runs, their geometric convergence property roughly transfers to averaged curves: Exponential functions fitted to these curves are almost identical to the originals, with correlation coefficients of at least $R^2=0.98$. This matches our intuition that early rounds are most important in terms of both expected reward and information.

Analyzing individual runs reveals notable patterns. While in some sessions, the optimal policies are fixed from the start, in others, recommendations switch (as characterized by jumps in the slope). This reflects the short-term vs. long-term reward trade-off discussed throughout the paper: The optimal policy may initially prioritize immediate rewards before switching to a riskier recommendation that increases certainty and earns more in the long run. 
Despite this, all the presented curves strictly decrease, suggesting that certainty increases monotonically. However, we found that in rare cases, the optimal policy can move away from a vertex before converging to it.
This resolves the above conflict: Even if the belief approaches some vertex, the optimal policy may eventually lead to a different vertex. We exemplify this behavior in \apxref{sec:belief walks}.

\begin{figure*}
\centering
\begin{subfigure}{0.02\textwidth}
   \centering
   \includegraphics[width=\linewidth]{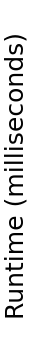}
   \vspace{0.7cm} 
\end{subfigure}%
\hspace{0.1cm} 
\begin{subfigure}{0.3\textwidth}
   \centering
   \includegraphics[width=\linewidth]{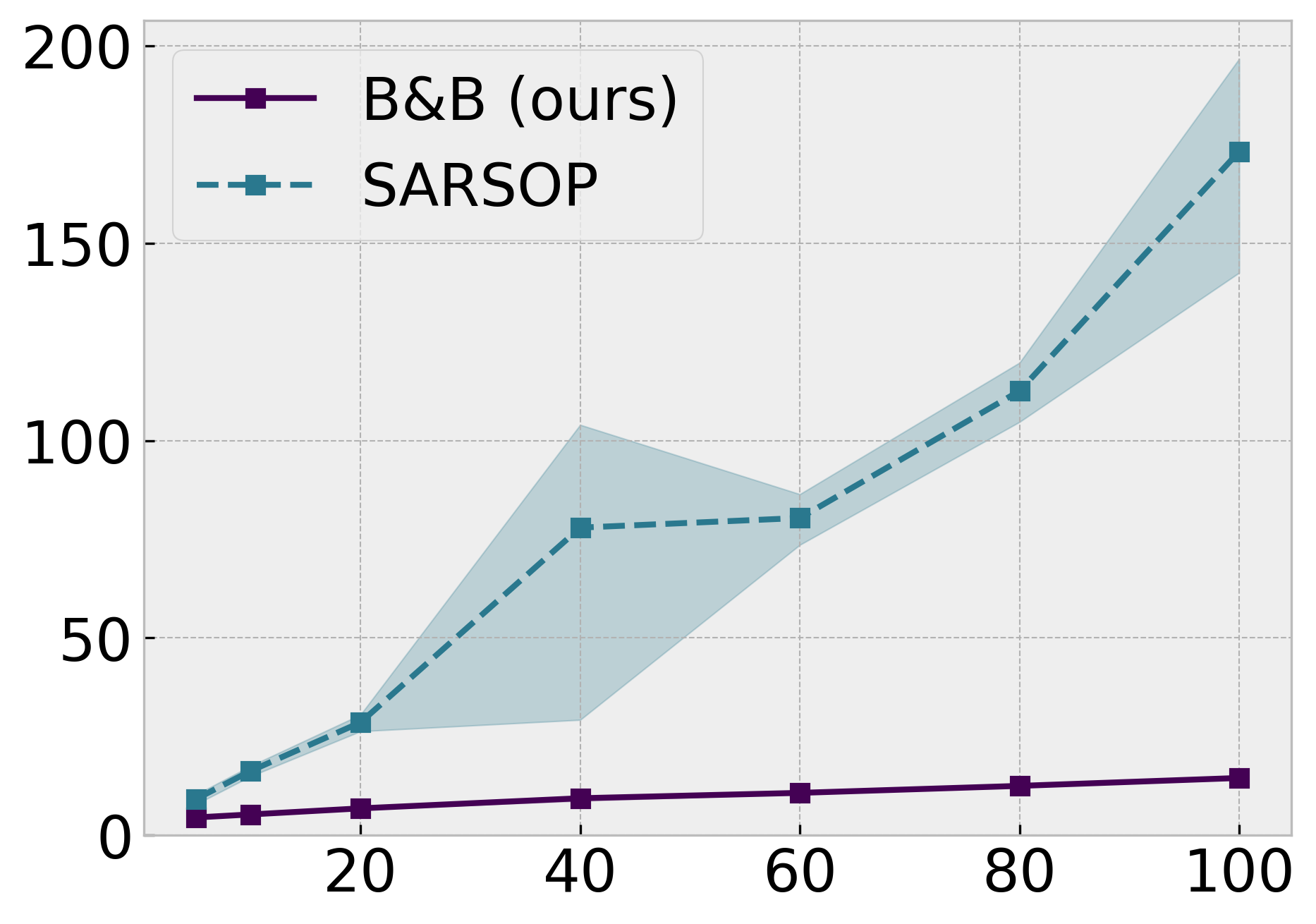}
   \caption{x-axis varies types; 10 categories}
   \label{fig:types_time}
   \vspace{0.5cm} 
\end{subfigure}%
\hspace{0.2cm} 
\begin{subfigure}{0.307\textwidth}
   \centering
   \includegraphics[width=\linewidth]{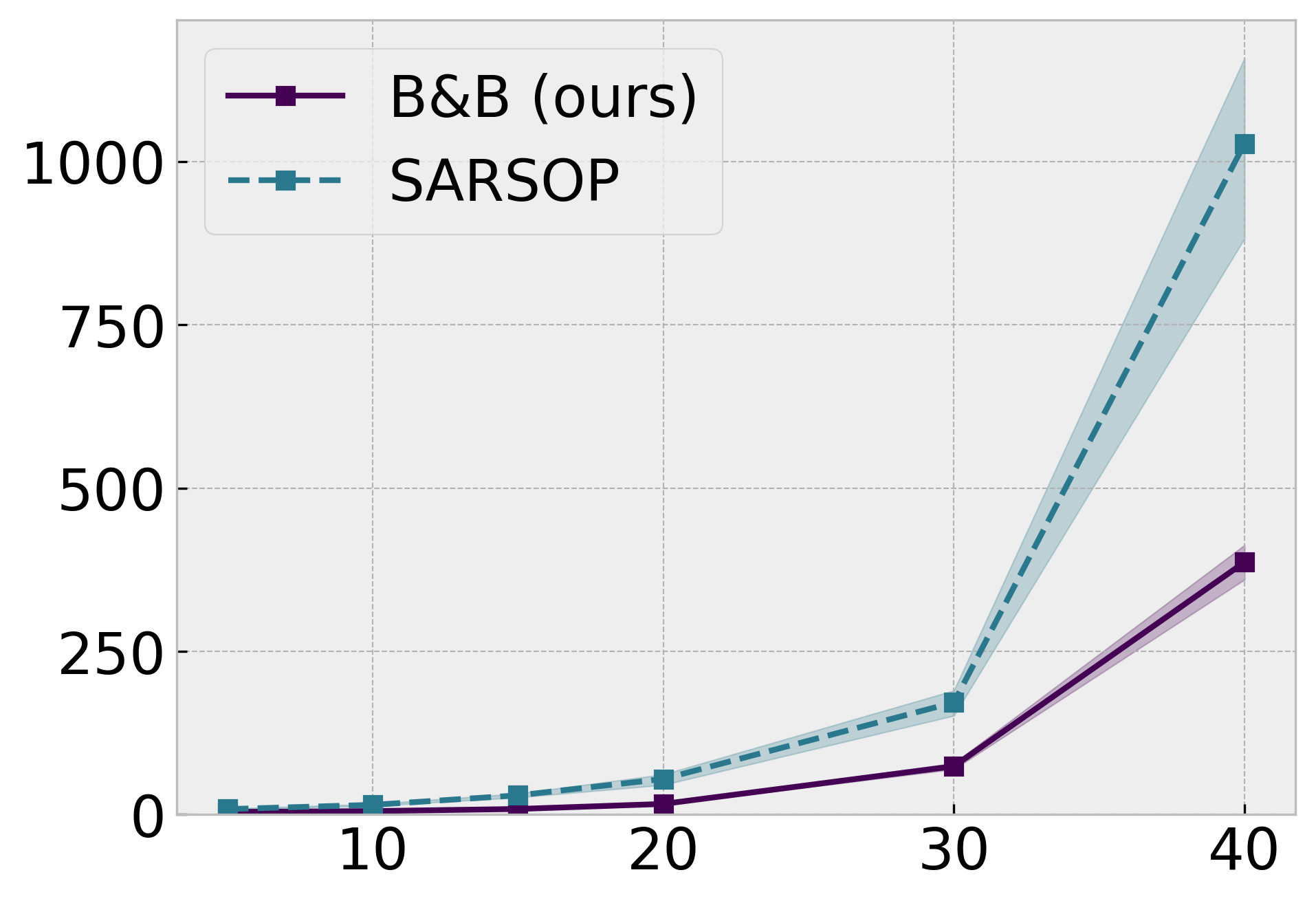}
   \caption{x-axis varies types and categories}
   \label{fig:actions_types_time}
   \vspace{0.5cm} 
\end{subfigure}%
\hspace{0.2cm} 
\begin{subfigure}{0.3\textwidth}
   \centering
   \includegraphics[width=\linewidth]{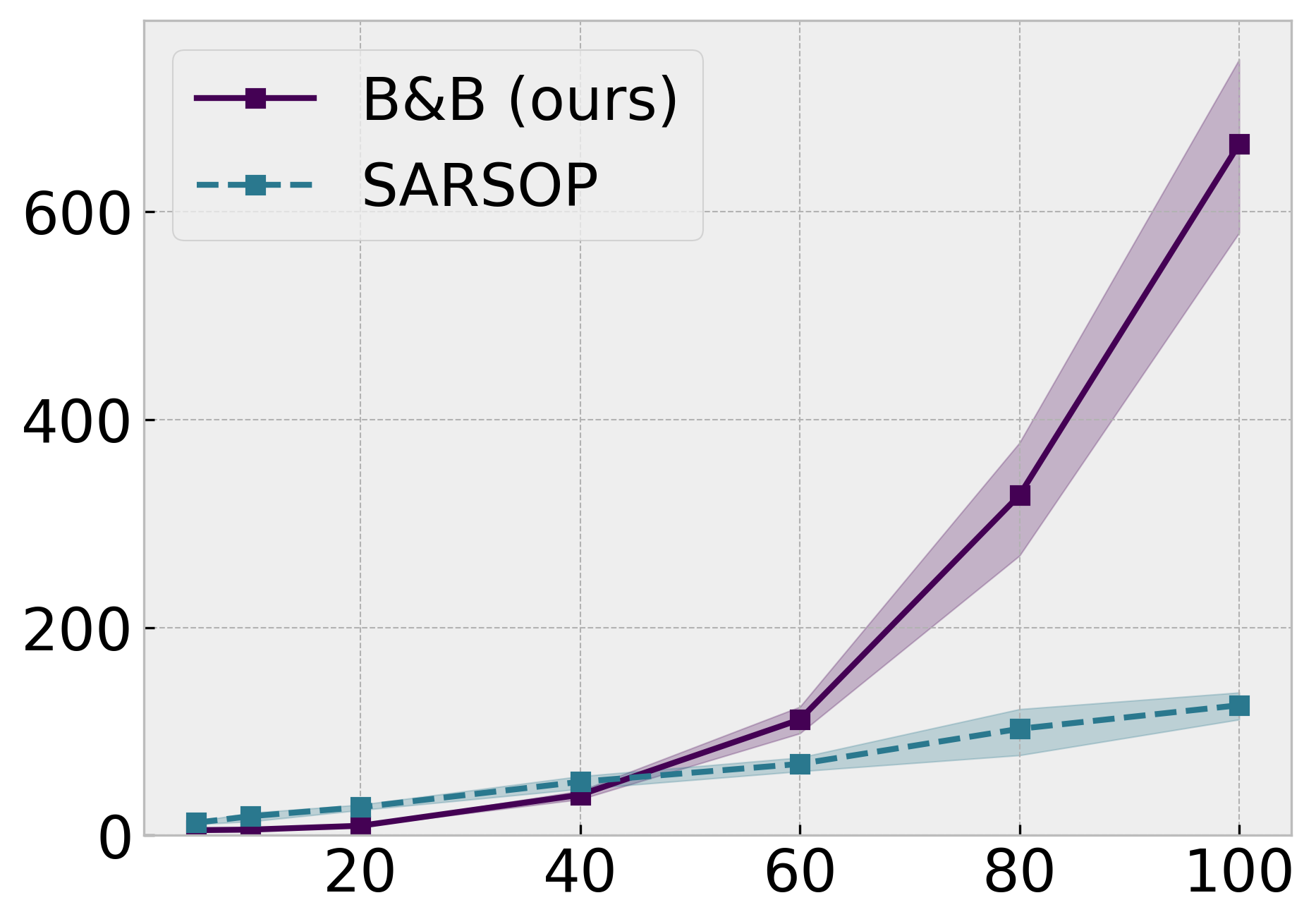}
   \caption{x-axis varies categories; 10 types}
   \label{fig:actions_time}
   \vspace{0.5cm} 
\end{subfigure}%
\caption{Runtime comparison (in milliseconds) between Algorithm~\ref{bb-algorithm} and a POMDP solver SARSOP. Each data point represents an average runtime over $500$ of $\prob$ instances. Shaded intervals represent $95\%$ bootstrap confidence intervals of the empirical average. Both algorithms stop when they reach a precision of $\varepsilon=10^{-6}$.\label{fig:sarsop}}
\end{figure*}

\paragraph{Runtime comparison}
Our model is novel, so there are no specifically tailored baselines. However, since it can be cast as a POMDP, we can compare Algorithm~\ref{bb-algorithm} with more general solvers. As a baseline, we have chosen SARSOP, a well-known offline point-based POMDP solver \cite{kurniawati2009sarsop}. While Algorithm~\ref{bb-algorithm} is straightforward, 
SARSOP is rather complex. It represents the optimal policy through $\alpha$-vectors (a convex piece-wise linear approximation of the value function) and clusters sampled beliefs to estimate the values of new ones. We used an open-source implementation of SARSOP,\footnote{\url{https://github.com/AdaCompNUS/sarsop}} and, for a fair comparison, implemented Algorithm~\ref{bb-algorithm} in the same language. 

Figure~\ref{fig:sarsop} presents the runtime comparison between Algorithm~\ref{bb-algorithm} and SARSOP. The statistical tests that support the comparison are deferred to \apxref{sec:auxiliary-details-about-the-experiments}. While Algorithm~\ref{bb-algorithm} dominates SARSOP on rectangular problems with a few categories (Figure~\ref{fig:types_time}) and overperforms SARSOP on square problems (Figure~\ref{fig:actions_types_time}), it underperforms when the number of categories is much higher than the number of user types (Figure~\ref{fig:actions_time}). 

We hypothesize that this result is due to SARSOP more effectively dealing with similar categories (similar associated rows in the matrix $\bm P$) through the $\alpha$-vector representation and clustering heuristic. Another explanation is that Algorithm~\ref{bb-algorithm} explores the policy space; hence, the branching factor is the number of categories. In contrast, SARSOP explores the belief space, whose dimension is the number of types. Consequently, we could expect SARSOP to struggle in cases with many user types and Algorithm~\ref{bb-algorithm} to encounter challenges in cases with many categories. Overall, each algorithm excels under different conditions.

\section{Experiments with Real-World Data} 
Beyond the simulations in Section~\ref{sec:experiments}, we also conducted experiments with the Movielens 1M dataset~\cite{harper2015movielens}. To demonstrate the applicability of our approach, we outline below how we transform sparse rating matrices into the required model parameters.

Real-world RS datasets typically consist of a sparse user-item rating matrix, where observed entries represent user ratings for items, with the majority of entries being unobserved. This representation differs from our model's requirements of a dense probability matrix between user types and content categories, accompanied by a prior distribution over user types; hence, our model cannot be applied directly, and the following two transformations are required.

First, the sparse rating matrix must be aggregated into a concise representation through clustering of users and items. This is a well-studied task and several solutions have been proposed in the literature, such as Spectral co-clustering~\cite{coclustering} or DBSCAN~\cite{dbscan}. Second, the clustered data must be transformed into model parameters. One straightforward way to achieve this is to construct the preference matrix by computing mean ratings within cluster pairs and normalizing to $[0,1]$, and deriving the prior distribution from cluster sizes. We defer this analysis to \apxref{sec:movielens}. Our empirical investigation using this real-world dataset substantiates the qualitative patterns observed in Section~\ref{sec:experiments}. Importantly, we stress that the qualitative results we obtain in this section extend beyond the synthetic setup to real-world datasets.


\section{Discussion and Future Work}\label{sec:disc}
We have introduced a model that captures two intertwined challenges: Aggregated user information and the risk of churn. We analyzed the belief walks and showed that the optimal policy must eventually act greedily. We have proposed a lower and upper bound on the optimal social welfare and demonstrated that our B\&B algorithm performs comparably to a state-of-the-art baseline.

We see considerable scope for future work. First, despite the nontrivial analysis, we still lack either a provably optimal polynomial-time algorithm or a formal proof of hardness. Second, future work could model more complex interactions where, e.g., users can dislike a category but continue the user session. In such a case, we can still apply Bayesian updates, but there is no clear notion of belief walks. This forms a technical challenge that is beyond the scope of our current paper (see \apxref{sec:model-extensions} for an elaborated discussion). 
Third, another challenge could stem from non-stationary preferences, namely if the matrix $\bm P$ changes over time or depends on the history of recommendations in the current session. 
Exploring these directions will further reveal how to effectively navigate the balance between exploration and exploitation in the face of aggregated user information.



\section*{Acknowledgements}
This research was supported by the Israel Science Foundation (ISF; Grant No. 3079/24). We thank Dmitry Ivanov for his helpful comments and assistance with the experimental part.

\bibliography{main}

\ifnum\Includeappendix=1{
\appendix
\onecolumn

\section{Showcasing Belief Walks under Various Policies} \label{sec:belief walks}
In this section, we illustrate the structure of belief walks under varying policies, showcasing the different behaviors and trade-offs that arise in the context of belief walks. Each graph corresponds to a different instance with three categories and three user types ($\abs{M} = \abs{K} = 3$). We provide the full details of the instances below. In each graph, we plot the belief walk induced by three policies. Each belief in the 3-dimensional simplex is characterized by a probability for the first type $\bm b(m_1)$, given in the x-axis, probability in the second type $\bm b(m_2)$, given in the y-axis, and probability for the third type $\bm b(m_3)$, given implicitly by $\bm b(m_3) = 1 -\bm b(m_1) - \bm b(m_2)$. The plotted policies are:
\begin{enumerate}
\item The optimal policy, calculated using Algorithm~\ref{bb-algorithm}.
\item The best fixed-action policy, which always recommends the category that maximizes the value function over all single-action policies w.r.t the prior. Put formally - \[ \pi^f = \left( \tilde k \right)_{t=1}^{\infty}, \quad \tilde k = \argmax_{k \in K} \sum_{m \in M} \bm{q}(m) \cdot \frac{\bm{P}(k, m)}{1 - \bm{P}(k, m)}. \]
For the rest of the section, we abbreviate "best fixed-action policy" to "BFA policy".
\item The myopic policy, which always recommends the category that yields the highest immediate reward in the current belief and then updates its belief afterward. Formally, \[ \pi^m = \left( k_t \right)_{t=1}^{\infty}: k_i = \argmax_{k \in K} \sum_{m \in M} \bm{b_i}(m) \cdot \bm{P}(k, m), \ \ \bm b_1 = \bm q, \ \ \forall i > 1: \bm b_i = \tau(\bm b_{i-1}, k_{i-1}). \]
\end{enumerate}

Each belief in the belief path produced by the optimal policy is assigned a numerical value based on its sequence position. The plots display these values as small black numbers beside some arrows. Each arrow signifies a Bayesian update followed by a recommendation, with the arrow's direction indicating the movement from the previous belief to the updated belief. The black dot depicts the initial prior $\bm q$ over the user type, marking the starting point for any plotted belief walk.

\begin{figure}
\centering
\begin{subfigure}{0.45\textwidth}
    \includegraphics[width=\linewidth]{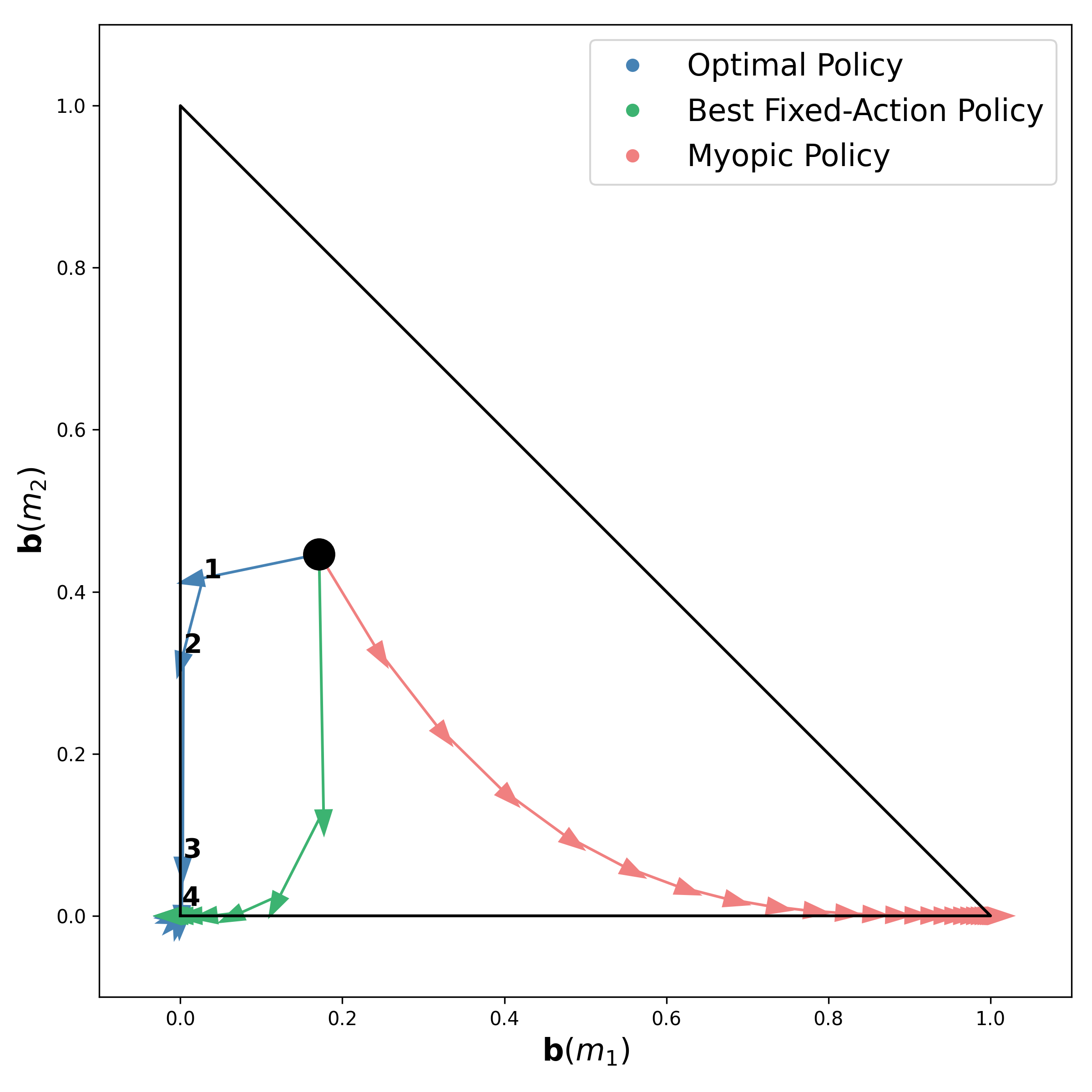}
    \caption{Figure accompanying Example~\ref{example:1} - the optimal policy avoids myopic behavior.}
    \label{fig:belief-walk1}
\end{subfigure}
\hfill
\begin{subfigure}{0.45\textwidth}
    \includegraphics[width=\linewidth]{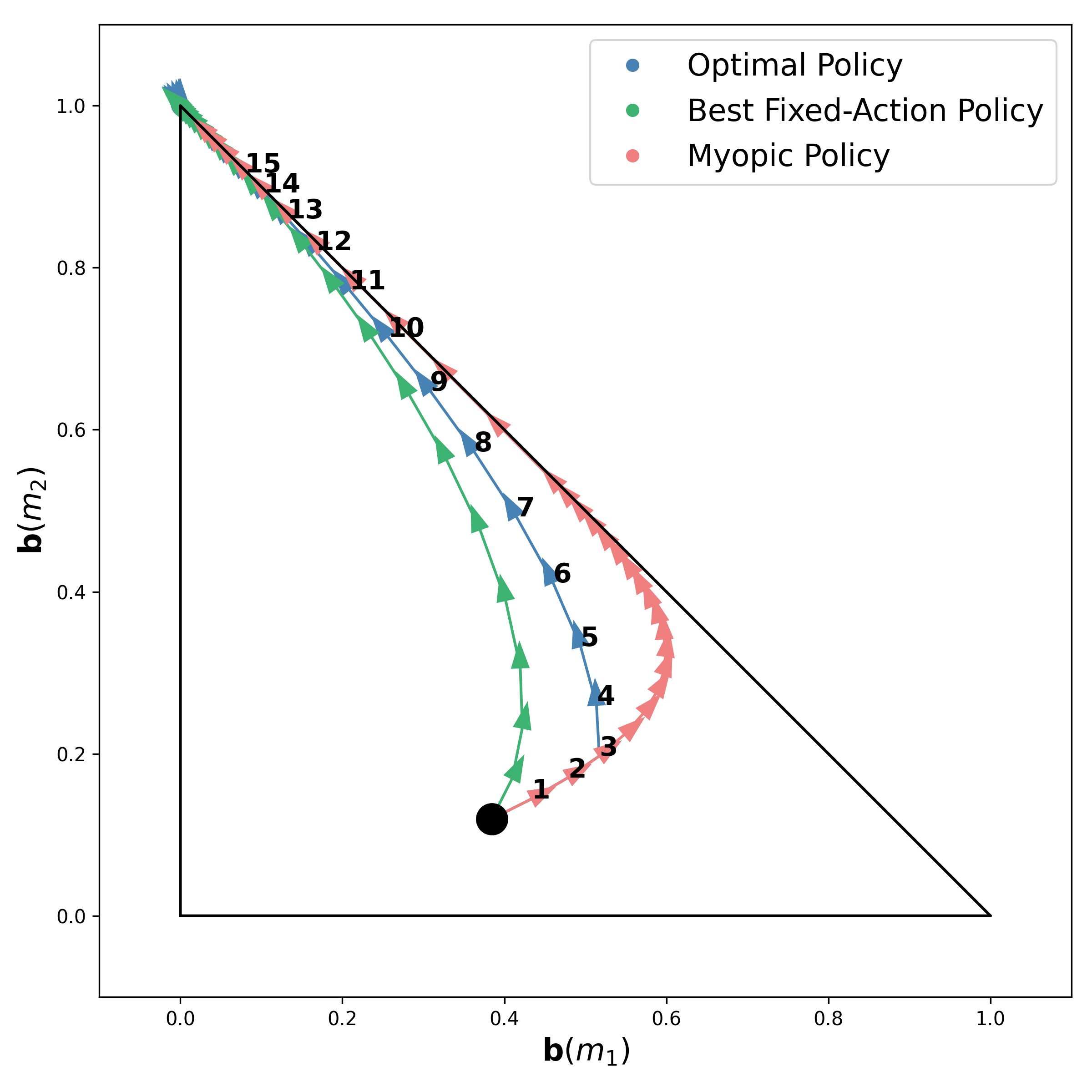}
    \caption{Figure accompanying Example~\ref{example:2} - all policies converge to the same vertex.}
    \label{fig:belief-walk2}
\end{subfigure}

\vspace{1em}

\begin{subfigure}{0.45\textwidth}
    \includegraphics[width=\linewidth]{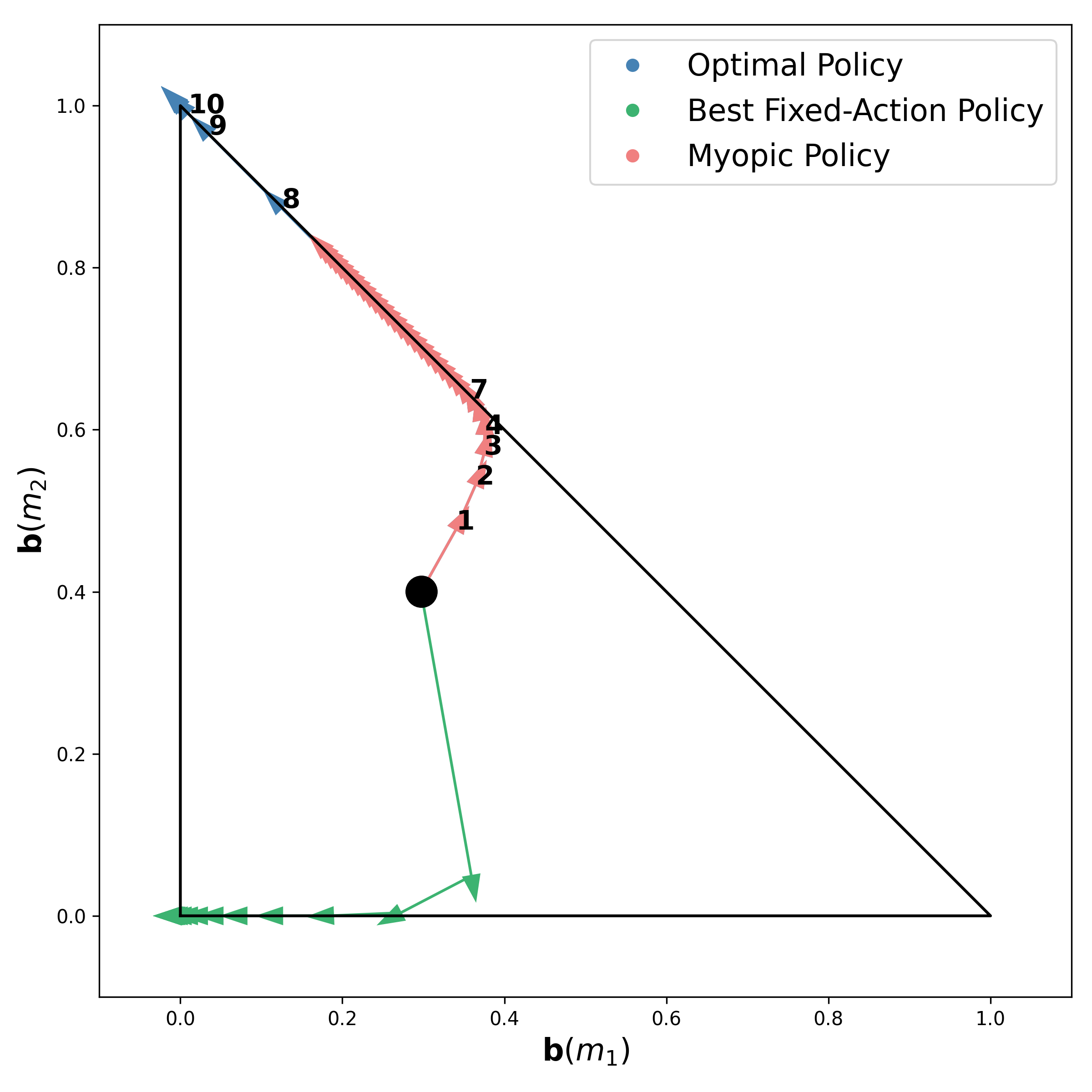}
    \caption{Figure accompanying Example~\ref{example:3} - the optimal policy diverges from the BFA policy prefix and forsakes the myopic policy for faster convergence.}
    \label{fig:belief-walk3}
\end{subfigure}
\hfill
\begin{subfigure}{0.45\textwidth}
    \includegraphics[width=\linewidth]{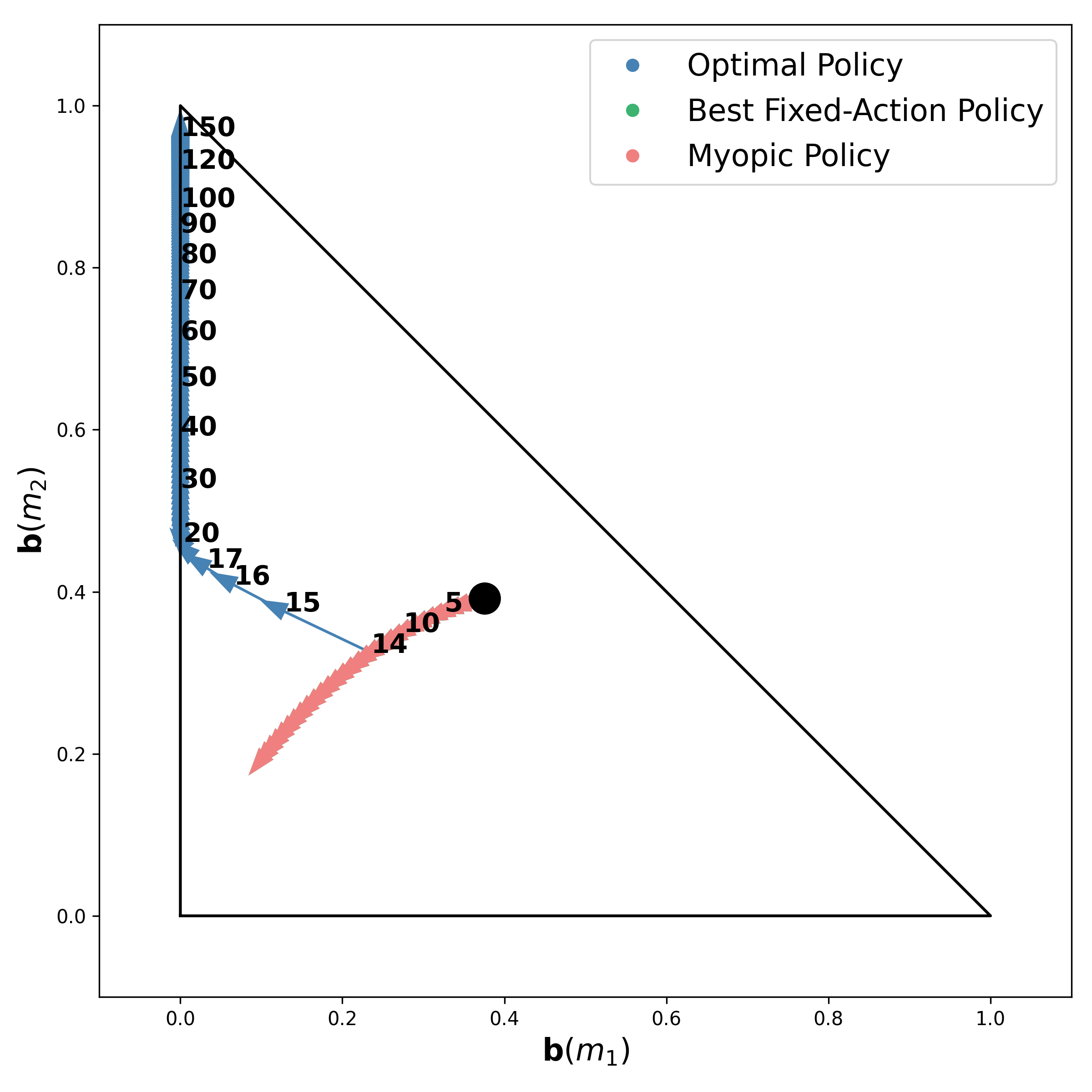}
    \caption{Figure accompanying Example~\ref{example:4} - there is a decrease followed by an increase in the belief of the type the belief walk converges to.}
    \label{fig:belief-walk4}
\end{subfigure}
\caption{Belief walks under different policies}
\label{fig:belief-walks}
\end{figure}

\begin{example} \label{example:1}
The $\prob$ instance in Figure~\ref{fig:belief-walk1} is $\langle K = \{ k_1, k_2, k_3 \},\ M = \{ m_1, m_2, m_3 \},\ \bm{q}_1,\ \bm{P}_1 \rangle$, where:
\begin{itemize}
\item $\bm{q}_1 = \begin{pmatrix} 0.1713 \\ 0.4465 \\ 0.3822 \end{pmatrix}$
\item $\bm{P}_1 = \begin{pmatrix} 0.8611 & 0.4591 & 0.6862 \\ 0.0969 & 0.5531 & 0.8604 \\ 0.5055 & 0.1430 & 0.8879 \\ \end{pmatrix}$
\end{itemize}
In this instance, the optimal and BFA policies converge to the user type $m_3$, whereas the myopic policy converges to $m_1$. This observation underlines the central idea of Proposition~\ref{prop:myopic-policy-suboptimality}, which asserts that the myopic policy may overlook future rewards, sometimes leading to convergence to a suboptimal user type. Even though the optimal and BFA policies reach the same user type, their pathways differ. This underscores the importance of meticulous initial exploration, as the sequence of recommendations before convergence significantly influences the overall rewards.
\end{example}

\begin{example} \label{example:2}
The $\prob$ instance in Figure~\ref{fig:belief-walk2} is $\langle K = \{ k_1, k_2, k_3 \},\ M = \{ m_1, m_2, m_3 \},\ \bm{q}_2,\ \bm{P}_2 \rangle$, where:
\begin{itemize}
\item $\bm{q}_2 = \begin{pmatrix} 0.3844 \\ 0.1197 \\ 0.4959 \end{pmatrix}$
\item $\bm{P}_2 = \begin{pmatrix} 0.6848 & 0.9100 & 0.5457 \\ 0.7741 & 0.8284 & 0.5833 \\ 0.1931 & 0.9127 & 0.5273 \\ \end{pmatrix}$
\end{itemize}
In this instance, all three policies converge to $m_2$, albeit through different paths. Initially, for the first three recommendations, the optimal policy aligns with the myopic policy. Later, it diverges to secure higher future rewards, considering that it might get lower immediate rewards.
\end{example}

\begin{example} \label{example:3}
The $\prob$ instance in Figure~\ref{fig:belief-walk3} is $\langle K = \{ k_1, k_2, k_3 \},\ M = \{ m_1, m_2, m_3 \},\ \bm{q}_3,\ \bm{P}_3 \rangle$, where:
\begin{itemize}
\item $\bm{q}_3 = \begin{pmatrix} 0.2972 \\ 0.4001 \\ 0.3027 \end{pmatrix}$
\item $\bm{P}_3 = \begin{pmatrix} 0.5492 & 0.0560 & 0.8878 \\ 0.2195 & 0.8576 & 0.2072 \\ 0.7674 & 0.7992 & 0.4051 \\ \end{pmatrix}$
\end{itemize}
In this instance, the BFA policy deviates from optimal and myopic policies early. Initially, these two policies align in their recommendations until the optimal policy transitions from belief (7) to belief (8), as shown in the graph. Although this recommendation provides a lower immediate reward than the myopic recommendation, it leads to a belief state much closer to $m_2$ than the updated belief after the myopic recommendation. This emphasizes the long-term rewards in this scenario, highlighting the common trade-off between short-term and long-term goals in sequential decision-making.
\end{example}

\begin{example} \label{example:4}
The $\prob$ instance in Figure~\ref{fig:belief-walk4} is $\langle K = \{ k_1, k_2, k_3 \},\ M = \{ m_1, m_2, m_3 \},\ \bm{q}_4,\ \bm{P}_4 \rangle$, where:
\begin{itemize}
\item $\bm{q}_4 = \begin{pmatrix} 0.3755 \\ 0.3921 \\ 0.2324 \end{pmatrix}$
\item $\bm{P}_4 = \begin{pmatrix} 0.4011 & 0.8521 & 0.8301 \\ 0.7683 & 0.7837 & 0.8314 \\ 0.7674 & 0.7832 & 0.4051 \\ \end{pmatrix}$
\end{itemize}
In this instance, the BFA and myopic policies coincide and act optimally until the 14th step, where they both differ from the optimal policy. This figure is an example of the phenomenon described in Section~\ref{sec:experiments}—in the first 14 steps, the entry in the belief walk that corresponds to user type $m_2$ decreases, and one could expect the belief walk to approach a different vertex. Then, in step 15, as the optimal policy diverges from the other two policies, we see a sudden increase in $\bm{b}(m_2)$, which persists throughout the remainder of the belief walk. In conclusion, it is possible to see a decrease before the increase in the certainty of the user type to which the belief walk finally converges.
\end{example}

\section{Omitted Proofs from Section \ref{sec:problem-definition}}
\label{sec:problem-definition-proofs}

\begin{proof}[\normalfont\bfseries Proof of Lemma~\ref{lemma:closed-form-representations-of-the-value-function}]
    The first expression can be derived through Observation~\ref{obs:recursive-formula-of-the-value-function}.
    First we prove via induction that for every $n \in \mathbb{N}$ the value function can be represented as $V^{\pi}(\bm{b}) = \sum_{t=1}^{n} \prod_{j=1}^{t} p_{\pi_{j}}(\bm{b}^{\pi, \bm{b}}_j) + \prod_{t=1}^{n} p_{\pi_{t}}(\bm{b}^{\pi, \bm{b}}_t) \cdot V^{\pi[n+1:]}(\bm{b}^{\pi, \bm{b}}_{n+1})$. Then, as each $p_{\pi_{t}}(\bm{b}^{\pi, \bm{b}}_t)$ is a probability smaller than $1$ and the value function is bounded from above by $\frac{p_{\max}}{1-p_{\max}}$, as $n \to \infty$ the expression $\prod_{t=1}^{n} p_{\pi_{t}}(\bm{b}^{\pi, \bm{b}}_t) \cdot V^{\pi[n+1:]}(\bm{b}^{\pi, \bm{b}}_{n+1})$ will decay to $0$, and the value function will converge to the first expression.

    \textbf{Base case:} For $n=1$ it holds directly from Observation~\ref{obs:recursive-formula-of-the-value-function}.

    \textbf{Inductive step:} Assume that the statement holds for $n$, and prove it for $n+1$.
    \begin{align*}
        V^{\pi}(\bm{b}) & = p_{\pi_{1}}(\bm{b})(1 + V^{\pi[2:]}_n(\bm{b}^{\pi, \bm{b}}_{2}))    \\
        & = p_{\pi_{1}}(\bm{b}) + p_{\pi_{1}}(\bm{b}) \cdot \sum_{t=1}^{n} \prod_{j=1}^{t} p_{\pi_{j+1}}(\bm{b}^{\pi, \bm{b}}_{j+1}) + p_{\pi_{1}}(\bm{b}) \cdot \prod_{t=1}^{n} p_{\pi_{t+1}}(\bm{b}^{\pi, \bm{b}}_{t+1}) \cdot V^{\pi[n+1:]}(\bm{b}^{\pi, \bm{b}}_{n+1}) \\
        & = \sum_{t=1}^{n+1} \prod_{j=1}^{t} p_{\pi_{j}}(\bm{b}^{\pi, \bm{b}}_j) + \prod_{t=1}^{n+1} p_{\pi_{t}}(\bm{b}^{\pi, \bm{b}}_t) V^{\pi[n+1:]}(\bm{b}^{\pi, \bm{b}}_{n+1}).  
    \end{align*}
    Where the first equality is due to Observation~\ref{obs:recursive-formula-of-the-value-function} and the second equality follows from applying the induction hypothesis to $V^{\pi[2:]}_n(\bm{b}^{\pi, \bm{b}}_{2})$.

    For the second expression, we can calculate the expected social welfare via conditioning on the sampled user type.
    \[
    V^\pi(\bm b) = \sum_{m \in M}\bm b(m) \E{F(m,\pi)}.
    \]
    Notice that the expected utility from following policy $\pi$ for a user of type $m \in M$ is:
    \begin{align*}
        \E{F(m,\pi)} &= \sum_{t=1}^{\infty} \mathbb{P}(\text{User of type } m \text{ hadn't departed by time } t \text{ when following a policy } \pi) \\
        &\cdot \mathbb{P}(\text{User of type } m \text{ liked the recommendation in time } t) \\
        &= P(\pi_1, m) + \sum_{t=2}^{\infty} \prod_{j=1}^{t-1} \bm P(\pi_j, m) \cdot P(\pi_t, m) \\
        &= \sum_{t=1}^{\infty} \prod_{j=1}^{t} \bm{P}(\pi_j, m).
    \end{align*}
    Together, the expected social welfare can be calculated as
    \[
        V^\pi(\bm b) = \sum_{m} \bm{b}(m) \sum_{t=1}^{\infty} \prod_{j=1}^{t} \bm{P}(\pi_j, m).
    \]
This completes the proof of the lemma.
\end{proof}

\begin{proof}[\normalfont\bfseries Proof of Proposition~\ref{prop:myopic-policy-suboptimality}]
    Set some arbitrary $d \in \mathbb{R_+}$, and consider the following $\prob$ instance:
    \[
        \left\langle
        K = \{ k_1, k_2 \}, \quad M = \{ m_1, m_2 \},
        \bm q = \begin{pmatrix}
            0.5 \\
            0.5
        \end{pmatrix},
        \bm P = \begin{pmatrix}
            \frac{8d}{1+8d} & 0   \\
            0.8             & 0.8
        \end{pmatrix}
        \right\rangle.
    \]
    In this instance, the myopic policy will always choose $k_2$, as it has a higher immediate reward, and as $\bm{P}(k_2, m_1) = \bm{P}(k_2, m_2)$, the belief will stay the same after the Bayesian update.
    Calculating the expected number of likes for the myopic policy using the second expression in Lemma~\ref{lemma:closed-form-representations-of-the-value-function} yields:
    \[
        V^{\pi^m} = 0.5 \cdot \frac{0.8}{1-0.8} + 0.5 \cdot \frac{0.8}{1-0.8} = 4.
    \]
    Now, consider the policy $\pi^{k_1}$ that chooses $k_1$ indefinitely.
    Calculating the expected number of likes for this policy, we get
    \[
        V^{\pi^{k_1}} = 0.5 \cdot \frac{\frac{8d}{1+8d}}{1-\frac{8d}{1+8d}} + 0.5 \cdot 0 = 0.5 \cdot \frac{8d}{1+8d-8d} + 0.5 \cdot 0 = 4d.
    \]
    Because $\pi^\star$ is optimal we get that $V^\star \geq V^{\pi^{k_1}} = 4d = d \cdot V^{\pi^m}$, and that concludes the proof.
\end{proof}

\section{Omitted Proofs from Section \ref{sec:approximating-the-optimal-policy}}
\label{sec:approximating-the-optimal-policy-proofs}

\begin{proof}[\normalfont\bfseries Proof of Lemma~\ref{lemma:finite-horizon-approximation}]
  As can be seen in the proof Lemma~\ref{lemma:closed-form-representations-of-the-value-function}, The value function can be represented as
  \[
    V^{\pi}(\bm{b}) = \sum_{t=1}^{H(\varepsilon)} \prod_{j=1}^{t} p_{\pi_{j}}(\bm{b}^{\pi, \bm{b}}_j) + \prod_{t=1}^{H(\varepsilon)} p_{\pi_{t}}(\bm{b}^{\pi, \bm{b}}_t) \cdot V^{\pi[H(\varepsilon)+1:]}(\bm{b}^{\pi, \bm{b}}_{H(\varepsilon)+1}).
  \]
  Notice that for every $L \in \mathbb{N}$, $\prod_{t=1}^{L} p_{\pi_{t}}(\bm{b}^{\pi, \bm{b}}_t) \leq p_{\max}^L$. Additionally, as stated in Equation~\ref{eq:bounded-value}, for every belief $b$ and policy $\pi$ we have that $V^{\pi}(\bm{b}) \leq \frac{p_{\max}}{1-p_{\max}}$.
  Therefore:
  \[
    V^{\pi^\star}(\bm{b}) \leq \sum_{t=1}^{H(\varepsilon)} \prod_{j=1}^{t} p_{\pi^\star_{j}}(\bm{b}^{\pi^\star, b}_j) + p_{\max}^{H(\varepsilon)} \cdot \frac{p_{\max}}{1-p_{\max}} \leq \sum_{t=1}^{H(\varepsilon)} \prod_{j=1}^{t} p_{\pi^\star_{j}}(\bm{b}^{\pi, \bm{b}}_j) + \varepsilon = V_{H(\varepsilon)}^{\pi^\star}(\bm{b}) + \varepsilon \leq \max_{\pi} V_{H(\varepsilon)}^{\pi}(\bm{b}) + \varepsilon.
  \]
\end{proof}

\begin{lemma}
  \label{lemma:transitivity-of-belief-updates}
  Let $\langle M, K, \bm{q}, \bm{P} \rangle$ be an instance of the problem, a belief $\bm{b} \in \Delta(M)$, and 2 categories $k_1, k_2 \in K$. Then:
  \[
    \tau(\tau(\bm{b}, k_1), k_2) = \tau(\tau(\bm{b}, k_2), k_1)
  \]
\end{lemma}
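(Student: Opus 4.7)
The plan is to unfold the definition of the Bayesian update $\tau$ from Equation~\eqref{eq:bayesian update} twice and observe that the resulting expression is symmetric in $k_1$ and $k_2$. First, I would fix an arbitrary user type $m \in M$ and compute $\tau(\bm{b}, k_1)(m)$ directly from the formula, namely $\tau(\bm{b}, k_1)(m) = \frac{\bm{b}(m)\bm{P}(k_1, m)}{Z_1}$, where $Z_1 = \sum_{m' \in M} \bm{b}(m')\bm{P}(k_1, m')$ is the normalization constant.

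Then I would substitute $\tau(\bm{b}, k_1)$ in place of $\bm{b}$ inside a second application of $\tau$, giving
\[
\tau(\tau(\bm{b}, k_1), k_2)(m) = \frac{\tau(\bm{b}, k_1)(m)\,\bm{P}(k_2, m)}{\sum_{m' \in M} \tau(\bm{b}, k_1)(m')\,\bm{P}(k_2, m')}.
\]
The key observation is that the factor $1/Z_1$ appears in every term of both numerator and denominator, so it cancels, leaving
\[
\tau(\tau(\bm{b}, k_1), k_2)(m) = \frac{\bm{b}(m)\bm{P}(k_1, m)\bm{P}(k_2, m)}{\sum_{m' \in M} \bm{b}(m')\bm{P}(k_1, m')\bm{P}(k_2, m')}.
\]

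Performing the same computation with $k_1$ and $k_2$ interchanged yields the identical expression, since multiplication of $\bm{P}(k_1, m)\bm{P}(k_2, m)$ (and likewise in the sum in the denominator) is commutative. Since this holds for every $m \in M$, the two beliefs agree pointwise and therefore are equal as distributions in $\Delta(M)$.

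There is no real obstacle here; the only thing to be careful about is ensuring that the denominators $Z_1$ and the analogous $Z_2$ are nonzero so that the updates are well defined. This is implicit whenever $\tau$ is applied in the paper (otherwise the belief walk would not be well defined), so the manipulation is justified. The entire proof is a two-step unfolding followed by a symmetry observation and should be at most a few lines in the final write-up.
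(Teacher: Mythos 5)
Your proposal is correct and follows essentially the same route as the paper's proof: unfold the definition of $\tau$ twice, cancel the inner normalization constant to obtain the symmetric expression $\frac{\bm{b}(m)\bm{P}(k_1,m)\bm{P}(k_2,m)}{\sum_{m'}\bm{b}(m')\bm{P}(k_1,m')\bm{P}(k_2,m')}$, and conclude by commutativity of multiplication. No gaps; the remark about nonzero denominators is a reasonable side note that the paper leaves implicit as well.
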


\begin{proof}[\normalfont\bfseries Proof of Lemma~\ref{lemma:transitivity-of-belief-updates}]
 Using the definition of the Bayesian update, the following holds for every $m \in M$:
 \begin{align*}
   \tau(\tau(\mathbf{b}, k_1), k_2)(m) & = \frac{\mathbf{P}(k_2, m) \cdot \tau(\mathbf{b}, k_1)(m)}{\sum_{m' \in M} \mathbf{P}(k_2, m') \cdot \tau(\mathbf{b}, k_1)(m')} \\
   & = \frac{\mathbf{P}(k_2, m) \cdot \frac{\mathbf{P}(k_1, m) \cdot \mathbf{b}(m)}{\sum_{m' \in M} \mathbf{P}(k_1, m') \cdot \mathbf{b}(m')}}{\sum_{m' \in M} \mathbf{P}(k_2, m') \cdot \frac{\mathbf{P}(k_1, m') \cdot \mathbf{b}(m')}{\sum_{m'' \in M} \mathbf{P}(k_1, m'') \cdot \mathbf{b}(m'')}} \\
   & = \frac{\mathbf{P}(k_2, m) \cdot \mathbf{P}(k_1, m) \cdot \mathbf{b}(m)}{\sum_{m' \in M} \mathbf{P}(k_2, m') \cdot \mathbf{P}(k_1, m') \cdot \mathbf{b}(m')} \\
   & = \frac{\mathbf{P}(k_1, m) \cdot \mathbf{P}(k_2, m) \cdot \mathbf{b}(m)}{\sum_{m' \in M} \mathbf{P}(k_1, m') \cdot \mathbf{P}(k_2, m') \cdot \mathbf{b}(m')} \\
   & = \frac{\mathbf{P}(k_1, m) \cdot \frac{\mathbf{P}(k_2, m) \cdot \mathbf{b}(m)}{\sum_{m' \in M} \mathbf{P}(k_2, m') \cdot \mathbf{b}(m')}}{\sum_{m' \in M} \mathbf{P}(k_1, m') \cdot \frac{\mathbf{P}(k_2, m') \cdot \mathbf{b}(m')}{\sum_{m'' \in M} \mathbf{P}(k_2, m'') \cdot \mathbf{b}(m'')}} \\
   & = \frac{\mathbf{P}(k_1, m) \cdot \tau(\mathbf{b}, k_2)(m)}{\sum_{m' \in M} \mathbf{P}(k_1, m') \cdot \tau(\mathbf{b}, k_2)(m')} \\
   & = \tau(\tau(\mathbf{b}, k_2), k_1)(m).
 \end{align*}
This completes the proof of the lemma.
\end{proof}

\begin{algorithm}
   \caption{Dynamic Programming Approximation \label{alg:backward-induction-approximation}}
   \begin{algorithmic}[1]
     \REQUIRE An instance of the problem $\langle M, K, \bm{q}, \bm{P} \rangle$, a horizon $H \in \mathbb{N}$, and an error term $\varepsilon$.
     \ENSURE $BT(q, 0)$ - an approximation of the optimal policy $\pi$.
     \STATE Initialize dicts $V: \Delta(M) \times \{0, 1, \ldots, H\} \to \mathbb{R}$, $BT: \Delta(M) \times \{0, 1, \ldots, H\} \to \bigcup_{i=1}^{H} K^i$.
     \STATE Initialize $V(b, H+1) = 0$ for every $\bm{b} \in \Delta(M) \text{ s.t } \exists \{ n_i \}_{i=1}^{\abs{K}}: \sum_{i=1}^{\abs{K}} n_i = H+1, \tau(q, k_1^{n_1}, k_2^{n_2}, \ldots, k_{\abs{K}}^{n_{\abs{K}}}) = b$.
     \STATE Initialize $BT(b, H+1) = \emptyset$ for every $\bm{b} \in \Delta(M) \text{ s.t } \exists \{ n_i \}_{i=1}^{\abs{K}}: \sum_{i=1}^{\abs{K}} n_i = H+1, \tau(q, k_1^{n_1}, k_2^{n_2}, \ldots, k_{\abs{K}}^{n_{\abs{K}}}) = b$.
     \FOR{$h = H, ..., 1$}
     \FOR{every $\bm{b} \in \Delta(M) \text{ s.t } \exists \{ n_i \}_{i=1}^{\abs{K}}: \sum_{i=1}^{\abs{K}} n_i = h, \tau(q, k_1^{n_1}, k_2^{n_2}, \ldots, k_{\abs{K}}^{n_{\abs{K}}}) = b$}
     \STATE $V(b, h) = \max_{j \in [K]: n_j > 0} \left\{ p_k \cdot \left( 1 + V(\tau(\bm{b}, k_1^{n_1}, k_2^{n_2}, \ldots, k_j^{n_j-1}, \ldots, k_{\abs{K}}^{n_{\abs{K}}}), h+1) \right) \right\}$.
     \STATE $BT(b, h) = \argmax_{j \in [K]: n_j > 0} \left\{ p_k \cdot \left( 1 + V(\tau(\bm{b}, k_1^{n_1}, k_2^{n_2}, \ldots, k_j^{n_j-1}, \ldots, k_{\abs{K}}^{n_{\abs{K}}}), h+1) \right) \right\} \oplus BT(\tau(\bm{b}, k_1^{n_1}, k_2^{n_2}, \ldots, k_j^{n_j-1}, \ldots, k_{\abs{K}}^{n_{\abs{K}}}), h+1)$.
     \ENDFOR
     \ENDFOR
     \STATE $V(q, 0) = \max_{j \in [K]} \left\{ p_k \cdot \left( 1 + V(\tau(q, k_j), 1) \right) \right\}$.
     \STATE $BT(q, 0) = \argmax_{j \in [K]} \left\{ p_k \cdot \left( 1 + V(\tau(q, k_j), 1) \right) \right\} \oplus BT(\tau(q, k_j), 1)$.
   \end{algorithmic}
\end{algorithm}
  
\begin{proof}[\normalfont\bfseries Proof of Proposition~\ref{prop:backward-induction-approximation}]
  We begin by mentioning two conclusions of Lemma~\ref{lemma:transitivity-of-belief-updates}:
  \begin{enumerate}
    \item The order in which we recommend items only affects the accumulated reward, not the belief the system transitioned into.
    \item The number of beliefs that can be reached after $L$ recommendations is at most ${ L + \abs{K} - 1 \choose \abs{K} - 1}$, as the only thing that effects the belief is the number of times each item was recommended (the multiset of categories rather than the sequence).
  \end{enumerate}

  Following those conclusions, we can define a dynamic programming algorithm for solving the finite horizon recommendation problem and analyze its complexity.
  
  For ease of notation, we will denote the composition of $L$ belief updates with the categories \\
  $k_1^1, \ldots, k_1^{n_1}, k_2^1, \ldots, k_2^{n_2}, \ldots, k_{\abs{K}}^1, \ldots, k_{\abs{K}}^{n_{\abs{K}}}$ (where $n_1 + n_2 + \ldots + n_{\abs{K}} = L$) as $\tau(\bm{b}, k_1^{n_1}, k_2^{n_2}, \ldots, k_{\abs{K}}^{n_{\abs{K}}})$. Remember that the order of the categories does not affect the updated belief, so this is well-defined.

  Algorithm~\ref{alg:backward-induction-approximation} takes a bottom-up approach - it starts by calculating the (low-depth) optimal value and policy for every belief that can be encountered at the most advanced layers and then uses Observation \ref{obs:recursive-formula-of-the-value-function} to calculate the optimal values and policies of every belief in the above layer. The process concludes upon arriving at layer $0$ (the root), where the optimal policy, which the algorithm aims to determine, is found. This methodology represents a significant enhancement over the brute-force approach, as it efficiently eliminates sub-optimal prefixes (those dominated by the identical set of actions arranged differently) in early stages, thereby avoiding unnecessary search operations.

  To establish that this algorithm achieves the optimal $H$-horizon value, we proceed by backward induction. Let $V^*_h(\bm{b})$ denote the optimal value achievable from belief $\bm{b}$ with exactly $h$ steps remaining. For the base case $h=0$, we have $V^*_0(\bm{b}) = 0$ for all beliefs $\bm{b}$, which is correctly initialized in the algorithm. For $h > 0$, the Bellman optimality equation gives us:
  
  \[
    V^*_h(\bm{b}) = \max_{k \in K} \left\{ p_k(\bm{b}) \cdot (1 + V^*_{h-1}(\tau(\bm{b}, k))) \right\}.
  \]

  Our algorithm computes precisely these values through its dynamic programming updates. Furthermore, for any belief $\bm{b}$ that can be reached after $t$ recommendations, the algorithm maintains the invariant that $V(\bm{b}, H-t)$ equals $V^*_{H-t}(\bm{b})$. This holds because:
  \begin{enumerate}
    \item The algorithm considers all possible categories at each step.
    \item It uses the correct Bayesian update rule for belief transitions.
    \item It applies the precise recursive formulation from Observation \ref{obs:recursive-formula-of-the-value-function}.
  \end{enumerate}

  Consequently, when the algorithm terminates, $V(\bm{q}, H)$ equals $V^*_H(\bm{q})$, yielding the optimal $H$-horizon value and its corresponding policy.

  Notice that the complexity of depth $h$ in this algorithm is $O\left( { h + \abs{K} - 1 \choose \abs{K} - 1} \cdot \abs{K} \cdot \abs{M} \right)$, as for every belief $b$ that can be reached after $h$ recommendations, we take a maximum over at most $\abs{K}$ categories, and for each category, we calculate the value of the next belief, which is a linear operation in $\abs{M}$.
  Therefore, the total complexity of this algorithm is $O\left( (H + \abs{K})^{\abs{K}} \cdot \abs{K} \cdot \abs{M} \right)$.
\end{proof}



\begin{proof}[\normalfont\bfseries Proof of Proposition~\ref{prop:discretization-approximation}]
  First, we note that the finite horizon version of our problem can be represented as a POMDP with a finite horizon, where the horizon is the maximal number of recommendations until the session terminates.
  Furthermore, the following build transforms our model, which doesn't have an explicit discount factor, into a POMDP with an explicit discount factor. The main trick here is to use $p_{\max} < 1$ as the discount factor and increase the probabilities in $\bm P$ by a factor of $\frac{1}{p_{\max}}$.
  
  There is a need for distinction between the first round, in which the reward isn't discounted, and the rest of the rounds, in which the rewards are discounted. We do that by introducing $2$ states, $m_s, m_f$, for every user type $m$. This distinction is only to handle the discounting - we still don't allow a change in user type. This will translate to the POMDP transitions model where we allow transitions only from $m_s$ to $m_f$ and to an absorbing state, and then transitions only from $m_f$ to $m_f$ and the absorbing state. For example, one cannot transition between $m_f$ to $m'_f$ for user types $m \neq m'$. 
  
  The observations in this POMDP model will be redundant - as long as the system hasn't transitioned into the absorbing state (the user is still in the system), the received observation will be "like." Otherwise, the received observation will be "dislike".

  Now we will formally describe this transformation. Given a $\prob$ instance $\langle M, K, \bm{q}, \bm{P} \rangle$, construct the following POMDP:
  \begin{itemize}
    \item The states are $\{ m_{s} \}_{m \in M} \cup \{ m_{f} \}_{m \in M} \cup \{ \bot \}$. The distinction between $m_{s}$ and $m_{f}$ is used to differentiate between the first round, in which the reward isn't discounted, and the rest of the rounds, in which the rewards are discounted.
    \item The actions are $K$.
    \item The transition function is defined as follows:
          \begin{itemize}
            \item $\bm{P}(k, m_{s}, m_{f}) = \bm{P}(k, m)$ for every $m \in M, k \in K$.
            \item $\bm{P}(k, m_{s}, \bot) = 1 - \bm{P}(k, m)$ for every $m \in M, k \in K$.
            \item $\bm{P}(k, m_{f}, m_{f}) = \frac{\bm{P}(k, m)}{p_{\max}}$ for every $m \in M, k \in K$.
            \item $\bm{P}(k, m_{f}, \bot) = 1 - \frac{\bm{P}(k, m)}{p_{\max}}$ for every $m \in M, k \in K$.
            \item $\bm{P}(k, \bot, \bot) = 1$ for every $k \in K$.
          \end{itemize}
    \item The reward function is defined as follows:
          \begin{itemize}
            \item $r(k, m_{s}, m_{f}) = 1$ for every $m \in M, k \in K$.
            \item $r(k, m_{s}, \bot) = 0$ for every $m \in M, k \in K$.
            \item $r(k, m_{f}, m_{f}) = 1$ for every $m \in M, k \in K$.
            \item $r(k, m_{f}, \bot) = 0$ for every $m \in M, k \in K$.
            \item $r(k, \bot, \bot) = 0$ for every $k \in K$.
          \end{itemize}
    \item The discount factor is $\gamma = p_{\max}$.
    \item The observations are $"like"$ and $"dislike"$.
    \item The observation model is defined as:
          \begin{itemize}
            \item $\bm{P}("like" | k, m_{s}, m_{f}) = 1$ for every $m \in M, k \in K$.
            \item $\bm{P}("dislike" | k, m_{s}, \bot) = 1$ for every $m \in M, k \in K$.
            \item $\bm{P}("like" | k, m_{f}, m_{f}) = 1$ for every $m \in M, k \in K$.
            \item $\bm{P}("dislike" | k, m_{f}, \bot) = 1$ for every $m \in M, k \in K$.
            \item $\bm{P}("dislike" | k, \bot, \bot) = 1$ for every $k \in K$.
          \end{itemize}
    \item The initial belief is $\forall m \in M: b(m_s) = \bm{q}(s), b(m_f) = 0, b(\bot)=0$.
  \end{itemize}

  As in the $\prob$ model, the "dislike" observation ends the episode and the reward is accumulated only for the "like" observations.

  While the constructed POMDP operates over a belief simplex of dimension $2|M|+1$ (accounting for both initial and non-initial states, plus the absorbing state), the reachable belief space is substantially more constrained. By construction, beliefs cannot simultaneously assign non-zero probability to both initial and non-initial variants of any user type, and the absorbing state, once reached, becomes a singular point in the belief space. Consequently, the effective dimensionality of the reachable belief space remains comparable to that of an $|M|$-dimensional simplex, preserving the theoretical guarantees without incurring additional computational complexity from the expanded state space.

  To show the equivalency between the optimal policy in our model and the one in the above POMDP, we will show that both the expected cumulative reward and the updated belief after performing the same sequence of actions are the same. This will be sufficient as those are the components that uniquely define optimal policy behavior.

  Let $k_1, \ldots, k_n$ be a set of actions. In our model, the expected accumulated reward after performing these actions is
  \[
    V = \sum_{m \in M} b(m) \cdot \sum_{t=1}^{n} \prod_{j=1}^{t} P(k_t, m).
  \]
  In the discussed POMDP model, the value will be as follows:
  \begin{align*}
    V & = \sum_{m \in M} \bm b(m) \cdot ( \bm P(k_1, m) + \bm P(k_1, m) \cdot p_{\max} \cdot \frac{\bm P(k_2, m)}{p_{\max}} + \bm P(k_1, m) \cdot \frac{\bm P(k_2, m)}{p_{\max}} \cdot p_{\max}^2 \cdot \frac{\bm P(k_3, m)}{p_{\max}} + \ldots \\
      & + \bm P(k_1, m) \cdot \frac{\bm P(k_2, m)}{p_{\max}} \cdot \frac{\bm P(k_3, m)}{p_{\max}} \cdot \ldots \cdot \frac{\bm P(k_{n-1}, m)}{p_{\max}} \cdot p_{\max}^{n-1} \cdot \frac{\bm P(k_n, m)}{p_{\max}} )      \\
      & =  \sum_{m \in M}\bm  b(m) \cdot \sum_{t=1}^{n} \prod_{j=1}^{t} \bm P(k_t, m).  \\
  \end{align*}

  Regarding the updated belief, according to the Bayesian update and the likelihood that the user will like all these $n$ categories, the updated belief will be:
  \[
    \forall m \in M: \bm b'(m) = \frac{\bm b(m) \cdot \prod_{t=1}^{n} \bm P(k_t, m)}{\sum_{m' \in M} \bm b(m') \cdot \prod_{t=1}^{n} \bm P(k_t, m')}.
  \]

  In the discussed POMDP model, for similar reasons, the updated belief will be:
  \[
    \forall m \in M: \bm b'(m) = \frac{\bm b(m) \cdot \bm P(k_1, m) \cdot \prod_{t=2}^{n} \frac{\bm P(k_t, m)}{p_{\max}}}{\sum_{m' \in M} \bm b(m') \cdot \bm P(k_1, m') \cdot \prod_{t=2}^{n} \frac{\bm P(k_t, m')}{p_{\max}}} = \frac{\bm b(m) \cdot \prod_{t=1}^{n} \bm P(k_t, m)}{\sum_{m' \in M} \bm b(m') \cdot \prod_{t=1}^{n} \bm P(k_t, m')}.
  \]

  In both models, when receiving "dislike" feedback, the user leaves the system, and there is a full certainty that the system transitioned into the absorbing state. This is not relevant for policy equivalence, as all of the policies guarantee the same reward of $0$ when they are in the absorbing state.

  After being convinced of the equivalency of the two models, we can use the POMDPs literature to our advantage.
  Specifically, we use the PBVI algorithm presented in \citet{point-based-value-iteration-an-anytime-algorithm-for-pomdps} to solve the POMDP we constructed in the desired complexity.
  Note that in our constructed POMDP the discount factor is less than 1, so the guarantees of the PBVI algorithm hold.

    
  \paragraph{Complexity} Let $B \subset \Delta(M)$ be an arbitrarily chosen finite subset of the belief simplex. While this discretization is arbitrary, its density—characterized by $\delta = \max_{b' \in \Delta(M)} \min_{b \in B} \|\bm{b} - \bm{b}'\|_1$—determines the approximation quality of our solution. Intuitively, $\delta$ measures how well $B$ covers the belief simplex by quantifying the maximal distance from any belief to its nearest neighbor in $B$.

  The algorithm exhibits a computational complexity of $O(\abs{B}^2 \cdot \abs{K} \cdot \abs{M})$ per iteration, attributable to the evaluation of belief-to-belief transitions under different actions. Across $H$ iterations, this yields a total complexity of $O(H \cdot \abs{B}^2 \cdot \abs{K} \cdot \abs{M})$.

  When $B$ is sufficiently dense—that is, when $\delta$ is adequately small—we can establish rigorous error bounds on our solution. Specifically, the policy returned by the algorithm is guaranteed to be $\frac{R_{\max} \delta}{(1 - \gamma)^2}$-close to the optimal policy of the finite horizon problem, where $R_{\max}$ represents the maximal immediate reward in the POMDP and $\gamma$ denotes the discount factor. For our particular POMDP formulation, selecting $\delta = \varepsilon \cdot (1 - p_{\max})^2$ ensures the algorithm's output remains within $\varepsilon$ of the optimal value.

  We use a well-known result in measure theory for our advantage - the covering number of the high-dimensional simplex is $O\left( \left( \frac{1}{\delta} \right)^{\abs{M} - 1} \right)$. Therefore, one can construct a $\delta$-discretization of the belief simplex by using $O\left( \left( \frac{1}{\delta} \right)^{\abs{M} - 1} \right)$ beliefs.
  
  All in all, the complexity of the algorithm is $O\left( H \cdot \left( \frac{1}{\varepsilon \cdot (1 - p_{\max})^2} \right)^{2\abs{M} - 2} \cdot \abs{K} \cdot \abs{M} \right)$.
This completes the proof of the proposition.
\end{proof}

\section{Full Proof and Details of Theorem \ref{thm:convergence}}
\label{sec:convergence-of-the-optimal-policy-proofs}
In this section, we present a complete proof of Theorem~\ref{thm:convergence}, which establishes the convergence of optimal policies after a finite number of rounds. The proof is structured in three parts. First, we introduce and analyze a quantity $c$ that captures the complexity of a given instance. Then, we prove the main theorem using several auxiliary results that characterize the behavior of belief walks and optimal policies. Finally, we provide detailed proof of these auxiliary claims. We conclude the section with an illustrative example showing how optimal belief walks can behave near vertices. Specifically, the example demonstrates why we need to partition vertices into two distinct groups - those that act as stable convergence points and those that can only be temporary stops in the belief walk.

\subsection{Introducing $c$}

Before proving Theorem~\ref{thm:convergence}, we introduce the quantity $c = c(\bm q, \bm P)$. To that end, we define five helpful quantities $c_1, \ldots, c_4$ and use them to define $c$.

\begin{enumerate}
    \item $c_1 = 1 - p_{\max}$ - reflects the maximal amount any user type can like any category and can be used to represent the highest expected social welfare in any belief on the simplex.
    \item $c_2 = \min_{k \in K, m, m' \in M, m \neq m'} \{ \abs{\bm{P}(k, m) - \bm{P}(k, m') } \}$ - can be thought of as a measure of heterogeneity between different user types, as it quantifies the difference in how different user types like the same category.
    \item $c_3 = \min_{m \in M} \{ \max_{k} \bm{P}(k, m)  - \max_{k' \neq k} \bm{P}(k', m) \}$ - quantifies the gap between the favorite and second favorite categories for any user type. It affects the complexity of the recommendation - if this gap is small, it might necessitate more exploration before transitioning to myopic recommendation.
    \item $c_4 = \min_{m \in M} \{ q(m) \}$ - corresponds to "rare" user types that the recommender might need to take their preferences into account.
\end{enumerate}

Finally, we aggregate the above variables into the variable $c$:
\[
c = \min_{i=1, ..., 4} c_i.
\]

For $c_4$ to be larger than $0$, each prior entry should be larger than $0$. This is quite logical - if a user type has zero probability of arriving, it should not affect either the policy or the reward, and we can solve the smaller problem induced by removing its corresponding entries from both $\bm P$ and $\bm q$. 

In contrast to the straightforward justification for positive $c_1$ and $c_4$, the necessity of positive $c_2$ and $c_3$ warrants deeper examination. The condition $c_2 > 0$ rules out situations where various user types display the same preferences across any category. Furthermore, $c_3 > 0$ guarantees that every user type has a distinctly favored category. Although theoretically possible to violate, our convergence outcome relies on these distinctions to facilitate significant belief updates and to ensure that strictly dominant recommendations ultimately arise. The technical implications of these conditions will become apparent in the subsequent proofs, where we demonstrate how they enable both informative Bayesian updates and the eventual transition to pure exploitation. We clearly highlight the necessity of these assumptions and address the implications when they are not present.

It is worth noting that any instance with distinct preferences (DP) - where each entry of the matrix $\bm P$ is distinct - necessarily has $c_2, c_3 \neq 0$. Therefore, if such an instance satisfies the aforementioned logical conditions regarding $c_1$ and $c_4$, then Theorem~\ref{thm:convergence} holds for it. This observation demonstrates the broad applicability of our result to practical scenarios where user preferences exhibit natural variation.

While we have focused on DP instances as a natural example, it is important to emphasize that Theorem~\ref{thm:convergence} holds for any instance where $c > 0$. The DP instances represent a significant subset of such cases, as their structural properties inherently ensure positive values for several components of $c$, but they are by no means exhaustive.

\subsection{Full proof}

\begin{proof}[\normalfont\bfseries Proof of Theorem~\ref{thm:convergence}]

Given a user type $m \in M$ and $\delta > 0$, we say a belief $\bm{b}$ is \emph{$(\delta,m)$-concentrated} if $\bm{b}(m) > 1-\delta$. Typically, the prior $\bm{q}$ is \emph{$c$-unconcentrated}, meaning $\bm{q}(m) < 1-c$ for all $m \in M$, with probability mass distributed across multiple types.

Theorem~\ref{thm:gap-between-value-function} guarantees a strict, positive gap in the optimal value function of two consecutive unconcentrated beliefs in the optimal belief walk. This allows us to derive Corollary~\ref{corr:limited-unconcetrated}, which states that the number of unconcentrated beliefs in the optimal belief walk is finite. It uses potential arguments and the fact that the value function is bounded from above - if there were infinitely many unconcentrated beliefs in the optimal belief walk, then because of the positive gap between them, we would get that the optimal belief walk is unbounded.

Corollary~\ref{corr:limited-unconcetrated} guarantees that the belief walk must eventually reach concentrated beliefs. Our main tool there, Theorem~\ref{thm:myopic-near-boundary} uses the liphshits property of the optimal value function to state that as long as the belief walk remains concentrated near the same type $m$ in the simplex, the optimal policy will continue to recommend the same category $k = \argmax_{k'} \bm P(k',m)$.

At first glance, it seems to be enough - we have shown that the belief walk reaches concentrated beliefs, and as long as it remains concentrated, the recommendation is fixed. We still need to address the following cases:
\begin{enumerate}
    \item \textbf{What if the belief walk returns to unconcentrated beliefs?} Although it might happen, Corollary~\ref{corr:limited-unconcetrated} bounds the number of times it can happen, and therefore, after finitely many returns to unconcentrated beliefs, the belief walk must remain concentrated.
    \item \textbf{What if the belief walk transitions from one concentrated subspace of beliefs that corresponds to a certain type to another subspace of concentrated beliefs?} In this case, Corollary~\ref{corr:limited-unconcetrated} is useless, and the optimal recommendation might change if the types exhibit different preferences. Fortunately, Lemma~\ref{lemma:concentrated-transition} states that this transition is impossible without encountering an unconcentrated belief in between. Now, Corollary~\ref{corr:limited-unconcetrated} is useful - it again bound the number of times this case might happen.
\end{enumerate}

To conclude, after finite $T$ many rounds, the belief walk must remain concentrated near the same type forever. Then, Theorem~\ref{thm:myopic-near-boundary} ensures that the optimal policy will remain fixed.
\end{proof}

\subsection{Proofs of the Auxiliary Claims Used for Proving Theorem~\ref{thm:convergence}}

\begin{proof}[\normalfont\bfseries Proof of Theorem~\ref{thm:gap-between-value-function}]
    Recall that from Observation~\ref{obs:recursive-formula-of-the-value-function} we get that:
    \[
        V^{\star}(\bm{b}) = p_{\pi^{\star}_1(\bm{b})}(\bm{b}) \cdot \left( 1 + V^{\star}(\tau(\bm{b}, \pi^{\star}_1(\bm{b}))) \right) \implies V^{\star}(\tau(\bm{b}, \pi^{\star}_1(\bm{b}))) = \frac{V^{\star}(\bm{b})}{p_{\pi^{\star}_1(\bm{b})}(\bm{b})} - 1.
    \]

    \textbf{Note:} As $\bm P$ is DP, only one entry in each row can be $0$, and since $\bm b$ is $\delta$-unconcentrated, at least two entries in $\bm b$ must be greater than $0$. Therefore, in the sum $p_{\pi^{\star}_1(\bm{b})}(\bm{b}) = \sum_{m \in M} \mathbf{P}(\pi^{\star}_1(\mathbf{b}), m) \cdot \mathbf{b}(m)$, there must be at least one product where both terms are positive. Consequently, $p_{\pi^{\star}_1(\bm{b})}(\bm{b}) > 0$.

    Define $\hat{\pi} = \left( \pi^{\star}_1(\bm{b}) \right)_{i=1}^{\infty}$, the policy that consistently recommends the item $\pi^{\star}_1(\bm{b})$.
    Then:
    \begin{align*}
       V^{\star}(\tau(\mathbf{b}, \pi^{\star}_1(\mathbf{b}))) - V^{\star}(\mathbf{b}) 
       &= \frac{V^{\star}(\mathbf{b})}{p_{\pi^{\star}_1(\mathbf{b})}(\mathbf{b})} - 1 - V^{\star}(\mathbf{b}) \\
       &= \frac{V^{\star}(\mathbf{b}) \left( 1 - p_{\pi^{\star}_1(\mathbf{b})}(\mathbf{b}) \right) - p_{\pi^{\star}_1(\mathbf{b})}(\mathbf{b})}{p_{\pi^{\star}_1(\mathbf{b})}(\mathbf{b})} \\
       &\geq_{(1)} \frac{V^{\hat{\pi}}(\mathbf{b}) \left( 1 - p_{\pi^{\star}_1(\mathbf{b})}(\mathbf{b}) \right) - p_{\pi^{\star}_1(\mathbf{b})}(\mathbf{b})}{p_{\pi^{\star}_1(\mathbf{b})}(\mathbf{b})} \\
       &= \frac{V^{\hat{\pi}}(\mathbf{b})}{p_{\pi^{\star}_1(\mathbf{b})}(\mathbf{b})} - 1 - V^{\hat{\pi}}(\mathbf{b}) \\
       &= V^{\hat{\pi}}(\tau(\mathbf{b}, \pi^{\star}_1(\mathbf{b}))) - V^{\hat{\pi}}(\mathbf{b}) \\
       &= p_{\pi^{\star}_1(\mathbf{b})}(\tau(\mathbf{b}, \pi^{\star}_1(\mathbf{b}))) \cdot \left( 1 + V^{\hat{\pi}}(\tau(\tau(\mathbf{b}, \pi^{\star}_1(\mathbf{b})), \pi^{\star}_1(\mathbf{b}))) \right) \\
       &\quad - p_{\pi^{\star}_1(\mathbf{b})}(\mathbf{b}) \cdot \left( 1 + V^{\hat{\pi}}(\tau(\mathbf{b}, \pi^{\star}_1(\mathbf{b}))) \right) \\
       &\geq_{(2)} \left( p_{\pi^{\star}_1(\mathbf{b})}(\tau(\mathbf{b}, \pi^{\star}_1(\mathbf{b}))) - p_{\pi^{\star}_1(\mathbf{b})}(\mathbf{b}) \right) \cdot \left( 1 + V^{\hat{\pi}}(\tau(\mathbf{b}, \pi^{\star}_1(\mathbf{b}))) \right) \\
       &\geq p_{\pi^{\star}_1(\mathbf{b})}(\tau(\mathbf{b}, \pi^{\star}_1(\mathbf{b}))) - p_{\pi^{\star}_1(\mathbf{b})}(\mathbf{b}),
    \end{align*}
    where:
    \begin{enumerate}
        \item $V^{\star}(\bm{b}) \geq V^{\hat{\pi}}(\bm{b})$ because $\pi^\star$ is optimal and $\left( 1 - p_{\pi^{\star}_1(\bm{b})}(\bm{b}) \right) \geq 0$.
        \item $V^{\hat{\pi}}(\tau(\tau(\bm{b}, \pi^{\star}_1(\bm{b})), \pi^{\star}_1(\bm{b}))) = \sum_{t=1}^{\infty} \prod_{j=1}^{t} p_{\pi^{\star}_1(\bm{b})}(\bm{b}^{\pi^{\star}, \bm{q}}_{j+2}) \geq \sum_{t=1}^{\infty} \prod_{j=1}^{t} p_{\pi^{\star}_1(\bm{b})}(\bm{b}^{\pi^{\star}, \bm{q}}_{j+1}) = V^{\hat{\pi}}(\tau(\bm{b}, \pi^{\star}_1(\bm{b})))$, where the inequality is due to $p_k(\tau(\bm{b}, k)) \geq p_k(\bm{b})$, which will be proven later.
    \end{enumerate}

    To summarize the above steps, we discovered that in order to bound the gap between two consecutive optimal values $V^\star(\bm b), V^\star(\tau(\bm b), \pi^\star_1(b))$ from below, it is enough to bound the expression $p_{\pi^{\star}_1(\bm{b})}(\tau(\bm{b}, \pi^{\star}_1(\bm{b}))) - p_{\pi^{\star}_1(\bm{b})}(\bm{b})$. Let's develop it using Bayesian updates:
    \begin{align*}
       p_{\pi^{\star}_1(\mathbf{b})}(\tau(\mathbf{b}, \pi^{\star}_1(\mathbf{b}))) - p_{\pi^{\star}_1(\mathbf{b})}(\mathbf{b}) 
       &= \sum_{m \in M} \mathbf{P}(\pi^{\star}_1(\mathbf{b}), m) \cdot \tau(\mathbf{b}, \pi^{\star}_1(\mathbf{b}))(m) - \sum_{m \in M} \mathbf{P}(\pi^{\star}_1(\mathbf{b}), m) \cdot \mathbf{b}(m) \\
       &= \sum_{m \in M} \mathbf{P}(\pi^{\star}_1(\mathbf{b}), m) \cdot \frac{\mathbf{P}(\pi^{\star}_1(\mathbf{b}), m) \cdot \mathbf{b}(m)}{\sum_{m' \in M} \mathbf{P}(\pi^{\star}_1(\mathbf{b}), m') \cdot \mathbf{b}(m')} \\
       &\quad - \sum_{m \in M} \mathbf{P}(\pi^{\star}_1(\mathbf{b}), m) \cdot \mathbf{b}(m) \\
       &= \frac{\sum_{m \in M} \mathbf{P}(\pi^{\star}_1(\mathbf{b}), m)^2 \cdot \mathbf{b}(m) - \left( \sum_{m \in M} \mathbf{P}(\pi^{\star}_1(\mathbf{b}), m) \cdot \mathbf{b}(m) \right)^2}{\sum_{m \in M} \mathbf{P}(\pi^{\star}_1(\mathbf{b}), m) \cdot \mathbf{b}(m)}.
    \end{align*}

    Define $X$ to be the following random variable:
    \[
        X = \begin{cases}
              P(\pi^{\star}_1(\bm{b}), m) & \text{with probability } b(m) \quad \forall m \in M . 
        \end{cases}
    \]
    Notice that the numerator of the expression above is the variance of $X$ and the denominator is the expectation of $X$. This proves the claim $p_k(\tau(\bm{b}, k)) \geq p_k(\bm{b})$ which we used earlier, as both of those expressions are non-negative.

    We want to bound away the expression $p_{\pi^{\star}_1(\bm{b})}(\tau(\bm{b}, \pi^{\star}_1(\bm{b}))) - p_{\pi^{\star}_1(\bm{b})}(\bm{b})$ from $0$, and for that, we will use the assumption on $\bm b$ - each entry in $b(m)$ is at most $1 - \delta$. This will help us to bound from below the variance of $X$, as this condition guarantees that $X$ is not all concentrated at a single point.
    The expectation of $X$ is bounded from above by $1 - c$.
    Furthermore, the variance of $X$ can be bounded from below by inserting the mock random variable $\hat{X}$:
    \[
        \hat{X} = \begin{cases}
            0         & \text{with probability } 1 - \delta. \\
            c & \text{with probability } \delta.    
        \end{cases}
    \]
    This will be a lower bound because the distribution of $X$ is more spread out than the distribution of $\hat{X}$ as each entry in $b(m)$ is at most $1 - \delta$, and we defined $c$ to be not bigger than $\min_{k \in K, m, m' \in M, m \neq m'} \{ \abs{\bm{P}(k, m) - \bm{P}(k, m') } \}$, and therefore the distance of the closest entry to the one with the highest mass must be at least of distance $c$.
    The variance of $\hat{X}$ is 
    \[
        \delta \cdot (c^2)- \left( \delta \cdot c \right)^2 = (1 - \delta) \cdot \delta \cdot c^2.
    \]

    In total we find that the expression $p_{\pi^{\star}_1(\bm{b})}(\tau(\bm{b}, \pi^{\star}_1(\bm{b}))) - p_{\pi^{\star}_1(\bm{b})}(\bm{b})$ is at least:
    \[
        \frac{(1 - \delta) \cdot \delta \cdot c^2}{1 - c};
    \]
therefore,
    \[
        V^{\star}(\tau(\bm{b}, \pi^{\star}_1(\bm{b}))) - V^{\star}(\bm{b}) \geq \frac{\delta \cdot (1 - \delta) \cdot c^2}{1 - c}.
    \]
\end{proof}

\paragraph{Note about the definition of $c$} The inequality $0 < c \leq c_2$ plays a crucial role in our analysis, particularly in bounding the variance of the random variable $X$ in Theorem~\ref{thm:gap-between-value-function}. The necessity of maintaining $c_2 > 0$ extends beyond mere technical convenience; it represents a fundamental requirement for the theorem's validity. This condition ensures that each action elicits distinctly differentiated responses across user types, thereby facilitating meaningful Bayesian updates and monotonic growth in potential value. To illustrate the indispensability of this condition, consider a counterexample where an action's corresponding row in matrix $\bm P$ contains identical entries. In such a case, recommendations of this category would fail to modify the prior belief, effectively nullifying the monotonicity property central to our analysis. 

\begin{proof}[\normalfont\bfseries Proof of Corollary~\ref{corr:limited-unconcetrated}]
    Consider the belief walk $(\bm{b}^{\pi^{\star}, \bm{q}}_t)_{t=1}^{\infty}$ induced by the optimal policy $\pi^{\star}$ starting from the prior $\bm{q}$. Let $S$ be the sequence of indices $t$ where $\bm{b}^{\pi^{\star}, \bm{q}}_t$ is $\delta$-unconcentrated. Assume for contradiction that $|S| > H$, and denote $r$ to be the $H+1$'th index in $S$. 
    
    By Theorem~\ref{thm:gap-between-value-function}, each unconcentrated belief in the walk contributes an increase of at least $\frac{\delta \cdot (1 - \delta) \cdot c^2}{(1 - c)}$ to the optimal value function. Therefore, having more than $H$ unconcentrated beliefs would imply that $V^{\star}(\bm{b}^{\pi^{\star}, \bm{q}}_{r+1}) > \frac{1-c}{c}$, contradicting the upper bound on the value function established in Equation~\eqref{eq:bounded-value}.
\end{proof}

\begin{lemma}
    \label{lemma:lipschitz}
    The optimal value function $V^\star$ is Lipschitz continuous with constant $L = \frac{1-c}{c}$.
\end{lemma}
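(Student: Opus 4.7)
The plan is to reduce the Lipschitz bound on $V^\star$ to a simple bound on the coefficients of the linear-in-$\bm{b}$ representation of $V^\pi$ provided by Lemma~\ref{lemma:closed-form-representations-of-the-value-function}. First, I would recall that for any fixed policy $\pi$,
\[
V^\pi(\bm{b}) \;=\; \sum_{m \in M} \bm{b}(m) \, g_m(\pi), \qquad g_m(\pi) \;\triangleq\; \sum_{t=1}^{\infty} \prod_{j=1}^{t} \bm{P}(\pi_j, m),
\]
so that $V^\pi(\cdot)$ is \emph{affine} in $\bm{b}$. From Equation~\eqref{eq:bounded-value}, every coefficient satisfies $0 \le g_m(\pi) \le \frac{p_{\max}}{1-p_{\max}} \le \frac{1-c}{c}$, uniformly in $\pi$ and $m$.

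Next, I would apply the standard envelope trick for maxima of affine functions. Fix $\bm{b}_1, \bm{b}_2 \in \Delta(M)$ and let $\pi_1$ be an optimal policy at $\bm{b}_1$. By optimality of $\pi_1$ at $\bm{b}_1$ and sub-optimality at $\bm{b}_2$,
\[
V^{\star}(\bm{b}_1) - V^{\star}(\bm{b}_2) \;\le\; V^{\pi_1}(\bm{b}_1) - V^{\pi_1}(\bm{b}_2) \;=\; \sum_{m \in M} \bigl(\bm{b}_1(m) - \bm{b}_2(m)\bigr)\, g_m(\pi_1).
\]
Using the uniform bound on the $g_m(\pi_1)$'s together with H\"older's inequality (or, equivalently, pulling out $\|g(\pi_1)\|_\infty$) gives $V^{\star}(\bm{b}_1) - V^{\star}(\bm{b}_2) \le \tfrac{1-c}{c}\,\|\bm{b}_1 - \bm{b}_2\|_1$. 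Swapping the roles of $\bm{b}_1, \bm{b}_2$ and using an optimal policy at $\bm{b}_2$ yields the matching lower bound, so $|V^{\star}(\bm{b}_1) - V^{\star}(\bm{b}_2)| \le L\,\|\bm{b}_1-\bm{b}_2\|_1$ with $L = \tfrac{1-c}{c}$.

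There is essentially no obstacle here; the only subtlety is that one can actually obtain a tighter constant $\tfrac{1-c}{2c}$ by invoking the zero-sum identity $\sum_m (\bm{b}_1(m) - \bm{b}_2(m)) = 0$ to replace $\|g(\pi_1)\|_\infty$ by the half-range $\tfrac{1}{2}(\max_m g_m(\pi_1) - \min_m g_m(\pi_1))$. Since the lemma only claims the weaker constant $\tfrac{1-c}{c}$, the direct H\"older estimate suffices, and this tighter version is not needed for the downstream convergence argument.
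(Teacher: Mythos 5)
Your proposal is correct and follows essentially the same route as the paper: both use the envelope argument (bounding $V^{\star}(\bm{b}_1)-V^{\star}(\bm{b}_2)$ by evaluating the policy optimal at $\bm{b}_1$ on both beliefs), the affine representation of $V^{\pi}$ from Lemma~\ref{lemma:closed-form-representations-of-the-value-function}, and the uniform coefficient bound $\frac{p_{\max}}{1-p_{\max}} \le \frac{1-c}{c}$ from Equation~\eqref{eq:bounded-value}. Your aside about the sharper constant $\frac{1-c}{2c}$ via the zero-sum identity is a valid refinement the paper does not pursue, but as you note it is unnecessary for the downstream argument.
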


\begin{proof}[\normalfont\bfseries Proof of Lemma~\ref{lemma:lipschitz}]
Let $\bm b, \bm b' \in \Delta(M)$ be two beliefs. Using $|a-b| = max\{a-b, b-a\}$, we have:
\begin{align*}
|V^{\star}(\mathbf{b}) - V^{\star}(\mathbf{b}')| &= \max\{V^{\star}(\mathbf{b}) - V^{\star}(\mathbf{b}'), V^{\star}(\mathbf{b}') - V^{\star}(\mathbf{b})\} \\
&\leq_{(1)} \max\{V^{\pi^{\star}(\mathbf{b})}(\mathbf{b}) - V^{\pi^{\star}(\mathbf{b})}(\mathbf{b}'), V^{\pi^{\star}(\mathbf{b}')}(\mathbf{b}') - V^{\pi^{\star}(\mathbf{b}')}(\mathbf{b})\} \\
&= \max\{\sum_{m \in M} (\mathbf{b}(m) - \mathbf{b}'(m)) \cdot \left( \sum_{t=1}^{\infty} \prod_{i=1}^{t} \mathbf{P}(\pi^\star_i(\mathbf{b}), m) \right), \\
&\quad\quad\quad \sum_{m \in M} (\mathbf{b}'(m) - \mathbf{b}(m)) \cdot \left( \sum_{t=1}^{\infty} \prod_{i=1}^{t} \mathbf{P}(\pi^\star_i(\mathbf{b}'), m) \right)\} \\
&\leq \sum_{m \in M} |\mathbf{b}(m) - \mathbf{b}'(m)| \cdot \left( \sum_{t=1}^{\infty} \prod_{i=1}^{t} p_{\max} \right) \\
&= |\mathbf{b} - \mathbf{b}'|1 \cdot \frac{p_{\max}}{1-p_{\max}} \\
&\leq_{(2)} |\mathbf{b} - \mathbf{b}'|1 \cdot \frac{1-c}{c},
\end{align*}
where:
\begin{enumerate}
    \item $\pi^{\star}(\mathbf{b})$ is optimal for $\bm b$, $\pi^{\star}(\mathbf{b'})$ is optimal for $\bm b'$.
    \item $p{\max} \leq 1 - c \implies \frac{p_{\max}}{1-p_{\max}} \leq \frac{1-c}{c}$.
\end{enumerate}
\end{proof}

\begin{proof}[\normalfont\bfseries Proof of Theorem~\ref{thm:myopic-near-boundary}]
    Let $m \in M$ be a user type, $k = \argmax_{k' \in K} \bm{P}(k', m)$ and $\bm{b} \in \Delta(M)$ such that $\bm b(m) \geq 1 - \frac{c^2}{4}$.

    Denote $\pi^{k} = \left( k \right)_{i=1}^{\infty}$, the policy that always recommends category $k$. Additionally, let some arbitrary policy $\hat{\pi}$ such that $\hat{\pi}(\bm{b})_1 = \hat{k}$ for some $\hat{k} \neq k$. We don't assume anything else about $\hat{\pi}$. We will show that $V^{\pi^k}(\bm{b}) \geq V^{\hat{\pi}}(\bm{b})$ and this will imply that the first recommendation of the optimal policy is $k$.

    Notice that:
    \begin{itemize}
        \item $V^{\pi^k}(\bm{b}) = \sum_{m' \in M} b(m') \cdot \frac{\bm{P}(k, m')}{1 - \bm{P}(k, m')}$ \hfill (due to Lemma~\ref{lemma:closed-form-representations-of-the-value-function}).
        \item $V^{\hat{\pi}}(\bm{b}) = \left( \bm b(m) \cdot \bm{P}(\hat{k}, m) + \sum_{m' \neq m} \bm b(m') \cdot \bm{P}(\hat{k}, m') \right) \cdot \left( 1 + V^{\hat{\pi}[2:]}(\tau(\bm{b}, \hat{k})) \right)$ \hfill (due to Observation~\ref{obs:recursive-formula-of-the-value-function}).
    \end{itemize}
    We will now analyze the relation between $V^{\pi^k}_{L}(\bm{b})$ and $V^{\hat{\pi}}_{U}(\bm{b})$, where $V^{\pi^k}_{L}(\bm{b})$ is a lower bound on $V^{\pi^k}(\bm{b})$ and $V^{\hat{\pi}}_{U}(\bm{b})$ is an upper bound on $V^{\hat{\pi}}(\bm{b})$. We will soon see that $V^{\pi^k}_{L}(\bm{b}) \geq V^{\hat{\pi}}_{U}(\bm{b})$ and this will conclude the proof.

    We define
    \[
        V^{\pi^k}_{L}(\bm{b}) := \left( 1 - \frac{c^2}{4} \right) \cdot \frac{\bm{P}(k, m)}{1 - \bm{P}(k, m)} \overset{\bm b(m) \geq 1 - \frac{c^2}{4}}{\leq} \bm b(m) \cdot \frac{\bm{P}(k, m)}{1 - \bm{P}(k, m)} \leq \sum_{m' \in M} b(m') \cdot \frac{\bm{P}(k, m')}{1 - \bm{P}(k, m')} = V^{\pi^k}(\bm{b}).
    \]

    For constructing the upper bound $V^{\hat{\pi}}_{U}(\bm{b})$, we first note that according to Lemma~\ref{lemma:lipschitz} it holds that:
    \[
        V^{\hat{\pi}[2:]}(\tau(\bm{b}, \hat{k})) \leq V^{\star}(\tau(\bm{b}, \hat{k})) \overset{Lemma~\ref{lemma:lipschitz}}{\leq} V^{\star}(\bm e_m) + \norm{\tau(\bm{b}, \hat{k}) - \bm e_m}_1 \cdot \frac{1-c}{c}.
    \]
    Therefore we can define the upper bound as:
    \begin{align*}
       V^{\hat{\pi}}_{U}(\mathbf{b}) &:= \left( \mathbf{b}(m) \cdot \mathbf{P}(\hat{k}, m) + \sum_{m' \neq m} \mathbf{b}(m') \cdot \mathbf{P}(\hat{k}, m') \right) \cdot \left( 1 + V^\star(\mathbf{e}_m) + \|\tau(\mathbf{b}, \hat{k}) - \mathbf{e}_m\|_1 \cdot \frac{1-c}{c} \right) \\
       &= \left( \mathbf{b}(m) \cdot \mathbf{P}(\hat{k}, m) + \sum_{m' \neq m} \mathbf{b}(m') \cdot \mathbf{P}(\hat{k}, m') \right)
       \cdot \left( \frac{1}{1 - \mathbf{P}(k, m)} + \|\tau(\mathbf{b}, \hat{k}) - \mathbf{e}_m\|_1 \cdot \frac{1-c}{c} \right).
    \end{align*}
    Where the equality is due to $V^{\star}(\bm e_m) = \frac{\bm{P}(k, m)}{1 - \bm{P}(k, m)}$ since $\bm e_m$ is a deterministic belief and therefore recommending the category that is the most liked by the user type $m$ is optimal.    

    The expression $\norm{\tau(\bm{b}, \hat{k}) - \bm e_m}_1$ can be simplified as follows:
    \begin{align*}
       \|\tau(\mathbf{b}, \hat{k}) - \mathbf{e}_m\|_1 
       &= \sum_{m' \in M} |\tau(\mathbf{b}, \hat{k})(m') - \mathbf{e}_m(m')| \\
       &= \left( 1 - \frac{\mathbf{b}(m) \cdot \mathbf{P}(\hat{k}, m)}{\mathbf{b}(m) \cdot \mathbf{P}(\hat{k}, m) + \sum_{m' \neq m} \mathbf{b}(m') \cdot \mathbf{P}(\hat{k}, m')} \right) \\
       &+ \frac{\sum_{m' \neq m} \mathbf{b}(m') \cdot \mathbf{P}(\hat{k}, m')}{\mathbf{b}(m) \cdot \mathbf{P}(\hat{k}, m) + \sum_{m' \neq m} \mathbf{b}(m') \cdot \mathbf{P}(\hat{k}, m')} \\
       &= \frac{2 \cdot \sum_{m' \neq m} \mathbf{b}(m') \cdot \mathbf{P}(\hat{k}, m')}{\mathbf{b}(m) \cdot \mathbf{P}(\hat{k}, m) + \sum_{m' \neq m} \mathbf{b}(m') \cdot \mathbf{P}(\hat{k}, m')}.
    \end{align*}
    Plugging $\norm{\tau(\bm{b}, \hat{k}) - \bm e_m}_1$ into $V^{\hat{\pi}}_{U}(\bm{b})$ yields:
    \begin{align*}
         V^{\hat{\pi}}_{U}(\bm{b}) &= \left( \bm b(m) \cdot \bm{P}(\hat{k}, m) + \sum_{m' \neq m} \bm b(m') \cdot \bm{P}(\hat{k}, m') \right) \\
         &\cdot \left( \frac{1}{1 - \bm{P}(k, m)} + \frac{2 \cdot \sum_{m' \neq m} \bm b(m') \cdot \bm{P}(\hat{k}, m')}{\bm b(m) \cdot \bm{P}(\hat{k}, m) + \sum_{m' \neq m} \bm b(m') \cdot \bm{P}(\hat{k}, m')} \cdot \frac{1-c}{c} \right) \\
         &= \left( \bm b(m) \cdot \bm{P}(\hat{k}, m) + \sum_{m' \neq m} \bm b(m') \cdot \bm{P}(\hat{k}, m') \right) \cdot \left( \frac{1}{1 - \bm{P}(k, m)} \right) \\
         &+ \left( 2 \cdot \sum_{m' \neq m} \bm b(m') \cdot \bm{P}(\hat{k}, m') \right) \cdot \frac{1-c}{c}. \\ 
    \end{align*}
    We again use the fact that $\bm b(m) \geq 1 - \frac{c^2}{4}$ to further bound the expression from above:
    \begin{align*}
        V^{\hat{\pi}}_{U}(\bm{b}) &= \left( \bm b(m) \cdot \bm{P}(\hat{k}, m) + \sum_{m' \neq m} \bm b(m') \cdot \bm{P}(\hat{k}, m') \right) \cdot \left( \frac{1}{1 - \bm{P}(k, m)} \right) \\
        &+ \left( 2 \cdot \sum_{m' \neq m} \bm b(m') \cdot \bm{P}(\hat{k}, m') \right) \cdot \frac{1-c}{c} \\
        &\leq_{(1)} \left( \bm b(m) \cdot \bm{P}(\hat{k}, m) + \frac{c^2}{4} \right) \cdot \left( \frac{1}{1 - \bm{P}(k, m)} \right) + \left( 2 \cdot \frac{c^2}{4} \right) \cdot \frac{1-c}{c} \\
        &\leq_{(2)} \left(\bm{P}(k, m) - c + \frac{c^2}{4} \right) \cdot \left( \frac{1}{1 - \bm{P}(k, m)} \right) + \left( 2 \cdot \frac{c^2}{4} \right) \cdot \frac{1-c}{c} \\
        &= \left(\bm{P}(k, m) - c + \frac{c^2}{4} \right) \cdot \left( \frac{1}{1 - \bm{P}(k, m)} \right) + \frac{c}{2} - \frac{c^2}{2} \\
        &\leq \frac{\bm{P}(k, m) - c + \frac{c^2}{4} + \frac{c (1- \bm P(k, m))}{2}}{1 - \bm{P}(k, m)} \\
        &\leq \frac{\bm{P}(k, m) - c + \frac{c^2}{4} + \frac{c}{2}}{1 - \bm{P}(k, m)} \\
        &= \frac{\bm{P}(k, m) - \frac{c}{2} + \frac{c^2}{4}}{1 - \bm{P}(k, m)} \\
        &\leq \frac{\bm{P}(k, m) - \frac{c}{4}}{1 - \bm{P}(k, m)} \\
        &\leq \frac{\bm{P}(k, m) - \bm{P}(k, m) \frac{c^2}{4}}{1 - \bm{P}(k, m)} \\
        &= \frac{\bm{P}(k, m)}{1 - \bm{P}(k, m)} \cdot \left( 1 - \frac{c^2}{4} \right), \\
    \end{align*}
    Where:
    \begin{enumerate}
        \item $\bm b(m) \geq 1 - \frac{c^2}{4} \implies 1 - \bm b(m) \leq \frac{c^2}{4}$ and $\sum_{m' \neq m} \bm b(m') \cdot \bm{P}(\hat{k}, m') \leq \sum_{m' \neq m} \bm b(m') = 1-\bm b(m) \leq \frac{c^2}{4}$
        \item Since $k = \argmax_{k' \in K} \bm P(k, m)$ we have that $\bm b(m) \cdot \bm{P}(\hat{k}, m) \leq \bm{P}(\hat{k}, m) \leq \bm{P}(k, m) -c$ from the definition of $c$.
    \end{enumerate}
    To summarize, we got that 
    \[
    V^{\pi^k}(\bm b) \geq V^{\pi^k}_{L}(\bm{b}) = \left( 1 - \frac{c^2}{4} \right) \cdot \frac{\bm{P}(k, m)}{1 - \bm{P}(k, m)} \geq V^{\hat{\pi}}_{U}(\bm{b}) \geq V^{\hat{\pi}}(\bm{b}).
    \]
    Therefore, the first category the optimal policy will recommend is $k$, as all of the policies that start with $\hat k \neq k$ are found to be sub-optimal.
\end{proof}

\paragraph{Note about $c_3$ and myopic behavior} The quantity $c_3$ assumes a pivotal role in ensuring the validity of Theorem~\ref{thm:myopic-near-boundary}, particularly in guaranteeing the emergence of a unique optimal recommendation near vertices. The condition $c_3 > 0$ is not merely technical; it ensures that preferences across user types maintain sufficient separation to enable definitive category selection. To understand its necessity, consider what occurs when $c_3 = 0$: this would allow for the existence of a user type whose preferences are identical across multiple categories. In such a scenario, even as our belief concentrates arbitrarily close to the corresponding vertex, we could not distinguish between these equally preferred categories, undermining the theorem's conclusion about convergence to a single optimal action. 

\begin{proof}[\normalfont\bfseries Proof of Lemma~\ref{lemma:concentrated-transition}]
Let $m, m' \in M$ be distinct user types and let $\bm{b}$ be a $(\frac{c^2}{4},m)$-concentrated belief. Let $\bm{b}' = \tau(\bm{b}, \pi^\star_1(\bm b))$ denote the updated belief after recommending the optimal recommendation $\pi^\star_1(\bm b)$. We will show that $\bm{b}'$ cannot be $(\frac{c^2}{4},m')$-concentrated.

By the definition of Bayesian updates, for any user type $\hat{m} \in M$:
\[
   \bm{b}'(\hat{m}) = \tau(\bm{b}, \pi^\star_1(\bm b))(\hat{m}) = \frac{\bm{b}(\hat{m}) \cdot \bm{P}(\pi^\star_1(\bm b), \hat{m})}{\sum_{m'' \in M} \bm{b}(m'') \cdot \bm{P}(\pi^\star_1(\bm b), m'')}.
\]

For the denominator, we can utilize the fact that $\bm{b}$ is $(\frac{c^2}{4},m)$-concentrated:
\[
   \sum_{m'' \in M} \bm{b}(m'') \cdot \bm{P}(\pi^\star_1(\bm b), m'') \geq \bm{b}(m) \cdot \bm{P}(\pi^\star_1(\bm b), m) \geq \left(1 - \frac{c^2}{4}\right) \cdot \bm{P}(\pi^\star_1(\bm b), m).
\]

As $\pi^\star_1(\bm b)$ is the optimal recommendation for $\bm b$, from the definition of $c$ we must have $\bm{P}(\pi^\star_1(\bm b), m') \geq c$. For $m'$, we know that $\bm{b}(m') < \frac{c^2}{4}$ (since $\bm{b}(m) \geq 1 - \frac{c^2}{4}$ and $m \neq m'$). Therefore:
\[
   \bm{b}'(m') = \frac{\bm{b}(m') \cdot \bm{P}(\pi^\star_1(\bm b), m')}{\sum_{m'' \in M} \bm{b}(m'') \cdot \bm{P}(\pi^\star_1(\bm b), m'')} \leq \frac{\frac{c^2}{4} \cdot 1}{\left(1 - \frac{c^2}{4}\right) \cdot c} = \frac{c}{4(1 - \frac{c^2}{4})} = \frac{c}{4 - c^2}.
\]

Now we can use the fact that the function $\frac{c}{4 - c^2}$ is strictly increasing in $[0,1]$ to get that for any $c \in [0,1]$, $\frac{c}{4 - c^2} \leq \frac{1}{3}$. Conversely, the strict decrease of $1 - \frac{c^2}{4}$ in $[0,1]$ implies that for any $c \in [0,1]$, $1 - \frac{c^2}{4} \geq \frac{3}{4}$. Hence, we conclude that $\bm{b}'(m') < 1 - \frac{c^2}{4}$, completing our proof.
\end{proof}

\subsection{Example of moving away from a vertex after starting close to it}

Earlier in this section, we proved Theorem~\ref{thm:myopic-near-boundary}, which argues that when close enough to any vertex of the belief simplex, the optimal single recommendation is the category that is the most liked by the user type that corresponds to the vertex. However, it does not mean that the optimal policy will always recommend the same category for a belief close to a vertex.
Consider the following matrix:
\[
    \bm{P} = \begin{pmatrix}
        0.8 & 0.5 \\
        0.7 & 0.6
    \end{pmatrix}.
\]

For a belief that is close enough to the second user type (corresponding to the second column), the optimal policy will recommend the second category, as this is the preferred category for this user type ($0.6 > 0.5$).
However, notice that the first column is pairwise bigger than the second column.
This means that no matter what category the recommender chooses, the belief will always be updated to be closer to the first user type.
Therefore, after a few steps, the belief will be close enough to the first user type, and the optimal policy will recommend the first category, as this is the preferred category for this user type ($0.8 > 0.7$).

\section{Omitted Proofs from Section \ref{sec:branch-and-bound-algorithm}}

\label{sec:branch-and-bound-algorithm-proofs}

\begin{proof}[\normalfont\bfseries Proof of Lemma~\ref{lemma:upper-lower-bound}]
    Notice that according to Lemma~\ref{lemma:closed-form-representations-of-the-value-function}, we have $V^L(\bm b) = V^{\pi_k}(\bm b)$ for $\pi_k = \left( k \right)_{t=1}^{\infty}$, where $k = \argmax_{k' \in K} \sum_{m \in M} \bm{b}(m) \cdot \frac{\bm{P}(k, m)}{1 - \bm{P}(k, m)}$ . Therefore, as $V^\star(\bm b) = V^{\pi^\star(b)}(\bm b)$ and  $\pi^\star(b)$ is optimal for $\bm b$ we have that $V^L(\bm b)$ is indeed a lower bound.

    As for the upper bound, notice that we can bound from above the value of $V^\star(\bm b)$ when using Lemma~\ref{lemma:closed-form-representations-of-the-value-function}.
    {
    \small
    \[
        V^\star(\bm{b}) = \sum_{m \in M} \bm{b}(m) \cdot \sum_{t=1}^{\infty} \prod_{j=1}^{t} \bm{P}(\pi^\star_j, m) \leq \sum_{m \in M} \bm{b}(m) \cdot \sum_{t=1}^{\infty} \prod_{j=1}^{t} \left( \max_{k} \bm{P}(k, m) \right) = \sum_{m \in M} \bm{b}(m) \cdot \max_{k} \frac{\bm{P}(k, m)}{1 - \bm{P}(k, m)} = V^{U}(\bm{b}).
    \]
    }
\end{proof}

\begin{proof}[\normalfont\bfseries Proof of Theorem~\ref{thm:bb-algorithm-bounded-error}]
    First of all, notice that after at most $H(\varepsilon) = \left\lceil \log_{p_{\max}} \frac{\varepsilon (1 - p_{\max})}{p_{\max}} \right\rceil$ steps in the belief walk, for every prefix $\Pi$ of length $H$ it holds that $\ovv_{\Pi} - \unv_{\Pi} \leq \varepsilon$, because:
    \begin{align*}
        \ovv - \unv &= \sum_{t=1}^{H} \prod_{i=1}^{t} p_{\pi_i}(\bm{b}^{\pi, \bm{q}}_i) + \prod_{i=1}^{H} p_{\pi_i}(\bm{b}^{\pi, \bm{q}}_{i}) \cdot V^{U}(\bm{b}^{\pi, \bm{q}}_{H+1}) - \sum_{t=1}^{H} \prod_{i=1}^{t} p_{\pi_i}(\bm{b}^{\pi, \bm{q}}_i) - \prod_{i=1}^{H} p_{\pi_i}(\bm{b}^{\pi, \bm{q}}_{i}) \cdot V^{L}(\bm{b}^{\pi, \bm{q}}_{H+1}) \\
        &= \prod_{i=1}^{H} p_{\pi_i}(\bm{b}^{\pi, \bm{q}}_{i}) \cdot V^{U}(\bm{b}^{\pi, \bm{q}}_{H+1}) - \prod_{i=1}^{H} p_{\pi_i}(\bm{b}^{\pi, \bm{q}}_{i}) \cdot V^{L}(\bm{b}^{\pi, \bm{q}}_{H+1}) \\
        &\leq \prod_{i=1}^{H} p_{\pi_i}(\bm{b}^{\pi, \bm{q}}_{i}) \cdot V^{U}(\bm{b}^{\pi, \bm{q}}_{H+1}) \\
        &\leq p_{\max}^H \cdot \frac{p_{\max}}{1-p_{\max}} \leq \varepsilon.
    \end{align*}

    Furthermore, notice every time the algorithm enters Line~\ref{bnbalg:branching} with a prefix $\Pi$ it holds that $\unv_{\Pi} \leq \tilde V$, as $\tilde V$ was updated earlier. Therefore, starting from depth $H+1$, no prefix will satisfy the condition and the algorithm will terminate.
    
    Assume in contradiction that Algorithm~\ref{bb-algorithm} returns a value $\tilde V$ such that $V^{\star}(\bm{q}) - \tilde V > \varepsilon$.

    Notice that the returned value of $\tilde V$ is the largest value that this variable has ever achieved during the run of the algorithm. Denote the last prefix of $\pi^\star$ that was pruned in the Bounding part of the algorithm as $\Pi'$, and the value of $\tilde V$ that bounded it as $V'$. It holds that $\ovv_{\Pi'} < V' + \varepsilon$, as $\Pi'$ was bounded. Because we have $V' \leq \tilde V$, it holds that $V^{\star}(\bm{q}) \leq \overline{V_{\Pi'}} < \tilde V + \varepsilon$, which contradicts the assumption.
        
\end{proof}


\section{Auxiliary Details about the Experiments}
\label{sec:auxiliary-details-about-the-experiments}

In this appendix, we provide technical details about the simulations in Section~\ref{sec:experiments}.

\subsection{Instance Generation}

Problem instances for the simulations were generated using a random sampling procedure designed to reflect realistic distributions of user preferences and categories. The procedure involves generating two key components: the matrix $\bm{P}$, representing the probabilities that each user type prefers each category, and the distribution $\bm{q}$, representing the prior belief over user types.

The matrix $\bm{P}$ is constructed by first independently sampling latent vectors for each user type and category from the standard normal distribution. These latent vectors are then normalized, and the cosine similarity between the vectors is computed, representing the affinity between each user type and category. This similarity is scaled to fall within the range $[0, 1]$, producing a probability matrix $\bm{P}$. To ensure numerical stability and avoid extreme values, the probabilities are clipped at $[0.01, 0.99]$.

The prior distribution $\bm{q}$ over user types is generated by independently sampling logits for each user type from a normal distribution $\mathcal{N}(0, 0.5)$. These logits are then transformed into probabilities using the softmax function, producing a categorical distribution over the user types. Similar to the probability matrix, the resulting distribution is clipped and normalized to maintain numerical stability and avoid extreme values.

Below are the Python functions used to implement this sampling procedure:

\begin{verbatim}
def get_random_P(n_actions, n_types, threshold=1e-2):
    """
    Generates the probability matrix P representing the likelihood 
    that each user type prefers each category.
    Latent vectors are sampled from a normal distribution, and 
    cosine similarities are computed and scaled to probabilities.
    """
    dim = n_actions
    latent_action_vectors = np.random.normal(0, 1, (n_actions, dim))
    latent_type_vectors = np.random.normal(0, 1, (n_types, dim))
    norm_a = np.linalg.norm(latent_action_vectors, ord=2, axis=1).reshape(-1, 1)
    norm_t = np.linalg.norm(latent_type_vectors, ord=2, axis=1).reshape(1, -1)
    norm = norm_a @ norm_t
    P = (latent_action_vectors @ latent_type_vectors.T) / norm
    P = (P + 1) / 2

    P = np.clip(P, threshold, 1-threshold)
    
    return P

def get_random_q(n_types, std=.5, threshold=1e-2):
    """
    Generates the prior distribution q over user types.
    Logits are sampled from a normal distribution and transformed 
    into probabilities using the softmax function.
    """
    log_q = np.random.normal(0, std, (n_types,))
    q = np.exp(log_q)
    q /= q.sum()
    q = np.clip(q, threshold/n_types, 1-threshold)
    q /= q.sum()
    return q
\end{verbatim}

\subsection{Statistical tests}

We verify the statistical significance of the difference between Algorithm~\ref{bb-algorithm} and the baseline algorithm reported in Figure \ref{fig:sarsop} using the Wilcoxon signed-rank test. For all data points (pairs of number of categories, number of user types), the two-sided test returns very small p-values -- less than $10^{-10}$ -- which is expected for sample sizes as large as 500.

\subsection{Hardware and Performance}

All experiments were conducted on a desktop PC equipped with 16 GB RAM and an 11th Gen Intel(R) Core i5-11600KF @3.90GHz processor. The simulations were run entirely on the CPU, and no GPU was utilized. It takes about two hours to run all experiments.

\section{MovieLens Experiments}\label{sec:movielens}

This section demonstrates an end-to-end procedure for applying our Branch and Bound algorithm to real-world recommendation data. Specifically, we show how to extract model parameters from the MovieLens 1M dataset, a widely used benchmark for collaborative filtering algorithms, and evaluate the performance of our approach against the baseline. This demonstrates that our method not only works theoretically but can be effectively applied to actual recommendation system data, providing insights into its practical efficiency and scalability.

\subsection{Dataset Description}
We evaluate our Branch and Bound approach using the MovieLens 1M dataset, containing $1$ a million ratings from $6,040$ users on $3,706$ movies. Each rating is an integer value between $1$ and $5$ stars, representing a realistic recommendation system scenario with typical characteristics such as sparsity and varying user activity levels.

\subsection{Clustering Methodology}

As part of our end-to-end pipeline for extracting model parameters, we first need to identify user types and item categories from raw rating data. This step is crucial for obtaining the inputs required by our Branch and Bound algorithm: the probabilities-to-like matrix and the prior distribution over user types.

We employ Spectral Co-clustering \citep{coclustering}, a well-established method for simultaneously clustering users and items in recommendation matrices \citep{george2005scalable}. This approach captures the dual nature of user-item interactions, identifying subgroups with distinct rating patterns that can serve as our user types and item categories.

For larger systems that might experience an imbalance between number of user types and item categories, we suggest applying additional clustering methods (e.g., k-means or hierarchical clustering) to decompose larger clusters. Since spectral co-clustering returns the same number of clusters on both domains, this post-processing step helps maintain balanced representation while preserving the computational advantages of our approach.

\subsection{Score Transformation}

The second step in our parameter extraction pipeline involves converting the clustered rating matrix into the probability matrix required by our model. For each user-type and item-category pair (identified through clustering), we calculate the mean rating and normalize it to the $[0,1]$ interval by dividing by the maximum rating ($5.0$). This transforms the raw ratings into probabilities that represent the likelihood of a user type "liking" a particular content category.

While this is a straightforward approach, alternative transformations can be employed, such as using percentile rankings or more sophisticated normalization techniques. Our choice prioritizes simplicity and interpretability while maintaining the relative preferences between clusters. The resulting probability matrix, combined with the prior distribution derived from cluster sizes, provides a complete instance of our model derived entirely from real-world data.

\subsection{Experimental Setup and Results}

Using the parameters extracted from the MovieLens dataset through our pipeline, we compare the Branch and Bound algorithm with SARSOP across different matrix sizes. This evaluation uses real-world derived parameters rather than synthetic ones, providing insight into the algorithms' performance in practical scenarios. The initial state distribution (prior) is derived from the proportional size of user clusters, reflecting the actual distribution of user types in the system.

To assess robustness to uncertainty in preference estimation, we add Gaussian noise ($\sigma = 0.01$) to the extracted probability matrices. We test both algorithms on $10$ random noise samples for each matrix size, and run each input $1000$ times to account for variations in runtime due to internal computational randomness. The resultant 95\% bootstrap confidence intervals are plotted as shaded regions around the mean curves, though they appear notably narrow due to the high consistency of runtime performance across experimental runs.

This evaluation with real-world derived parameters demonstrates that both algorithms achieve optimal solutions with markedly distinct computational characteristics. As illustrated in Figure~\ref{fig:movielens}, the Branch and Bound algorithm consistently outperforms SARSOP across all matrix dimensions, with the performance differential widening substantially as the problem size increases. This systematic superiority in computational efficiency, coupled with maintained solution quality, strongly indicates the algorithm's particular suitability for large-scale real-world recommendation systems. 

\begin{figure}
    \centering
    \includegraphics[width=0.5\linewidth]{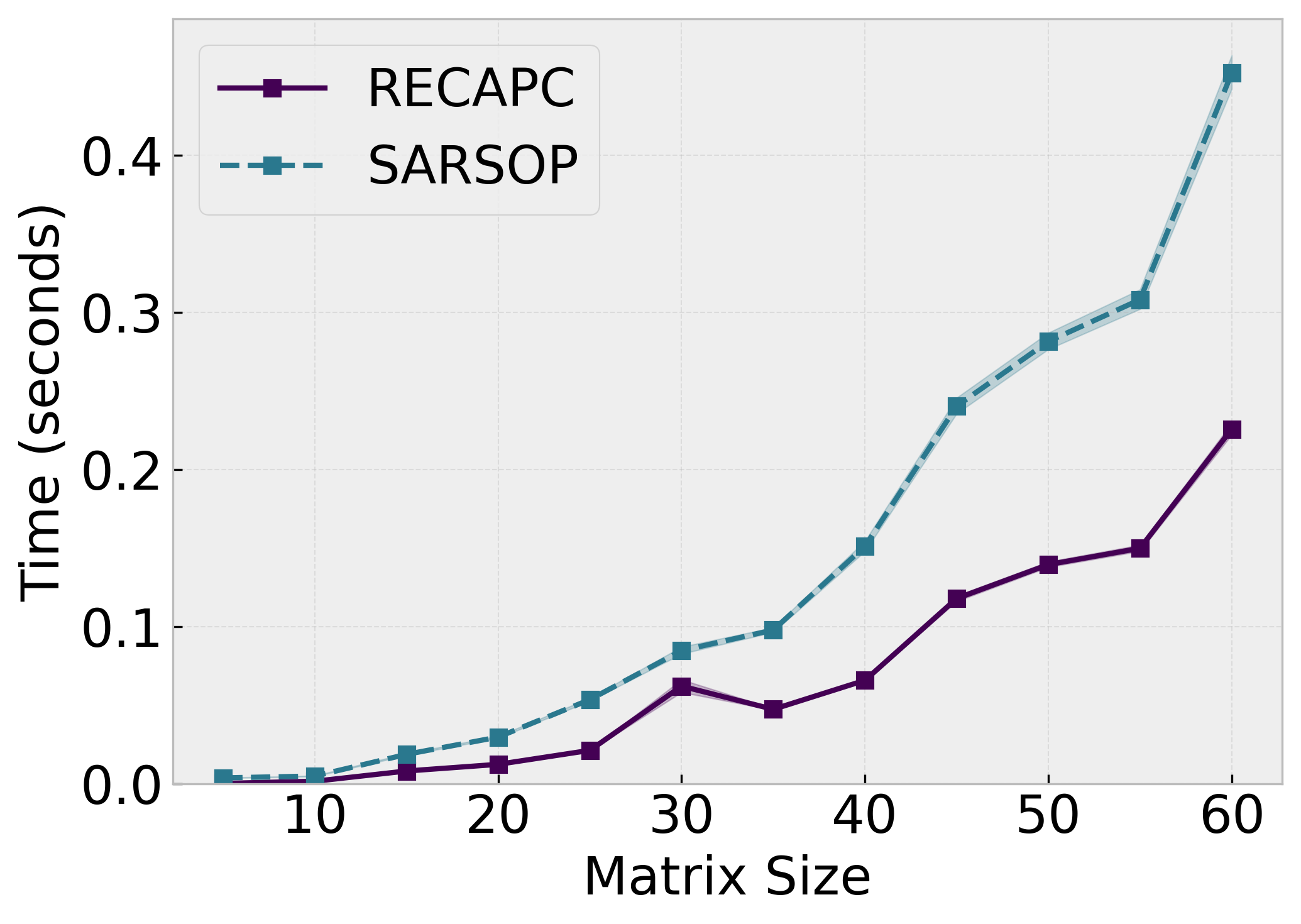}
    \caption{Comparative runtime analysis of Branch and Bound versus SARSOP algorithms on MovieLens dataset. Runtime (in seconds) is plotted against matrix dimension, with shaded regions representing 95\% bootstrap confidence intervals (indistinguishable due to high consistency across runs).}
    \label{fig:movielens}
\end{figure}

\section{Model Extensions and POMDP Framework}
\label{sec:model-extensions}

Our model provides a foundational framework for understanding recommendation systems under aggregated user information and churn risk. However, its binary interaction model -- where users either like and stay or dislike and leave -- represents a simplification of real-world user behavior. In this section, we discuss how our model can be extended within the broader POMDP framework to capture more nuanced user interactions.

\subsection{Richer User Interactions}

The current model can be extended to accommodate more complex user behaviors and reward structures. For instance, users might dislike a recommendation but remain in the system, potentially with a modified engagement level. Similarly, users might express varying degrees of satisfaction, from ``superlike'' to mild approval, each contributing differently to the system's reward. These extensions can be naturally formulated within the POMDP framework while maintaining our state space (user types) and action space (recommendation categories). The key enrichment comes through more complex transition, observation, and reward models. The transition function would capture various user responses, including remaining in the system after negative feedback and possibly changing preferences over time; the observation model would accommodate different levels of feedback (e.g., ratings on a scale), and the reward function would assign different values to these varied interaction types. This richer probabilistic structure better reflects real-world user behavior while preserving the fundamental representation of states as user types.

\subsection{Learning from Historical Data}

While our current model assumes known parameters ($\mathbf{P}$ and $\mathbf{q}$), a more practical approach would involve estimating these parameters along with the transition, observation, and reward models from historical data before deployment. The POMDP framework readily accommodates such model estimation - historical interaction data can approximate how users transition between states, how different types of feedback relate to underlying user types, and what rewards are obtained from various interactions. The quality of these empirical estimates directly impacts the effectiveness of the resulting recommendation policy obtained through POMDP solvers. While this introduces additional complexity in model specification, it allows the system to better reflect real-world user behavior patterns observed in the data.

\subsection{Computational Considerations}

The extension to more complex POMDPs introduces significant computational challenges that fundamentally alter our solution approach. While our current model admits specialized solutions leveraging the structure of belief walks, this structure breaks down in the general POMDP setting. 

First, in our simplified model, we could effectively represent policies as sequences of actions by implicitly conditioning on the user remaining in the system. This representation was computationally advantageous as it reduced the policy search space. However, with multiple possible feedback types, we need to find a proper policy -- a mapping from beliefs to actions -- that specifies optimal recommendations for every possible belief state. This more complex policy structure is inherently harder to optimize than our current sequence-based approach.

This fundamental shift in policy representation directly impacts our branch-and-bound algorithm. In our current model, the algorithm exploits the fact that only positive feedback (``like'') is informative for future recommendations, as negative feedback ends the episode. However, when policies must account for multiple feedback types, every observation could potentially influence future recommendations. As a result, evaluating each policy prefix requires considering multiple possible feedback branches, significantly expanding the solution space that must be explored.

These challenges suggest two potential paths forward. One approach would be to analyze the specific dynamics induced by the extended model and develop specialized POMDP solvers that exploit any structural properties we can identify. The feasibility of this approach depends heavily on the particular model extensions chosen and the resulting state-action-observation structure. Alternatively, we could leverage existing general-purpose POMDP solvers, which are already quite efficient but might be further optimized for our specific recommendation domain by incorporating domain-specific heuristics or approximations.

\subsection{Theoretical Implications}

The extension to general POMDPs significantly impacts the theoretical analysis presented in this paper. The clean theoretical properties that we established, particularly in Section~\ref{sec:convergence-of-the-optimal-policy}, rely fundamentally on the deterministic nature of belief walks in our current model. When this determinism breaks down, several key theoretical challenges emerge:

\begin{itemize}
    \item Multiple possible future beliefs from each action prevent the definition of a unique belief walk
    \item Convergence results based on monotonic progression through the belief simplex no longer apply
\end{itemize}

New theoretical frameworks would be needed to analyze the behavior of optimal policies in these more general settings, likely focusing on probabilistic convergence properties rather than deterministic ones.

\subsection{Future Research Directions}

The extension to general POMDPs opens several promising research directions. First, developing specialized solvers that exploit the particular structure of recommendation systems could improve computational efficiency. Second, theoretical analysis of approximate solution methods could provide performance guarantees for practical implementations.

These extensions represent a natural evolution of our model toward more practical applications, though at the cost of some theoretical elegance. The challenge for future research lies in finding the right balance between model complexity and analytical tractability while maintaining computational feasibility.
}
\else{}
\fi

\end{document}